\documentclass[letterpaper, 11pt]{article}
\usepackage[utf8]{inputenc}
\usepackage[T1]{fontenc}
\usepackage{lmodern}
\usepackage{microtype}
\usepackage{amsmath}
\usepackage{amsfonts}
\usepackage{amsthm}
\usepackage{hyperref}
\usepackage{ifthen}
\usepackage[dvipsnames]{xcolor}

\newtheorem{theorem}{Theorem}       
\newtheorem{lemma}{Lemma}

\newcommand{\dist}[2]{%
  \ifthenelse { \equal {#1} {} }%
  {\delta_B}
  {\delta_B(#1, #2)}}

\newcommand{\origAlg}{Alg_I}

\begin{document}
\null \vskip 2em%
\begin{center}
{\LARGE Trade-off between Time, Space, and Workload:\\ the case of the Self-stabilizing Unison} \vskip  1.5em%

  {\large \lineskip .5em%
    \begin{tabular}[t]{ccc}%
      Stéphane Devismes
    \end{tabular}}%
  \vskip .5em%

  \emph{\footnotesize Laboratoire MIS, Université de Picardie,\\
    33 rue Saint Leu - 80039 Amiens cedex 1, France} \vskip 1em

  {\large \lineskip .5em%
    \begin{tabular}[t]{ccc}%
      David Ilcinkas & Colette Johnen & Frédéric Mazoit
    \end{tabular}}%
  \vskip .5em%

  \emph{\footnotesize LaBRI, Universit\'e de Bordeaux, 351 cours de la
    Lib\'eration, F-33405 Talence cedex, France}

  \let\thefootnote\relax\footnotetext{Email Addresses:
    \url{stephane.devismes@u-picardie.fr} (Stéphane Devismes),
    \url{david.ilcinkas@labri.fr} (David Ilcinkas),
    \url{johnen@labri.fr} (Colette Johnen),
    \url{frederic.mazoit@labri.fr} (Frédéric Mazoit)}
\end{center}
\vskip 1.5em

\hrule
\subsection*{Abstract}

We present a self-stabilizing algorithm for the (asynchronous) unison
problem which achieves an efficient trade-off between time, workload,
and space in a weak model.
Precisely, our algorithm is defined in the atomic-state model and
works in anonymous networks in which even local ports are unlabeled.
It makes no assumption on the daemon and thus stabilizes under the
weakest one: the distributed unfair daemon.

In a $n$-node network of
diameter $D$ and assuming a period $B \geq 2D+2$, our algorithm only
requires $O(\log B)$ bits per node to achieve full polynomiality as it
stabilizes in at most $2D-2$ rounds and $O(\min(n^2B, n^3))$ moves.
In particular and to the best of our knowledge, it is the first
self-stabilizing unison for arbitrary anonymous networks achieving an
asymptotically optimal stabilization time in rounds using a bounded
memory at each node.

Finally, we show that our solution allows to efficiently simulate
synchronous self-stabilizing algorithms in an asynchronous
environment.  This provides a new state-of-the-art algorithm solving
both the leader election and the spanning tree construction problem in
any identified connected network which, to the best of our knowledge,
beat all existing solutions of the literature.

\bigskip

\hrule

\section{Introduction}

\paragraph{Context.}
\textit{Self-stabilization} is a general non-masking and lightweight
fault tolerance paradigm~\cite{Di74,AlDeDuPe19}. Precisely, a
distributed system achieving this property inherently tolerates
\textit{any} finite number of transient faults.\footnote{A {\em transient
fault} occurs at an unpredictable time, but does not result in a
permanent hardware damage. Moreover, as opposed to intermittent
faults, the frequency of transient faults is considered to be low.}
Indeed, starting from an arbitrary configuration, which may be the
result of such faults, a self-stabilizing system recovers \emph{within
finite time}, and without any external intervention, a so-called
\emph{legitimate configuration} from which it satisfies its
specification.

The difficulty of achieving fault tolerance in distributed systems
mainly relies on their asynchronous aspect. The impossibility of
achieving consensus in an asynchronous system in spite of at most one
 process crash~\cite{FLP85} is a famous example illustrating this fact.
 Thus, fault tolerance, and in particular self-stabilization, often
 requires some kind of barrier synchronization to control the
 asynchronism of the system by making processes progress roughly at
 the same speed.

 In that spirit, the \emph{asynchronous unison problem} (unison for
 short) is a basic yet fundamental problem that helps the design of
 asynchronous distributed systems, especially self-stabilizing ones.
 The unison problem consists in maintaining a local clock at each
 node; the domain of clocks being infinite or bounded. Each node
 should increment its own clock infinitely often.\footnote{In case the
 clock values are bounded, increments are modulo some value $B$,
 called the \emph{period}.} Furthermore, the safety property of the
 unison requires the difference between the clocks of any two
 neighbors to always be at most one increment. Notice that this
 problem can be trivially generalized (as done here) by conditioning
 increments at each node $p$ to the satisfaction of some local
 predicate $P(p)$ (\emph{n.b.}, we retrieve the initial problem if
 $P(p) \equiv true$).

Unison has numerous applications, especially in
self-stabilization. Among others, it can be used to simulate
synchronous systems in asynchronous environments~\cite{AD17j,DDL19j},
free an asynchronous system from its fairness assumption (using the
cross-over composition)~\cite{BeGaJo01}, facilitate the termination
detection~\cite{BlJoLePe22}, or achieve infimum computation and local
resource allocation~\cite{BP08c}.

In this paper, we consider the unison problem in the most commonly
used model of the self-stabilizing area: the \emph{atomic-state}
model~\cite{Di74,AlDeDuPe19}.  This model is a locally-shared memory
model with composite atomicity: the state of each node is stored into
registers and these registers can be directly read by neighboring
nodes; moreover, in one atomic step, a node can read its state and
that of its neighbors, perform some local computations, and update its
state accordingly.
In the atomic-state model, asynchrony is materialized by an adversary
called \emph{daemon} that restricts the set of possible executions.
We consider here the weakest (\emph{i.e.}, the most general) daemon:
the \emph{distributed unfair daemon}.

Self-stabilizing algorithms are mainly compared according to their
\emph{stabilization time}, \emph{i.e.}, the worst-case time to reach a
legitimate configuration starting from an arbitrary one. In the
atomic-state model, stabilization time can be evaluated in terms of
{\em rounds} and {\em moves}.  Rounds~\cite{CDPV02c} capture the
execution time according to the speed of the slowest nodes. Moves
count the number of local state updates.  So, the move complexity is
rather a measure of work than a measure of time.

It turns out that obtaining efficient stabilization time both in
rounds and steps is a difficult issue. Usually, techniques to design
an algorithm achieving a stabilization time polynomial in moves
usually makes its rounds complexity inherently linear in $n$, the
number of nodes; see,
e.g.,~\cite{CoDeVi09,AlCoDeDuPe17,DeJo19,DeIlJo22}. Conversely, achieving the
asymptotic optimality in rounds, \emph{i.e.}, $O(D)$ where $D$ is the
network diameter, commonly makes the stabilization time in
moves exponential; see, \emph{e.g.},~\cite{DeJo16,GHIJ19}.
In a best-effort spirit, Cournier
\emph{et al.}~\cite{CoRoVi19} have proposed to study what they call
\emph{fully-polynomial} self-stabilizing solutions, \emph{i.e.},
self-stabilizing algorithms whose round complexity is polynomial on the
network diameter and move complexity is polynomial on the network
size.\footnote{Actually, in~\cite{CoRoVi19}, authors consider atomic
steps instead of moves. However, these two time units essentially
measure the same thing: the workload. By the way, the number of moves
and the number of atomic steps are closely related: if an execution
$e$ contains $x$ steps, then the number $y$ of moves in $e$ satisfies
$x \leq y \leq n\cdot x$.}

\paragraph{Contribution.}

We propose the first fully-polynomial self-stabilizing unison in the
atomic-state model assuming a distributed unfair daemon. This
algorithm works in an anonymous network of arbitrary
topology. Moreover, it does not require any local port labeling at
nodes. In that sense, the computational model we use is close to the
\emph{stone age} model of Emek and Wattenhofer~\cite{EmWa13}.

To the best our our knowledge, this is a first fully-polynomial
self-stabilizing algorithm solving a \emph{dynamic
  problem}.\footnote{As opposed to a \emph{static problem} that
defines a task of calculating a function that depends on the system in
which it is evaluated~\cite{Ti06}.} This is also the first
self-stabilizing unison for arbitrary anonymous networks achieving an
asymptotically optimal stabilization time in rounds (\emph{i.e.},
$O(D)$) using a bounded memory at each node.

In more detail, assuming a period $B \geq 2D+2$, our solution
stabilizes in at most $2D-2$ rounds and $O(\min(n^2B, n^3))$ moves
using $O(\log B)$ bits per node.
Overall, our unison achieves an outstanding trade-off
between time, workload, and space.

We also analyze the efficiency of our algorithm to simulate any
synchronous self-stabilizing algorithm in an asynchronous environment
(under the unfair daemon).  If the input synchronous self-stabilizing
algorithm is \emph{silent}\footnote{In the atomic-state model, a
self-stabilizing algorithm is silent if all its executions terminate.}
and stabilizes in at most $T$ synchronous rounds, then its simulation is also
silent and self-stabilizing; moreover, its stabilization time is at most
$5D+3T$ rounds and $O(\min(n^2B, n^3))+nT$ moves using $O(M+\log(B))$
bits per node, where $M$ is the memory requirement of the input
algorithm.

An important consequence of this latter result is that one can easily
obtain the state-of-the-art leader election and BFS spanning tree
construction of the literature for asynchronous identified and
arbitrary connected networks simply by simulating the synchronous
algorithm of Kravchik and Kutten~\cite{KrKU13}. Precisely, by
simulating this algorithm using our unison, we obtain a stabilization
time in $O(D)$ rounds and $O(\min(n^2B, n^3))$ moves using
$O(\log(N))$ bits per node, where $N$ is any upper bound on $n$.  To
the best of our knowledge, there was no such an efficient solution
until now in the literature.

\paragraph{Related Work.}

The asynchronous unison studied here is a variant of the {\em
  synchronous unison} problem proposed by Even and
Rajsbaum~\cite{ER90}. This latter problem is dedicated to synchronous
systems and requires all clocks increment infinitely often and become
eventually fully synchronized.  In~\cite{ER90}, Even and Rajsbaum
consider this problem in a non-fault-tolerant context, yet assuming
that nodes do not necessarily start at the same time.

Gouda and Herman~\cite{GH90j} have proposed the first self-stabilizing
synchronous unison. Their algorithm works in anonymous synchronous
systems of arbitrary connected topology using infinite clocks.  A
solution working with the same settings, yet implementing bounded
clocks, is proposed in~\cite{ADG91j}.

Johnen \emph{et al.} investigated  the asynchronous self-stabilizing 
unison in oriented trees in~\cite{JADT02j}.
The first self-stabilizing asynchronous unison for general graphs was
proposed by Couvreur \emph{et al.}~\cite{CoFrGo92} in the link-register
model (a locally-shared memory model without composite
atomicity). However, no complexity analysis was given. Another
solution which stabilizes in $O(n)$ rounds has been proposed by
Boulinier \emph{et al.}~\cite{BoPeVi04} in the atomic-state model
assuming a distributed unfair daemon. Its move complexity is shown
in~\cite{DePe12} to be in $O(Dn^3+\alpha n^2)$, where $\alpha$ is a
parameter of the algorithm that should satisfies $\alpha \geq L - 2$,
where $L$ is the length of the longest hole in the network.
Boulinier proposes in his PhD thesis a parametric solution which
generalizes both the solutions of
\cite{CoFrGo92} and \cite{BoPeVi04}. In particular, the complexity
analysis of this latter algorithm reveals an upper bound in
$O(D.n)$ rounds on the stabilization time of the atomic-state model version of the Couvreur \emph{et
al.}'s algorithm.

Awerbuch \emph{et al.}~\cite{AwKuMaPaVa93} proposes a self-stabilizing
unison (called clock synchronizer in their paper) that stabilizes in
$O(D)$ rounds using an infinite state space. The move complexity of
their solution is not analyzed.
An asynchronous self-stabilizing unison algorithm is given
in~\cite{DeJo19}. It stabilizes in $O(n)$ rounds and $O(\Delta.n^2)$
moves using unbounded local memories.
Emek and Keren present in the stone age model~\cite{EmKe21} a
self-stabilizing unison that stabilizes in $O(B^3)$ rounds, where $B$
is an upper bound on $D$ known by all nodes. Their solution requires
$O(\log(B))$ bits per nodes. Moreover, since node activations are
assumed to be fair, the move complexity of their solution cannot be
bounded.

In~\cite{DIJM23}, we propose an algorithm that transforms any
terminating synchronous algorithms into an asynchronous silent
self-stabilizing fully-polynomial algorithm.  The memory requirement
of the produced algorithm is in $O(T\times M)$ bits per nodes, where
$T$ and $M$ are the time and space complexities of the input
algorithm.  This transformer thus cannot practically build solutions
for dynamic problems such as unison.  Moreover, although it works on a
strictly smaller class of algorithms, the synchronizer of the current
paper has similar round and move complexities as the transformer
of~\cite{DIJM23} while achieving a much better memory requirement.

\paragraph{Roadmap.}
The rest of the paper is organized as follows. The next section is
dedicated to the computational model and basic definitions. In
Section~\ref{sect:algo}, we present our unison algorithm, prove its
self-stabilization, and study its time complexity. In
Section~\ref{sec:synchroniseur} deals with the
simulation of synchronous self-stabilizing algorithms in an
asynchronous environment using our unison algorithm.

\section{Preliminaries}

\subsection{Networks}

We consider \emph{distributed systems} made of $n \geq 1$
interconnected nodes. Each node can directly communicate through
channels with a subset of other nodes, called its neighbors.  We
assume that the network is connected and that communication is
bidirectional.

More formally, we model the topology by a connected simple graph
$G = (V, E)$, where $V$ is the set of \emph{nodes} and $E$ is the set
of \emph{edges}.  If $\{p, q\}$ is an edge, then $q$ is a
\emph{neighbor} of $p$.  We denote by $N(p)$ the set of neighbors of
$p$.

A \emph{path} is a finite sequence $P=p_0p_1\cdots p_l$ of nodes such
that consecutive nodes in $P$ are neighbors. We say that $P$ is
\emph{from} $p_0$ \emph{to} $p_l$.  The \emph{length} of the path $P$
is the number $l$.  Since we assume that $G$ is \emph{connected}, then
for every pair of nodes $p$ and $q$, there exists a path from $p$ to
$q$.  We can thus define the \emph{distance} between two nodes $p$ and
$q$ to be the minimum length of a path from $p$ to $q$.  The
\emph{diameter} $D$ of $G$ is the maximum distance between nodes of
$G$.

\subsection{Computational Model: the Atomic-state Model}

Our algorithm runs on a variant of the \emph{atomic-state
  model}~\cite{AlDeDuPe19} in which nodes communicate using a finite
number of locally shared registers, called \emph{variables}.  The
\emph{state} of a node is defined by the values of its local
variables.  A \emph{configuration} of the system is a vector
consisting of the states of each node.

In one indivisible move, a node $p$ reads its own variables and the
set of states of its neighbors.  Our algorithm is described as a
finite set of \emph{rules} of the form $label:guard \to\ action$.
Labels are only used to identify rules in the reasoning.  A
  \emph{guard} is a Boolean predicate involving the state of the node
  and the set of states of its neighbors.  The \emph{action} part of
a rule updates the state of the node.  A rule can be executed only if
its guard evaluates to \emph{true}; in this case, the rule is said to
be \emph{enabled}.  By extension, a node is said to be enabled if at
least one of its rules is enabled.  We denote by $Enabled(\gamma)$ the
subset of nodes that are enabled in configuration $\gamma$.

In the model, executions proceed as follows. Given a configuration
$\gamma$ with $Enabled(\gamma)\neq\emptyset$, a so-called
\emph{daemon} selects a nonempty set
$\mathcal X \subseteq Enabled(\gamma)$; then every node of
$\mathcal X$ \emph{atomically} executes one of its enabled rules,
leading to a new configuration $\gamma'$.  The atomic transition from
$\gamma$ to $\gamma^\prime$ is called a \emph{step}. We also say that
each node of $\mathcal X$ executes an \emph{action} or simply a
\emph{move} during the step from $\gamma$ to $\gamma^\prime$.  The
possible steps induce a binary relation over $\mathcal C$, denoted by
$\mapsto$.  An \emph{execution\/} is a maximal sequence of
configurations $e=\gamma^0\gamma^1\cdots \gamma^i \cdots$ such that
$\gamma^{i-1}\mapsto\gamma^i$ for all $i>0$.  The term ``maximal''
means that the execution is either infinite, or ends at a
\emph{terminal} configuration $\gamma^f$ with
$Enabled(\gamma^f)=\emptyset$.  An algorithm which does not admit any infinite
execution is called \emph{silent}.

As explained before, each step from a configuration to another is
driven by a daemon.  We define a daemon as a predicate over
executions.  We say that an execution $e$ is \emph{an execution under
  the daemon $S$} if $S(e)$ holds.  In this paper we assume
  that the daemon is \emph{distributed} and \emph{unfair}, meaning
  that it has no constraints, except that at each step it must select a nonempty
  set of enabled nodes.  It might, for example, never select a specific
  enabled node unless it is the only enabled node.

We use two units of measurement to evaluate the time complexity:
\emph{ moves} and \emph{rounds}.
The definition of a round uses the concept of \emph{neutralization}: a
node $p$ is \emph{neutralized} during a step
$\gamma^i \mapsto \gamma^{i+1}$, if $p$ is enabled in $\gamma^i$ but
not in configuration $\gamma^{i+1}$, and does not execute any action
in the step $\gamma^i \mapsto \gamma^{i+1}$.
Then, the rounds are inductively defined as follows.  The first round
of an execution $e = \gamma^0\gamma^1\cdots$ is the minimal prefix
$e'$ such that every node that is enabled in $\gamma^0$ either
executes a rule or is neutralized during a step of $e'$. If $e'$ is
finite, then let $e''$ be the suffix of $e$ that starts from the last
configuration of $e'$; the second round of $e$ is the first round of
$e''$, and so on and so forth.

The \emph{stabilization time} of a self-stabilizing algorithm is the
maximum time (in moves or rounds) over every execution possible under
the considered daemon (starting from any initial configuration) to
reach a legitimate configuration.

\section{A unison algorithm}\label{sect:algo}
 
\subsection{The algorithm}

\paragraph{Data structures.}
  Let $B\geq 2D+2$ be an integer. Each node $p$ maintains a single
  variable $p.v$ of datatype $Pairs = \{(C,x)\ |\ x \in [-B, B[\} \cup
      \{(E,x)\ |\ x \in [-B, 0[\}$. 
          In the algorithm, $p.v$ will be accessed and modified
          implicitly as follows:
          \begin{itemize}
            \item $p.s$, called the {\em status} of $p$, will denote
              the left field of the pair $p.v$,
             \item $p.c$, called the {\em clock} of $p$, will denote
               the right field of the pair $p.v$.
          \end{itemize}
         For example, if $p.v = (s,c)$, then $p.s=s$ and $p.c=c$.
         Furthermore, any assignment $p.s := s$ (resp., $p.c := c$) should
         be understood as $p.v := (s,p.c)$ (resp., $p.v := (p.s,c)$).
         Finally, a node $p$ such that $p.s = C$ is said to
         be {\em correct}; otherwise it is an {\em erroneous} node (in
         other words, a node in error).



We define the infix function $+_B$ as follows:
\begin{eqnarray*}
  B-1 +_B1 &=& 0\\
  n+_B1 &=& n+1\qquad\qquad\qquad\text{if $n\neq B-1$}\\
  n+_B (m+1) &=& (n+_Bm)+_B1
\end{eqnarray*}

We also define a distance $\dist{}{}$:
\begin{eqnarray*}
  \dist{n}{n}&=&0\\
  \dist{n}{n+_B1}&=&1\\
  \dist{n+_B1}{n}&=&1\\
  \dist{n}{m}&=&2 \qquad\text{otherwise.}
\end{eqnarray*}

If $\gamma^0\gamma^1\cdots$ is an execution, we respectively denote by
$p.s^i$ and $p.c^i$ the value of $p.s$ and $p.c$ in $\gamma^i$.

\paragraph{Some predicates.}
Although they are a bit misleading because they suggest that a node
can access its neighbors directly, we use the following notations:\label{macro}
\begin{eqnarray*}
  \texttt{Macro1}\quad \exists q\in N(p), \text{Pred} (\texttt{st}_q) &:=&
                                                      \exists \texttt{st}\in \{\texttt{st}_q\mid q\in N(p)\}, \text{Pred} (\texttt{st})\\
  \texttt{Macro2}\quad \forall q\in N(p), \text{Pred} (\texttt{st}_q) &:=&
                                                      \forall \texttt{st}\in \{\texttt{st}_q\mid q\in N(p)\}, \text{Pred} (\texttt{st})\\
\end{eqnarray*}

\begin{eqnarray*}
  root(p)
  &:=&\bigl(p.s=E
       \wedge \neg(\exists q\in N(p),\;
       q.s=E\wedge q.c<p.c)\bigr)\\
  &\vee&\bigl(p.s=C\wedge \exists q\in N(p),\; p.c<q.c\wedge
         \dist{q.c}{p.c}\geq 2\bigr)
  \\\noalign{\medskip}
  activeRoot(p)
  &:=&root(p)\wedge (p.c \neq -B\vee p.s=C)
  \\\noalign{\medskip}
  errorPropag(p, i)&:=& \exists q\in N(p),\; q.s=E \wedge q.c<i<p.c
  \\\noalign{\medskip}
  canClearE(p)&:=& p.s=E\\
  &\wedge&\forall q \in N(p),\; \bigl(q.c\in\{p.c-1, p.c, p.c+1\}\wedge\\
  &&\phantom{\forall q \in N(p),\; \bigl(}
     (q.c\neq p.c+1\vee q.s=C)\bigr)
  \\\noalign{\medskip}
  unisonMove(p)&:=& p.s=C \wedge \forall q \in N(p),\; q.c\in\{p.c, p.c+_B1\}
\end{eqnarray*}

\paragraph{The rules.}
We rarely use a unison algorithm alone.  It is merely a tool to help
another algorithm. It thus makes sense that our algorithm depends on
some properties which are external to the unison algorithm and its
variables.  Our algorithm uses a predicate $P_\text{aux}$ which is not defined.  As
a matter of fact, its influence on the analysis of the algorithm is
very limited. We will specialize this predicate in
Section~\ref{sec:synchroniseur} when using our unison algorithm as
a synchronizer.

\begin{itemize}
\item $R_R: activeRoot(p) \longrightarrow p.c := -B \; ; \; v.s:=E$
\item
  $R_P(i): errorPropag(p, i)\longrightarrow p.c := i \; ; \; p.s := E$
\item $R_C: canClearE(p) \longrightarrow p.s := C$
\item
  $R_U: unisonMove(p)\wedge\\\phantom{R_U: }\qquad (P_\text{aux}(p)\vee \exists
  q\in N(p), q.c=p.c+_B 1) \longrightarrow p.c:=p.c+_B1$
\end{itemize}
We set the following priorities:
\begin{itemize}
\item $R_R$ has the highest priority.
\item $R_P(i)$ has a higher priority than $R_P(i+l)$ for $l>0$.
\item $R_C$ and $R_U$ have the lowest priority.
\end{itemize}

A node $p$ is a \emph{root} if $root(p)$.  In the following, an
\emph{error rule} is either the rule $R_R$ or a rule $R_P(i)$.

The
\emph{legitimate} configurations are the configurations in which the
only rule which can be executed is the rule $R_U$. Another equivalent
characterization of legitimate configurations will be given in
Section~\ref{sec:legitimate}.

The following remark is quite important.  Since, when encountering an
error, the clock of a node becomes negative, and since no nodes in
error can have a non-negative clock, it is natural to expect the ``error
recovery phase'' to correspond to the time zone $[-B, 0[$, and the
    interval $[0, B[$ to correspond to the ``legitimate
        configurations''. This would suggest a round complexity of
        $\Omega(B)$. But this intuition is false. If a configuration
        $\gamma$ is such that $p.s=C$ and $p.c=-B$ for every node $p$,
        then $\gamma$ is a legitimate configuration.


\subsection{Preliminary results}
\begin{lemma}\label{lem:noRootCreation}
  Let $\gamma^a\mapsto\gamma^b$ be a step.  If $p$ is a root in
  $\gamma^b$, then it also is in $\gamma^a$.
\end{lemma}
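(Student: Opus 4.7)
The plan is to prove the contrapositive: assume $p$ is not a root in $\gamma^a$, and show that $p$ is not a root in $\gamma^b$. The proof naturally splits into a case analysis on which rule $p$ executes (if any) during the step $\gamma^a\mapsto\gamma^b$. The rule $R_R$ is immediately ruled out since $activeRoot(p)$ implies $root(p)$, contradicting our assumption. For every remaining rule, the strategy is the same: use the guard of the rule $p$ executes (or, when $p$ idles, the fact that $p$ is not a root in $\gamma^a$) to tightly constrain the clocks/statuses of each neighbor $q$ in $\gamma^a$, then enumerate what $q$ may do in the step, checking in each situation that the corresponding clause of $root(p)$ cannot hold in $\gamma^b$.

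For $R_P(i)$: then $p.s^b=E$ and $p.c^b=i$, so only the type-1 clause of $root$ is relevant. The guard $errorPropag(p,i)$ supplies a neighbor $q$ with $q.s^a=E$ and $q.c^a<i<p.c^a$. I track this particular $q$: it cannot execute $R_U$ (which needs $q.s=C$) nor $R_C$ (whose guard would force $p.c^a\in\{q.c^a{-}1,q.c^a,q.c^a{+}1\}$ with $p.s^a=C$ when $p.c^a=q.c^a+1$, both contradicting $q.c^a<i<p.c^a$ and $p.s^a=C$ which is false here when $p$ fired $R_P$); and if it executes $R_R$ or $R_P(j)$, its new clock is $-B$ or $j$, each of which is strictly below $i$ (using that $p.c^a\ge i>-B$, and for $R_P(j)$ that $j<q.c^a<i$). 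Hence a witness $q$ to $p$ not being a type-1 root persists in $\gamma^b$. The handling of the "$p$ idles with $p.s^a=E$" subcase is nearly identical, using the existing witness $q$ guaranteed by $p$ not being a root in $\gamma^a$.

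For $R_C$ and $R_U$, only the type-2 clause is relevant (since $p.s^b=C$), as is the case when $p$ idles with $p.s^a=C$. The guard of $R_C$ forces every neighbor $q$ to satisfy $q.c^a\in\{p.c^a{-}1,p.c^a,p.c^a{+}1\}$, with $q.s^a=C$ whenever $q.c^a=p.c^a+1$; the guard of $R_U$ forces $q.c^a\in\{p.c^a, p.c^a+_B 1\}$; and when $p$ idles with $p.s^a=C$, the assumption that $p$ is not a root in $\gamma^a$ gives $q.c^a\le p.c^a$ or $\delta_B(q.c^a,p.c^a)\le 1$. In each subcase I examine what $q$ can do: $R_R$ yields $q.c^b=-B$ (harmless since $p.c^b>-B$); $R_P(j)$ strictly decreases $q$'s clock below $q.c^a$, so $q.c^b\le p.c^b$; $R_C$ leaves $q.c$ unchanged; and $R_U$ increments $q.c$ by one modulo $B$. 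The delicate subcase is $q$ firing $R_U$ when $p$ fires $R_C$ with $q.c^a=p.c^a+1$: one must verify that $unisonMove(q)$ actually fails because $p.c^a\notin\{q.c^a,q.c^a+_B 1\}=\{p.c^a{+}1,p.c^a+_B 2\}$, so this combination is impossible (for $B\ge 2D+2>2$). A similar mutual-exclusion argument handles $q$ firing $R_U$ when $p$ fires $R_U$.

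The main obstacle is not conceptual but bookkeeping: several sub-cases interact via the modular arithmetic of $+_B$ and $\delta_B$, and one must repeatedly check both the integer-linear range (when clocks are in $[-B,0[$, corresponding to statuses $E$) and the wrap-around at $B-1\to 0$ (when $R_U$ is involved). Organizing the proof as a table indexed by (rule fired by $p$)$\times$(rule fired by $q$) keeps the verification mechanical, with the two key leverage points being: (i) $R_C$'s guard at a node forces all neighbors into a narrow window, and (ii) $R_U$'s guard at two adjacent nodes is essentially incompatible unless their clocks already agree.
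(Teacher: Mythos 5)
Your proposal is correct and follows essentially the same route as the paper's proof: beyond the cosmetic difference of arguing the contrapositive organized by the rule $p$ fires (rather than the paper's contradiction split on $p.s^b$), it rests on the same two pillars, namely that a lower-clocked erroneous neighbor persists as a witness against an $E$-root (it cannot fire $R_U$ or $R_C$, and $R_R$/$R_P$ only lower its clock), and that the guards of $R_C$/$R_U$ at $p$ together with the incompatibility of a neighbor's $R_U$ guard prevent any neighbor from getting two ahead of a correct $p$. Just make sure, when writing it out, to discharge the wrap-around subcases (e.g.\ a neighbor firing $R_P$ while $p$ increments from $B-1$) using the fact that erroneous clocks lie in $[-B,0[$, which you already allude to.
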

\begin{proof}
  Suppose by contradiction that $p$ is a root in $\gamma^b$ and not a
  root in $\gamma^a$.

  We consider two cases.
  \begin{itemize}
  \item Suppose that $p.s^b=E$.  Thus there exists no $q\in N(p)$ such
    that $q.s^b=E$ and $q.c^b<p.c^b$.

    If $p.s^a=E$ and no $q\in N(p)$ is such that $q.s^a=E$ and
    $q.c^a<p.c^a$, then $p$ is a root in $\gamma^a$, a contradiction.

    We claim that in all remaining cases, $p$ executes an error rule in
    $ \gamma^a\mapsto\gamma^b$.  Indeed,
    \begin{itemize}
    \item if $p.s^a=E$ and there exists $q\in N(p)$ such that
      $q.s^a=E$ and $q.c^a<p.c^a$, then $p$ cannot execute the rule $R_U$, $q$
      cannot execute the rules $R_U$ or $R_C$, and thus $q.s^b=E$.  We have
      $q.c^b \leq q.c^a < p.c^a$. So if $p.c^b \geq p.c^a$, then $p$ is
      not a root in $\gamma^b$.  Thus $p$ must execute an error rule in
      $\gamma^a\mapsto\gamma^b$.
    \item if $p.s^a=C$, then $p$ must also execute an error rule in
      $\gamma^a\mapsto\gamma^b$.
    \end{itemize}

    Now two cases arise.
    \begin{itemize}
    \item If $p$ executes the rule $R_R$, then $p$ is a root in
      $\gamma^a$, a contradiction.
    \item If $p$ executes a rule $R_P(i)$ in $\gamma^a\mapsto\gamma^b$,
      then there exists $r\in N(p)$ such that $r.c^a=i-1$ and
      $r.s^a=E$.  But since $r.s^a=E$, $r$ cannot execute the rule
      $R_U$, and because of $p$, $r$ cannot execute the rule $R_C$.  Thus
      $r.s^b=E$ and $r.c^b<p.c^b$, which contradicts the hypothesis.
    \end{itemize}

  \item Suppose that $p.s^b = C$.  Thus, there exists $q\in N(p)$ such
    that $p.c^b<q.c^b$ and $\dist{p.c^b}{q.c^b}\geq 2$.  Note that
    this implies that $q.c^b\geq p.c^b+2$.

    Since $p$ does not execute an error rule in
    $\gamma^a \mapsto\gamma^b$, either $p.c^b=p.c^a$ or
    $p.c^b=p.c^a+_B1$.
    \begin{itemize}
    \item Suppose that $p.c^b=p.c^a$.  Let us study what happens
      during $\gamma^a\mapsto\gamma^b$.
      \begin{itemize}
      \item If $q$ executes the rule $R_R$, then $q.c^b=-B$, which
        contradicts the fact that $p.c^b< q.c^b$.
      \item If $q$ executes the rule $R_U$, then it means that
        $q.c^a \leq p.c^a$.  And since $p.s^b=C$, $p$ does not execute
        an error rule and thus $q.c^b\leq p.c^b+1$, a contradiction.
      \item If $q$ executes no rules or the rule $R_C$, then
        $q.c^a=q.c^b$, and since $q.c^a\geq p.c^a+2$, $p$ cannot execute
        the rule $R_C$. Thus, we have $p.s^a=C$, which implies that $p$ is also
        a root in $\gamma^a$, a contradiction.
      \item If $q$ executes a rule $R_P$, then
        $q.c^a>q.c^b\geq p.c^b+2=p.c^a+2$. Thus, we have $q.c^a > p.c^a+2$,
        which prevents $p$ from executing the rule $R_C$.  Thus
        $p.s^a=C$, and since $p$ is not a root in $\gamma^a$,
        $\dist{p.c^a}{q.c^a}\leq 1$, a contradiction.
      \end{itemize}

    \item Suppose that $p.c^b=p.c^a+_B1$.  Since
      $B-1\geq q.c^b\geq p.c^b+2$, $p.c^b=p.c^a+1$.  This implies that
      $p$ executes the rule $R_U$ during $\gamma^a\mapsto\gamma^b$, and
      thus $q.c^a\in\{p.c^a, p.c^a+_B1\}=\{p.c^a, p.c^a+1\}$.
      If $q$ executes the rule $R_U$ during
        $\gamma^a\mapsto\gamma^b$, then $q.c^a= p.c^a$. In this case,
        we have $q.c^b= p.c^b$, a contradiction.  Otherwise, we have
        $q.c^b \leq p.c^a + 1 = p.c^b $, again a contradiction.
    \end{itemize}
  \end{itemize}
\end{proof}

\begin{lemma}\label{lem:root_RC}
  Let $\gamma^a\mapsto\gamma^b$ be a step, and let $r$ be a root in
  $\gamma^a$ which executes the rule $R_C$ during
  $\gamma^a\mapsto\gamma^b$.  Then $r.c^a=-B$ and $r$ is not a root in
  $\gamma^b$.
\end{lemma}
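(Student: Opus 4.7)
The plan is to first pin down $r$'s state in $\gamma^a$, and then rule out every neighbor of $r$ as a possible witness of $r$ being a root in $\gamma^b$.

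Since $r$ executes $R_C$, its guard $canClearE(r)$ holds in $\gamma^a$, so $r.s^a = E$ and every $q \in N(r)$ satisfies $q.c^a \in \{r.c^a - 1, r.c^a, r.c^a + 1\}$ together with the implication $q.c^a = r.c^a + 1 \Rightarrow q.s^a = C$. I would next exploit the rule priorities: because $R_R$ has strictly higher priority than $R_C$ and $r$ executes $R_C$, the guard of $R_R$ must fail at $r$, i.e.\ $\neg activeRoot(r)$. Combined with the assumption that $r$ is a root, this forces $\neg(r.c^a \neq -B \vee r.s^a = C)$, so in particular $r.c^a = -B$, establishing the first conclusion.

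For the second conclusion, $R_C$ only updates the status field, so $r.s^b = C$ and $r.c^b = -B$. Since $r.s^b = C$, the $E$-branch of $root$ cannot apply to $r$ in $\gamma^b$, and it suffices to show that no neighbor $q$ satisfies $-B < q.c^b$ with $\dist{q.c^b}{-B} \geq 2$. From the previous step, every $q \in N(r)$ has $q.c^a \in \{-B, -B+1\}$ (the value $-B-1$ lies outside the clock domain), and $q.c^a = -B+1$ entails $q.s^a = C$. I would then enumerate the rules $q$ might execute during the step and check that $q.c^b \in \{-B, -B+1\}$ in each case: $R_P(i)$ is vacuously inapplicable since it would demand an $E$-neighbor of $q$ with clock strictly below $-B$; $R_R$ yields $q.c^b = -B$; $R_C$ does not modify $q.c$; staying idle leaves $q.c$ unchanged; and from $q.c^a = -B$ with $q.s^a = C$, rule $R_U$ yields $q.c^b = -B+1$, still safe.

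The main obstacle is the remaining sub-case of $R_U$: if $q.c^a = -B+1$ with $q.s^a = C$, could $q$ execute $R_U$ and reach $q.c^b = -B+2$, which would give $\dist{q.c^b}{-B} = 2$? I would rule this out by observing that $unisonMove(q)$ requires every neighbor of $q$---in particular $r$---to have clock in $\{-B+1, -B+2\}$, contradicting $r.c^a = -B$. Hence in every case $q.c^b \in \{-B, -B+1\}$, so $\dist{q.c^b}{-B} \leq 1 < 2$, proving that $r$ is not a root in $\gamma^b$.
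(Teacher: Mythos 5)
Your proof is correct and follows essentially the same route as the paper: deduce $r.c^a=-B$ from the priority of $R_R$ over $R_C$, then use the $canClearE$ bound on the neighbors' clocks together with the fact that $r.c^a=-B$ blocks any neighbor at $-B+1$ from executing $R_U$, so every neighbor stays within distance $1$ of $r.c^b=-B$. Your explicit rule-by-rule check of the neighbors (including ruling out $R_P$) simply spells out what the paper compresses into one sentence.
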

\begin{proof}
  Since $R_R$ has a higher priority than $R_C$, the guard of $R_R$ is
  false at $r$ in $\gamma^a$. So, as $r$ is a root in $\gamma^a$, we
  necessarily have $r.c^a=-B$.

  Then, since $r$ executes the rule $R_C$ during $\gamma^a\mapsto\gamma^b$, we
  have $r.s^b=C$.  Moreover, to allow $r$ to execute the rule $R_C$,
  every $q\in N(r)$ should satisfy $q.c^a\leq -B+1$. Now, as
  $r.c^a=-B$, no $q\in N(r)$ with $q.c^a=-B+1$ can execute the rule $R_U$ in
  $\gamma^a\mapsto\gamma^b$.  All this implies that $r.s^b=C$, and for every $q \in N(p)$, $\dist{q.c^b}{p.c^b}\leq 1$. So, $root(r)$ is
  false in $\gamma^b$, {\em i.e.}, $r$ is not a root in $\gamma^b$.
\end{proof}

A path $P=p_0p_1\cdots p_l$ in $G$ is \emph{decreasing} in a
configuration $\gamma$ if for each $0\leq i<l$, $p_i.c>p_{i+1}.c$. Moreover,
$P$ is an \emph{$E$-path} if it is decreasing, all its nodes are in
error, and its last node is a root.

\begin{lemma}\label{lem:E-path}
  Let $\gamma$ be a configuration.  Any node $p$ in error is the first
  node of an $E$-path.
\end{lemma}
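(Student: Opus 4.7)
The plan is to construct the $E$-path by a greedy descending procedure, starting from the given node $p$ in error and following neighbors with strictly smaller clock, until we reach a root.

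More concretely, I would define the sequence $p_0, p_1, \ldots$ inductively by setting $p_0 = p$ and, assuming $p_i$ has been defined with $p_i.s = E$, proceeding as follows. If $p_i$ is a root, I stop and set $l = i$. Otherwise, since $p_i.s = E$, the second disjunct in the definition of $root$ cannot hold at $p_i$, so the first disjunct must fail; unpacking its negation yields a neighbor $q \in N(p_i)$ with $q.s = E$ and $q.c < p_i.c$, and I set $p_{i+1}$ to be any such $q$. By construction, consecutive nodes in the sequence are neighbors, all have status $E$, and the clocks form a strictly decreasing sequence.

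For termination, the key observation is that any node with status $E$ has a clock value in the finite set $[-B,0[ \cap \mathbb{Z}$, hence the strictly decreasing sequence of clocks $p_0.c > p_1.c > \cdots$ must be finite. Therefore the procedure halts after finitely many steps, and it can only halt because $p_l$ is a root. Since the path is strictly decreasing in clock values, its nodes are automatically distinct, and by construction all nodes are in error with $p_l$ being a root, so $p_0 p_1 \cdots p_l$ is an $E$-path whose first node is $p$.

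I do not foresee a serious obstacle: the whole argument is essentially a bookkeeping exercise on the definition of $root$ together with the integrality and lower-boundedness of clocks of erroneous nodes. The only point requiring a little care is to verify that the negation of $root(p_i)$ really produces a neighbor with status $E$ \emph{and} strictly smaller clock (not just one or the other), but this is immediate from the form of the first disjunct once we note that the second disjunct is excluded by $p_i.s = E$.
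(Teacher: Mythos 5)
Your argument is correct and is essentially the paper's proof in iterative form: the paper runs an induction on $p.c$ (with base case $p.c=-B$ forcing $p$ to be a root), while you unfold the same idea as a greedy descent whose termination follows from the erroneous clocks being integers bounded below by $-B$. Both hinge on the same observation that for a node in error, failing the $root$ predicate yields a neighbor in error with strictly smaller clock.
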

\begin{proof}
  We prove our lemma by induction on $p.c$. If $p.c=-B$, then $p$ is a
  root and $P=p_0$ satisfies the required conditions.

  Suppose that $p.c>-B$.  If $p$ is a root, then $P=p_0$ satisfies the
  required conditions.  Otherwise, there exists $q\in N(v)$ such that
  $q.c<p.c$ and $q.s=E$.  By induction, there exists an $E$-path $P'$
  starting at $q$.  We can add $p$ at the beginning of $P'$ to obtain a
  path $P$ which satisfies all required conditions.
\end{proof}

\subsection{Legitimate configurations}\label{sec:legitimate}

A configuration $\gamma$ is said to be \emph{almost clean} if
\begin{itemize}
\item every root $r$ satisfies $r.c=-B$ and $r.s=E$, and
\item every two neighbors $p$ and $q$ satisfy
  $\dist{p.c}{q.c}\leq 1$.
\end{itemize}

\begin{lemma}\label{lem:almostClean}
  A configuration is almost clean if and only if no nodes can execute an
  error rule.
\end{lemma}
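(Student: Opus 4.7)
The two implications will be handled separately, each by a direct case analysis on the statuses of the nodes involved; the only real thing to control is the cyclic convention inside $\dist{}{}$.

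For the forward direction, fix an almost clean configuration. The rule $R_R$ is never enabled: if $p$ is a root, the assumption gives $p.c=-B$ and $p.s=E$, so the second conjunct $(p.c\neq -B\vee p.s=C)$ of $activeRoot(p)$ fails; if $p$ is not a root, $activeRoot(p)$ trivially fails. For $R_P(i)$, suppose some witness $q\in N(p)$ satisfies $q.s=E$ and $q.c<i<p.c$. Then $q.c<0$, which rules out the unique wraparound pair $\{0,B-1\}$ in the definition of $\dist{}{}$. Hence $\dist{p.c}{q.c}\leq 1$ collapses to $|p.c-q.c|\leq 1$, which leaves no integer strictly between $q.c$ and $p.c$, a contradiction.

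For the converse, assume no error rule is enabled. For any root $r$, non-enabledness of $R_R$ gives $\neg activeRoot(r)$, which combined with $root(r)$ forces $r.c=-B$ and $r.s\neq C$; since $r.s\in\{C,E\}$, this yields $r.s=E$. For the neighborhood condition, suppose for contradiction that neighbors $p$ and $q$ satisfy $\dist{p.c}{q.c}\geq 2$, and split on statuses. If both are correct, the node with the strictly smaller clock (say $p$) has $q$ as a witness for the second disjunct of $root(p)$, so $activeRoot(p)$ holds since $p.s=C$ and $R_R$ is enabled. If both are in error, their clocks are negative and no wraparound occurs, giving $|p.c-q.c|\geq 2$; taking $q.c<p.c$, the index $i=q.c+1$ makes $errorPropag(p,i)$ true and enables $R_P(i)$. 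In the mixed case, say $p.s=C$ and $q.s=E$ (so again $q.c<0$ excludes wraparound): if $p.c<q.c$ the root argument applies to $p$ with witness $q$, while if $q.c<p.c$ the propagation argument at $p$ with $i=q.c+1$ applies. In every case we reach a contradiction.

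The main subtlety is the interplay between the cyclic $\dist{}{}$ and the integer comparisons used in $root$ and $errorPropag$. The saving observation is that whenever an error node is involved its clock is strictly negative, which excludes the unique wraparound pair $\{0,B-1\}$; therefore $\dist{p.c}{q.c}\geq 2$ always simplifies to the plain integer inequality $|p.c-q.c|\geq 2$, which guarantees both the integer $i$ needed to activate $R_P(i)$ and the strict ordering needed to witness the second disjunct of $root$.
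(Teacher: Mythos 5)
Your proof is correct and follows essentially the same case analysis as the paper's, the only cosmetic difference being that you prove the converse directly (no error rule enabled implies almost clean) while the paper argues by contraposition. Your explicit remark that an erroneous witness has a negative clock, so $\dist{p.c}{q.c}$ never involves the wraparound pair and reduces to the plain integer difference, is exactly the subtlety the paper's terser argument relies on implicitly.
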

\begin{proof}
  Suppose that $\gamma$ is almost clean. Since every root $r$ is such
  that $r.c=-B$ and $r.s=E$, no nodes can execute the rule $R_R$, and
  since every neighbors $p$ and $q$ are such that
  $\dist{p.c}{q.c}\leq 1$, no nodes can execute a rule $R_P$.

  Conversely, suppose that $\gamma$ is not almost clean. A root $r$
  verifying $r.c>-B$ or $r.s=C$ can execute the rule $R_R$.
    Let $p$ and $q$ be two neighbors. Assume, without loss of
    generality, that $p.c\leq q.c$. If $\dist{p.c}{q.c}\geq 2$, then
    either $p.s=E$ and $q$ can execute a rule $R_P$, or
    $p.s=C$ and $p$ can execute the rule $R_R$.
\end{proof}

\begin{lemma}\label{lem:almostCleanClosed}
  Let $\gamma^a\mapsto\gamma^b$ be a step. If $\gamma^a$ is almost
  clean, then so is $\gamma^b$.
\end{lemma}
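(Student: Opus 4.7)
The plan is to verify the two defining conditions of almost cleanness directly in $\gamma^b$. By Lemma~\ref{lem:almostClean}, the hypothesis that $\gamma^a$ is almost clean is equivalent to no node being enabled for an error rule ($R_R$ or $R_P$) in $\gamma^a$. Consequently, every move executed during the step $\gamma^a\mapsto\gamma^b$ is of type $R_C$ or $R_U$; in particular, clocks change only via $R_U$ (each increasing by $+_B1$), and statuses change only from $E$ to $C$ via $R_C$.

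For the first condition, I would fix an arbitrary root $r$ of $\gamma^b$ and use Lemma~\ref{lem:noRootCreation} to conclude that $r$ was already a root in $\gamma^a$. Almost cleanness of $\gamma^a$ then gives $r.c^a=-B$ and $r.s^a=E$, and I would rule out each possible move $r$ might make in the step: $R_R$ and $R_P$ are disabled in $\gamma^a$; $R_U$ requires status $C$; and $R_C$ is excluded because Lemma~\ref{lem:root_RC} would otherwise force $r$ to no longer be a root in $\gamma^b$, contradicting the choice of $r$. Therefore, $r$ makes no move, so $r.c^b=-B$ and $r.s^b=E$.

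For the second condition, I would fix two neighbors $p$ and $q$ and split on which of them executes $R_U$ during the step, since only $R_U$ alters clocks. If neither does, the clocks are unchanged and $\dist{p.c^b}{q.c^b}\leq 1$ follows immediately from the almost cleanness of $\gamma^a$. If exactly one of them, say $p$, executes $R_U$, then the guard of $R_U$ at $p$ forces $q.c^a\in\{p.c^a, p.c^a+_B1\}$, so the resulting pair $(p.c^a+_B1, q.c^a)$ is within distance $1$ by definition of $\dist{}{}$. If both execute $R_U$, the guard applied at $p$ still gives $q.c^a\in\{p.c^a, p.c^a+_B1\}$, and incrementing both coordinates by $+_B1$ still yields a pair within distance $1$. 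This last case analysis is the only step that requires real attention: one has to track carefully the modular behaviour of $+_B$ near the wraparound from $B-1$ to $0$, and exploit the fact that $\dist{}{}$ is defined precisely so as to be preserved under such increments. Once both conditions are established in $\gamma^b$, we conclude that $\gamma^b$ is almost clean, either directly or via Lemma~\ref{lem:almostClean}.
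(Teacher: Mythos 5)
Your proof is correct, and it reaches the conclusion by a direct verification of the two defining conditions of almost cleanness in $\gamma^b$, whereas the paper argues by contradiction: it assumes $\gamma^b$ is not almost clean, uses Lemma~\ref{lem:almostClean} to get a node enabled for $R_R$ or $R_P$ in $\gamma^b$, and refutes each case. The ingredients are the same on both sides — Lemma~\ref{lem:noRootCreation} and Lemma~\ref{lem:root_RC} to handle roots, and the observation that, since no error rule is enabled in $\gamma^a$, clocks can only move by $+_B1$ via $R_U$ during the step — so the two proofs are essentially contrapositive reorganizations of one another. Where the paper deduces from an enabled $R_P$ in $\gamma^b$ that the offending node must have fired $R_U$ with a guard that could not have held, you instead enumerate which of the two neighbors fires $R_U$ and check $\dist{p.c^b}{q.c^b}\leq 1$ in each case; this exhaustive check is slightly longer but arguably more transparent, since it makes explicit that the $R_U$ guard and the definition of $\dist{}{}$ are tailored to preserve the neighbor constraint (including across the wraparound at $B-1$), and it yields the closure property without invoking the equivalence of Lemma~\ref{lem:almostClean} for the target configuration.
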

\begin{proof}
  Assume, for the purpose of contradiction, that $\gamma^a$ is almost clean and
  $\gamma^b$ is not.

  At least one of the following two cases occurs, by
    Lemma~\ref{lem:almostClean}.
  \begin{itemize}
  \item Some root $r$ can execute the rule $R_R$ in $\gamma^b$ (i.e.,
    $r.c^b>-B$ or $r.s^b=C$).

    First, by Lemma~\ref{lem:noRootCreation}, $r$ is a root in
    $\gamma^a$, and since $\gamma^a$ is almost clean, $r.c^a=-B$ and
    $r.s^a=E$. Thus, either $r$ executes no
    rules in $\gamma^a\mapsto\gamma^b$, which is a contradiction with
    $r.c^b>-B$ or $r.s^b=C$, or $r$ executes the rule $R_C$ and $r$ is
    not a root in $\gamma^b$ by Lemma~\ref{lem:root_RC}, which also leads to a contradiction.

  \item Some node $p$ can execute a rule $R_P$ in $\gamma^b$. There
    exists $q\in N(p)$ such that $B-1 \geq p.c^b\geq q.c^b+2$ and
    $q.s^b=E$.  Since $\gamma^a$ is almost clean, no error rules are
    executed in the step $\gamma^a \mapsto \gamma^b$.  Thus $q.s^a=E$
    and $q$ executes no rules in $\gamma^a\mapsto\gamma^b$, so
    $q.c^a=q.c^b$.  Moreover, $\dist{p.c^a}{q.c^a}\leq 1$.
    This implies that $p$ must execute the rule $R_U$.  But
      then $p.c^a \geq q.c^a+1$ as
      $p.c^b = p.c^a+1 \geq q.c^b+2 = q.c^a+2 $. Thus
    $q.c^a\notin\{p.c^a, p.c^a+_B1\}$, which forbids $p$ from executing
    the rule $R_U$, a contradiction.
  \end{itemize}
\end{proof}

\begin{lemma}\label{lem:hole}
  In any almost clean configuration $\gamma$, there exists $c\in [0,B[$
  such that for any $p$, $p.c\neq c$.
\end{lemma}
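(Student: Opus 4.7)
My plan is to argue by contradiction. Assume that every value $c \in [0, B[$ equals $p.c$ for some node $p$, and in particular fix nodes $a$ and $b$ with $a.c = 0$ and $b.c = \lfloor B/2 \rfloor$. I will derive a contradiction by showing that the distance between $a$ and $b$ in $G$ must exceed $D$.

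The key observation is that the second clause of the definition of \emph{almost clean} forces every edge $\{p, q\}$ of $G$ to satisfy $\dist{p.c}{q.c} \leq 1$. Define an auxiliary graph $H$ with vertex set $[-B, B[$ in which two distinct values $x, y$ are joined by an edge iff $\dist{x}{y} = 1$. From the definition of $+_B$, which wraps only at $B-1 \mapsto 0$, the edges of $H$ are exactly the pairs $\{k, k+1\}$ for $k \in [-B, B-2]$, together with the extra edge $\{B-1, 0\}$. Hence $H$ is a cycle of length $B$ on $[0, B[$ with a pendant path $0, -1, -2, \dots, -B$ attached at the vertex $0$.

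Now I bound the distance from $a$ to $b$ from below. Along any shortest path $a = p_0, p_1, \dots, p_\ell = b$ in $G$, consecutive clocks either coincide or are adjacent in $H$, so $p_0.c, p_1.c, \dots, p_\ell.c$ traces a walk in $H$ of length at most $\ell$ from $0$ to $\lfloor B/2 \rfloor$. Therefore $\ell$ is at least the distance $d_H(0, \lfloor B/2 \rfloor)$. Finally I compute this distance: inside the cycle component of $H$ the two arcs from $0$ to $\lfloor B/2 \rfloor$ have lengths $\lfloor B/2 \rfloor$ and $\lceil B/2 \rceil$, and any detour into the pendant path $0, -1, \dots, -B$ only adds length since the path terminates at the dead end $-B$. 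Thus $d_H(0, \lfloor B/2 \rfloor) = \lfloor B/2 \rfloor$. Since $B \geq 2D+2$, this gives $\ell \geq \lfloor B/2 \rfloor \geq D+1$, contradicting $\ell \leq D$.

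The only mildly delicate step is ruling out that the pendant branch of $H$ provides a shorter route between two vertices of the cycle, which is immediate once one observes the path is a dead end. Everything else is the standard obstruction that a graph admitting a ``homomorphism up to self-loops'' onto a long cycle must have large diameter; the rest of the calculation is routine.
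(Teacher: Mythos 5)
Your proof is correct, and it takes a slightly different route from the paper's. The paper also argues by contradiction, but it picks a node whose clock equals $D$ and proves by induction on the distance $l$ from that node that every node within distance $l$ has its clock in $[D-l,\,D+l]$; since the diameter is $D$ and $B\geq 2D+2$, the value $2D+1\leq B-1$ is then attained by no node, contradicting the assumption. You instead pick two nodes whose clocks are the far-apart values $0$ and $\lfloor B/2\rfloor$, formalize the almost-clean constraint as saying that clocks along any path of $G$ trace a walk in the auxiliary value graph $H$ (the $B$-cycle on $[0,B[$ with the pendant path down to $-B$ attached at $0$), and conclude that the graph distance between the two nodes is at least $d_H(0,\lfloor B/2\rfloor)=\lfloor B/2\rfloor\geq D+1>D$. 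Both arguments rest on the same engine — the clock changes by at most one unit of $\delta_B$ across each edge, played against the diameter bound and $B\geq 2D+2$ — but your homomorphism-onto-$H$ viewpoint makes the wrap-around and the negative values completely explicit (the dead-end pendant path cannot shorten routes between cycle vertices), whereas the paper's interval-growth induction quietly relies on the band $[0,2D]\subseteq[0,B-2]$ never touching the wrap edge; your version is a bit longer but arguably more transparent on that point, and it yields the stronger conclusion that two attained values cannot be more than $D$ apart in $H$, rather than just exhibiting one missing value.
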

\begin{proof}
  Suppose that for all $c \in [0,B[$, there exists $p$ such that
  $p.c=c$.  Hence, there is a node $p$ whose clock value is $D$
  ($p.c = D$) in $\gamma$.  We can prove by induction on $l$ that any
  node $q$ at distance at most $l$ from $p$ has a clock value in
  $[D-l, D+l]$.  We conclude that no node $p$ is such that
  $p.c=2D+1\leq B-1$, a contradiction.
\end{proof}

\begin{lemma}
\label{lem:colorValue}
  Let $\gamma$ be an almost clean configuration.  There exists
  $c_{\min}\in [-B, B[$ and $\Delta_c \leq D$ such that
  $\{p.c\mid p\in V\}=\{c_{\min}+_Bi\mid 0\leq i\leq \Delta_c\}$.
\end{lemma}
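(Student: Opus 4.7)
The strategy is to exploit the ``hole'' given by Lemma~\ref{lem:hole} to linearise the cyclic clock space and reveal $S := \{p.c \mid p \in V\}$ as a contiguous $+_B$-arc. First, I would apply Lemma~\ref{lem:hole} to fix a value $c^* \in [0, B[$ not taken by any node. Assuming $S \neq \emptyset$ (the empty case being vacuous), I would then define $c_{\min}$ as the first value in the sequence $c^* +_B 1, c^* +_B 2, \ldots$ that lies in $S$, and $\Delta_c$ as the largest integer for which $c_{\min} +_B i \in S$ holds for every $0 \leq i \leq \Delta_c$.

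The first main step would be to prove the equality $S = \{c_{\min} +_B i \mid 0 \leq i \leq \Delta_c\}$. The inclusion $\supseteq$ is immediate from the construction. For the reverse inclusion, given an arbitrary $c \in S$, I would select nodes $p_0, p$ with $p_0.c = c_{\min}$ and $p.c = c$ and fix a path $p_0 = q_0, q_1, \ldots, q_k = p$ in $G$. By the almost cleanness condition, $\dist{q_j.c}{q_{j+1}.c} \leq 1$ for every $j$, so consecutive clocks along the path are either equal or $+_B$-adjacent. Since $c^*$ never occurs in $S$, the clock-value walk cannot cross $c^*$, and must therefore remain inside a single $+_B$-arc starting at $c_{\min}$, forcing $c$ to lie in $\{c_{\min} +_B i \mid 0 \leq i \leq \Delta_c\}$.

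The second main step would be the bound $\Delta_c \leq D$. I would pick nodes $p_{\min}$ and $p_{\max}$ with $p_{\min}.c = c_{\min}$ and $p_{\max}.c = c_{\min} +_B \Delta_c$, and take a shortest path between them in $G$, whose length is at most $D$. Applying almost cleanness again, the clock value along this path advances by at most one $+_B$-step per edge, so the $+_B$-distance from $c_{\min}$ to $c_{\min} +_B \Delta_c$, which equals $\Delta_c$ by construction, is at most the path length, and hence $\Delta_c \leq D$.

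The main obstacle lies in the first step: concluding that the clock-value walk along any $G$-path stays inside a single $+_B$-arc requires careful handling of the structure of the clock domain $[-B, B[$ under $+_B$ (with the tail $[-B, -1]$ attached to the cycle $[0, B[$ at $0$). The crucial role of the missing value $c^* \in [0, B[$ is that it cuts the cyclic part and forces $S$ to be contained in a portion of the clock graph that is linearly ordered by $+_B$, from which both the arc structure and the $\Delta_c \leq D$ bound then follow smoothly.
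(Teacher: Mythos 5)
There is a genuine gap in your construction of $c_{\min}$. The operator $+_B$ only wraps at $B-1 \to 0$, so the sequence $c^\ast +_B 1, c^\ast +_B 2, \ldots$ started from $c^\ast \in [0,B[$ stays inside $[0,B[$ forever and never reaches the negative part $[-B,0[$ of the clock domain. Almost clean configurations with negative clocks do exist: the paper explicitly notes that the configuration where every node has status $C$ and clock $-B$ is legitimate (hence clean, hence almost clean); in that configuration your sequence never meets $S$ and $c_{\min}$ is undefined. Worse, in a mixed configuration such as $S=\{-2,-1,0\}$, your rule picks $c_{\min}=0$ and $\Delta_c=0$, yet $-2$ and $-1$ are not of the form $c_{\min}+_B i$ with $i\geq 0$, so the claimed equality $S=\{c_{\min}+_B i \mid 0\leq i\leq \Delta_c\}$ is simply false for that choice; your path argument cannot rescue it, because the clock walk along a path from a node with value $0$ down to a node with value $-2$ leaves the non-negative cycle through $0$, a move your ``arc starting right after the hole'' never accounts for. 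Your closing paragraph names the tail-plus-cycle structure as the obstacle, but the proposed resolution does not repair the definition.

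The fix is exactly the case distinction made in the paper: if some node has a negative clock, take $c_{\min}=\min\{p.c \mid p\in V\}$ in the usual integer order (the hole of Lemma~\ref{lem:hole} then guarantees that $\Delta_c$, the first gap above $c_{\min}$, exists); only when all clocks are non-negative do you use the hole $c^\ast$ to cut the cycle and take $c_{\min}$ as the first value of $S$ encountered after $c^\ast$. Once $c_{\min}$ is defined correctly, the remainder of your argument (inclusion via connectivity and the fact that, by almost cleanness, consecutive nodes on a path satisfy $\dist{p.c}{q.c}\leq 1$ so the clock changes by at most one per edge, and the bound $\Delta_c\leq D$ via a shortest path) is sound and coincides with the paper's reasoning.
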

\begin{proof}
  We consider two cases.
  \begin{itemize}
  \item Suppose that there exists $p$ such that $p.c<0$.  Let
    $c_{\min}=\min(p.c\mid p\in V)$, and let $\Delta_c$ be the minimum natural
    integer such that no node $q$ is such that
    $q.c=c_{\min}+ \Delta_c+1$ ($\Delta_c$ exists by
    Lemma~\ref{lem:hole}).
  \item Suppose that no node $p$ is such that $p.c<0$.  By
    Lemma~\ref{lem:hole}, there exists $c\in [0, B[$ which is not the
    clock value of any node.  Since clock values are non-negative,
    there exists a minimum $i$ such that $c_{\min}=c+_B i$ is a clock value
    of a node $p$.  We choose $\Delta_c$ minimum such
    that no node $q$ is such that $q.c=c_{\min}+_B (\Delta_c+1)$.
  \end{itemize}
    Clearly,
    $\{c_{\min}+_B i\mid 0\leq i\leq \Delta_c\}\subseteq \{p.c\mid p\in
    V\}$.  Now, equality and the fact that $\Delta_c \leq D$ follow from the fact that $G$ is connected
    and that, between two consecutive nodes of any path, the clock
    value can only change by one.
\end{proof}

A configuration is said to be \emph{clean} if it contains no roots.
Lemma~\ref{lem:noRootCreation} implies that being clean is a closed
property.
The following
lemma gives an alternative definition of being clean, and as a
direct consequence, it implies that clean configurations are also
almost clean. It also implies that the legitimate configurations are
the clean ones.

\begin{lemma}\label{lem:clean}
  A configuration is clean if and only if nodes can only execute the
  rule $R_U$.
\end{lemma}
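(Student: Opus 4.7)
The plan is to prove the two directions of the equivalence separately. For the forward direction (clean $\Rightarrow$ only $R_U$ can be executed): since $\gamma$ contains no root, the guard of $R_R$ is false at every node. By Lemma~\ref{lem:E-path}, if any node had status $E$, then it would be the first node of an $E$-path whose last node would be a root, contradicting cleanness. So no node is in error, which immediately falsifies the guards of $R_C$ (which requires $p.s=E$) and $R_P$ (which requires an $E$-neighbor).

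For the converse I argue contrapositively: if $\gamma$ is not clean, some rule other than $R_U$ is enabled somewhere. Fix a root $r$. If $r.s=C$ or $r.c\neq -B$, then $activeRoot(r)$ holds and $R_R$ is enabled at $r$. It only remains to handle $r.s=E$ and $r.c=-B$; the set of error nodes is then nonempty, and I pick an error node $q$ whose clock $q.c$ is maximum among all error nodes.

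I then split according to the clocks of the neighbors of $q$. If every $q'\in N(q)$ satisfies $q'.c\in\{q.c-1,q.c,q.c+1\}$, then by maximality each neighbor with clock $q.c+1$ must have status $C$, so $canClearE(q)$ holds and $R_C$ is enabled at $q$. Otherwise, some neighbor $q'$ lies outside this range. If $q'.c\geq q.c+2$, maximality again forces $q'.s=C$, and $q$ witnesses that $errorPropag(q',q.c+1)$ holds, enabling $R_P(q.c+1)$ at $q'$. If $q'.c\leq q.c-2$ and $q'.s=E$, then $q'$ witnesses that $errorPropag(q,q'.c+1)$ holds, enabling $R_P(q'.c+1)$ at $q$. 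If $q'.c\leq q.c-2$ and $q'.s=C$, then $q$ witnesses the second disjunct of $root(q')$ and, since $q'.s=C$, this root is active, so $R_R$ is enabled at $q'$.

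The main point to verify is that $\dist{q.c}{q'.c}=2$ whenever the numerical gap between $q.c$ and $q'.c$ is at least two, so that the modular wraparound of $+_B$ between $B-1$ and $0$ does not artificially shrink this distance below $2$. This holds because $q.s=E$ forces $q.c\in[-B,0[$ and in particular $q.c<0$; combined with $q'.c\in[-B,B[$, the pair $\{q.c,q'.c\}$ cannot straddle the single wrap point $B-1\mapsto 0$, so the $\dist{}{}$-value agrees with the plain integer gap being at least $2$.
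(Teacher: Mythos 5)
Your proof is correct. The forward direction coincides with the paper's: cleanness falsifies $R_R$ directly and, via Lemma~\ref{lem:E-path}, rules out any node in error, hence also $R_C$ and $R_P$. For the converse the paper takes a shorter route in form but not in substance: it first invokes Lemma~\ref{lem:almostClean} to conclude that the configuration is almost clean (so no root has status $C$ and neighboring clocks are at $\delta_B$-distance at most $1$), and is then left with exactly your ``all neighbors within $\pm 1$'' case, where the error node of maximum clock satisfies $canClearE$. Your contrapositive argument instead inlines the content of Lemma~\ref{lem:almostClean}: the active-root case and your two far-neighbor cases (far-above neighbor enabling $R_P$ at the neighbor, far-below erroneous neighbor enabling $R_P$ at $q$, far-below correct neighbor being an active root) are precisely what that lemma covers. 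So the decisive idea --- the maximum-clock error node can clear, or else an error rule is enabled somewhere --- is the same; the trade-off is that the paper's factoring is shorter while yours is self-contained, and you make explicit the wraparound check ($q.c<0$ prevents the pair $\{q.c,q'.c\}$ from straddling $B-1\mapsto 0$) that the paper leaves implicit. One harmless caveat: where you conclude that $R_C$ (resp.\ $R_P(q.c+1)$) is ``enabled'', a higher-priority error rule could pre-empt it at that node under the priority semantics; but since the node in question can never execute $R_U$ in that situation, whichever rule it executes is not $R_U$, so your contrapositive conclusion is unaffected.
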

\begin{proof}
  Suppose that $\gamma$ is clean. Since it contains no roots, then no
  nodes can execute the rule $R_R$. Since there are no roots, then, by
  Lemma~\ref{lem:E-path}, there are no nodes in error, and thus no nodes
  can execute a rule $R_P$ or the rule $R_C$.
  
  Conversely, suppose that
  nodes can only execute the rule $R_U$. Then by
  Lemma~\ref{lem:almostClean}, $\gamma$ is almost clean. Therefore
  $\gamma$ contains no roots having the status $C$.  To prove that
  $\gamma$ does not contain any root in error, it is enough to show that
  $\gamma$ contains no nodes in error (Lemma~\ref{lem:E-path}).
  Suppose that in $\gamma$ one or several nodes are in error.
    Let $p$ be a node in error having the largest clock value.
    Since $\gamma$ is almost clean, every neighbor $q$ of $p$ satisfies
    $\dist{p.c}{q.c}\leq 1$.  By definition of $p$, a neighbor of $p$
    in error has a clock value smaller than or equal to $p.c$.  Hence, $p$ can execute
    the rule $R_C$, a contradiction.
\end{proof}

\begin{lemma}\label{lem:vivant}
  Let $e=\gamma^0\gamma^1\cdots$ be an execution such that $\gamma^0$ is
  clean.  In any configuration $\gamma^i$ of $e$, if a node $p$ satisfies
  $P_\text{aux}$, then at least one node $q$ can execute the rule $R_U$ in
  $\gamma^i\mapsto\gamma^{i+1}$.
\end{lemma}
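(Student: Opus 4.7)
The plan is to first argue that every $\gamma^i$ remains clean, then exhibit an $R_U$-enabled node in each $\gamma^i$. The first part is a straightforward induction on $i$: the base case is the hypothesis on $\gamma^0$, and for the inductive step, Lemma~\ref{lem:noRootCreation} implies no root is created in any step, so cleanliness is preserved. Lemma~\ref{lem:clean} then ensures that only $R_U$ can be enabled in $\gamma^i$, so every node has status $C$, and since clean configurations are in particular almost clean, every pair of neighbors $p,q$ satisfies $\dist{p.c}{q.c}\leq 1$.

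Fix an index $i$, and suppose some node $p$ satisfies $P_\text{aux}$ in $\gamma^i$. Applying Lemma~\ref{lem:colorValue} one obtains $c_{\min}$ and $\Delta_c\leq D$ such that the set of clock values in $\gamma^i$ equals $\{c_{\min}+_B k\mid 0\leq k\leq \Delta_c\}$. If $\Delta_c=0$, all clocks are equal and $unisonMove(p)$ holds trivially; combined with $P_\text{aux}(p)$, this makes the guard of $R_U$ true at $p$, so $p$ itself is the desired witness. Otherwise, by connectivity of $G$ there is an edge $\{q,r\}$ with $q.c=c_{\min}$ and $r.c=c_{\min}+_B 1$: on any path from a node with clock $c_{\min}$ to one with clock $c_{\min}+_B 1$, consecutive clocks differ by at most one, so both values must appear on some edge of that path. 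The neighbor $r$ then witnesses the second disjunct of the guard of $R_U$ at $q$.

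The main obstacle is to verify $unisonMove(q)$. For each neighbor $q'$ of $q$, $\dist{q.c}{q'.c}\leq 1$ leaves three possibilities: $q'.c=c_{\min}$, $q'.c=c_{\min}+_B 1$, or $q'.c$ is the unique value whose $+_B 1$-successor is $c_{\min}$. I would rule out the last case by combining Lemma~\ref{lem:colorValue} with the assumption $B\geq 2D+2$. Indeed, this value would have to lie in $\{c_{\min}+_B k\mid 0\leq k\leq \Delta_c\}$; but in the negative-clock case of Lemma~\ref{lem:colorValue} it is strictly smaller than $c_{\min}$ and hence absent from the clock set, while in the non-negative case it would correspond to $k\equiv B-1\pmod B$, which is impossible since $\Delta_c\leq D\leq (B-2)/2<B-1$. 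Hence every neighbor of $q$ has clock in $\{q.c, q.c+_B 1\}$, $unisonMove(q)$ holds, and $q$ is enabled for $R_U$.
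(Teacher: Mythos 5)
Your proposal is correct and follows essentially the same route as the paper: closure of cleanliness via Lemma~\ref{lem:noRootCreation}, the structure of clock values from Lemma~\ref{lem:colorValue}, and the case split $\Delta_c=0$ versus $\Delta_c>0$ with a minimum-clock node as the $R_U$-enabled witness. The only difference is that you spell out (using $B\geq 2D+2$ and the two cases of Lemma~\ref{lem:colorValue}) why no neighbor of a $c_{\min}$-node can hold the $+_B$-predecessor of $c_{\min}$, a step the paper simply asserts.
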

\begin{proof}
  By Lemma ~\ref{lem:noRootCreation}, the configuration $\gamma^i$ is also clean
  (and almost clean as well by Lemma~\ref{lem:clean}).  According to Lemma~\ref{lem:colorValue}, in
  $\gamma^i$, there exists $c_{min} \in [-B, B[$ and $\Delta_c \leq D$
      such that $\{p.c\mid p\in V\}=\{c_{min} +_B l\mid 0\leq l \leq
      \Delta_c\}$. Moreover, in $\gamma^i$, the clock value of every
      neighbor of any node $p$ such that $p.c=c_{min} $ belongs to
      $\{c_{min},c_{min} +_B 1\}$.  If $\Delta_c=0$, then any node
      which satisfies $P_\text{aux}$ can execute the rule $R_U$ as all
      nodes have the same clock value and have status~$C$.  Otherwise,
      there exists a node $p$ with $p.c=c_{min} $ which has a neighbor
      $q$ such that $q.c=c_{min} +_B 1$, and so $p$ can execute the rule
      $R_U$ in~$\gamma^i$.
\end{proof}

\subsection{$D$-paths}

Recall that a path $P=p_0p_1\cdots p_l$ in $G$ is \emph{decreasing} in
a configuration $\gamma$ if for each $0\leq i<l$, $p_i.c>p_{i+1}.c$
and that $P$ is an \emph{$E$-path} if it is decreasing, all its nodes
are in error, and its last node is a root.

We extend these definitions in the following way.  A path $P$ is
\emph{gently decreasing} if, for each $0\leq i<l$, we have $p_i.c=p_{i+1}.c+ 1$.
It is a \emph{$D$-path} if it is decreasing and there exists
$0\leq j\leq l$ such that
\begin{itemize}
\item $P_C=p_0\cdots p_{j-1}$ is a (possibly empty) gently decreasing
  path of nodes in $C$,
\item $P_E=p_j\cdots p_l$ is an $E$-path.
\end{itemize}
We call $P_C$ and $P_E$ the \emph{correct} and \emph{error} parts of
$P$.

\begin{lemma}\label{lem:D-path_time}
  Let $\gamma^a\mapsto\gamma^b$ be a step, and let $P$ be a $D$-path
  in $\gamma^a$.  For any $p\in P$, node $p$ does not execute the rule $R_U$ at that step, and thus $p.c^b\leq p.c^a$.  Moreover, if
  $p\in P$ is such that $p.s^b=C$, then we have equality.
\end{lemma}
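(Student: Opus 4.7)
The plan is to rule out the rule $R_U$ at every $p \in P$; once that is done, the inequality $p.c^b \le p.c^a$ follows by inspecting the remaining options, and the equality under $p.s^b = C$ follows because only $R_C$ or doing nothing can leave $p$ with status $C$, and neither of these changes the clock.

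The key step is therefore to show that $R_U$ is disabled at every node of $P$ in $\gamma^a$. I split according to whether $p$ lies in $P_E$ or $P_C$. If $p \in P_E$, then $p.s^a = E$, so the guard of $R_U$ (which requires $p.s = C$) fails immediately. If $p = p_i \in P_C$, it suffices to exhibit a neighbor of $p$ whose clock in $\gamma^a$ lies outside $\{p_i.c^a,\, p_i.c^a +_B 1\}$. The natural candidate is the successor of $p_i$ on $P$. When $i < j-1$, this successor lies in $P_C$ and has clock $p_i.c^a - 1$; this differs from $p_i.c^a$, and $B \ge 2D+2 \ge 4$ also excludes the only degenerate case in which it could equal $p_i.c^a +_B 1$. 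When $i = j-1$, the successor $p_j$ lies in $P_E$, hence has status $E$ and clock in $[-B, 0)$, with $p_j.c^a < p_i.c^a$; a short split on whether $p_i.c^a = B-1$ (so that $+_B 1$ wraps to $0$) or not then shows $p_j.c^a \notin \{p_i.c^a,\, p_i.c^a +_B 1\}$.

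With $R_U$ excluded at every node of $P$, I inspect the remaining options at $p$. Doing nothing and $R_C$ both leave $p.c$ unchanged. $R_R$ sets $p.c := -B$, the minimum allowed value, so $p.c^b = -B \le p.c^a$. $R_P(i)$ fires only for an $i$ satisfying $errorPropag(p, i)$, which forces $i < p.c^a$, so $p.c^b < p.c^a$. This yields $p.c^b \le p.c^a$ in all cases. Moreover, $R_R$ and $R_P(i)$ both set $p.s := E$, so whenever $p.s^b = C$ the only possibilities are that $p$ executed $R_C$ or no rule at all, and both preserve the clock, giving $p.c^b = p.c^a$.

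The only delicate point is the subcase $i = j-1$: the wrap-around of $+_B 1$ at $B-1$ is the one place where "$p_j.c^a < p_i.c^a$" does not by itself rule out $p_j.c^a = p_i.c^a +_B 1$. The status of $p_j$, which forces $p_j.c^a < 0$, combined with the assumption $B \ge 2D+2$, is exactly what closes this subcase cleanly; everywhere else the argument is a straightforward inspection of the four rules.
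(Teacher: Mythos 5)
Your proof is correct and follows essentially the same route as the paper: show that $R_U$ is disabled at every node of the $D$-path (status $E$ on the error part, a successor's clock violating $unisonMove$ on the correct part), then conclude the inequality by inspecting the remaining rules and the equality from the fact that only $R_C$ or inaction preserves status $C$. The only (inessential) difference is how the wrap-around case is dismissed: you use the fact that erroneous clocks lie in $[-B,0[$ together with the gently decreasing structure, whereas the paper invokes the priority of $R_P(1)$ over $R_U$.
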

\begin{proof}
  Let $p\in P$.  Recall that $p.c$ only increases if $p$ executes the
  rule $R_U$.
  \begin{itemize}
  \item If $p$ is the last node of $P$, then in $\gamma^a$, $p$ is a
    root such that $p.s^a=E$.  Thus $p$ cannot execute the rule $R_U$ in
    $\gamma^a\mapsto\gamma^b$.
  \item If $p$ is not the last node of $P$, let $q$ be the next node
    after $p$ on $P$.  Since $P$ is decreasing in $\gamma^a$,
    $q.c^a<p.c^a$.  To be able to execute the rule $R_U$, we must have
    $q.c^a\in \{p.c^a, p.c^a+_B1\}$, which is only possible if
    $q.c^a=0$ and $p.c^b=B-1$.  But then the definition of a $D$-path
    requires that $q.s^a=E$, and $p$ can execute the rule $R_P(1)$
    and thus
    cannot execute the rule $R_U$ in $\gamma^a\mapsto\gamma^b$.
  \end{itemize}
  The first part of the lemma follows.  Now if $p.s^b=C$, then $p$
  does not execute an error rule in $\gamma^a\mapsto\gamma^b$, and thus
  $p.c^b\geq p.c^a$, which completes the proof.
\end{proof}

\begin{lemma}\label{lem:D-pathGlop}
  Let $\gamma^a\mapsto\gamma^b$ be a step, let $P=p_0\cdots p_l$ be a
  decreasing path in~$\gamma^a$ such that
  \begin{itemize}
  \item apart from $p_l$ which satisfies $p_l.s^a=E$ and $p_l.s^b=C$,
    all the nodes of $P$ are in $C$ in both $\gamma^a$ and $\gamma^b$;
  \item in $\gamma^a$, $p_0\cdots p_{l-1}$ is gently decreasing.
  \end{itemize}
  Then $P$ is gently decreasing in $\gamma^b$.
\end{lemma}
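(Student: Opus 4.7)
The plan is to first identify what happens to $p_l$ during the step. Since $p_l.s^a = E$ and $p_l.s^b = C$, the only rule producing this status transition is $R_C$ (the rules $R_R$ and $R_P$ produce status $E$, while $R_U$ requires starting status $C$). Because the action of $R_C$ modifies only the status, we get $p_l.c^b = p_l.c^a$. Moreover, $canClearE(p_l)$ must hold in $\gamma^a$, forcing every neighbor $q$ of $p_l$ to satisfy $q.c^a \in \{p_l.c^a-1, p_l.c^a, p_l.c^a+1\}$. Applied to $p_{l-1}$, together with the fact that $P$ is decreasing in $\gamma^a$ (so $p_{l-1}.c^a > p_l.c^a$), this yields $p_{l-1}.c^a = p_l.c^a + 1$, so the full path $p_0\cdots p_l$ is already gently decreasing in $\gamma^a$.

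The heart of the proof is then to show that no node $p_i$ with $0 \leq i \leq l-1$ moves during $\gamma^a \mapsto \gamma^b$. Because $p_i.s^a = p_i.s^b = C$, the rules $R_R$ and $R_P$ (which would produce status $E$) and $R_C$ (which requires starting status $E$) are all excluded. It remains to rule out $R_U$. Its guard demands that every neighbor $q$ of $p_i$ satisfy $q.c^a \in \{p_i.c^a, p_i.c^a +_B 1\}$, but the neighbor $p_{i+1}$ satisfies $p_{i+1}.c^a = p_i.c^a - 1$. A short case analysis on $+_B$ shows that $p_i.c^a - 1 = p_i.c^a +_B 1$ can hold only when $B = 2$, which is incompatible with $B \geq 2D+2 \geq 4$ (the assumption $l \geq 1$ forces $D \geq 1$; the trivial case $l=0$ needs no argument). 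Hence $p_i.c^b = p_i.c^a$ for every $0 \leq i \leq l-1$.

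Combining these facts gives the conclusion immediately. For $0 \leq i < l-1$, using the gently decreasing hypothesis in $\gamma^a$,
\[
p_i.c^b \;=\; p_i.c^a \;=\; p_{i+1}.c^a + 1 \;=\; p_{i+1}.c^b + 1.
\]
For $i = l-1$, using $p_l.c^b = p_l.c^a$ and the extension derived in the first step,
\[
p_{l-1}.c^b \;=\; p_{l-1}.c^a \;=\; p_l.c^a + 1 \;=\; p_l.c^b + 1.
\]
So $P$ is gently decreasing in $\gamma^b$, as desired.

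The main obstacle I anticipate is the careful exclusion of $R_U$ at the nodes $p_i$: the wrap-around semantics of $+_B$ has to be handled explicitly, and this is precisely the place where the global hypothesis $B \geq 2D+2$ is invoked to eliminate the residual modular case. Everything else is bookkeeping on the statuses combined with the enabling condition of $R_C$ at the endpoint $p_l$.
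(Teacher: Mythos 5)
Your proof is correct, and it follows the same overall skeleton as the paper's (identify that $p_l$ executes $R_C$, hence $p_l.c^b=p_l.c^a$ and, via $canClearE$, $p_{l-1}.c^a=p_l.c^a+1$; show the clocks of $p_0,\dots,p_{l-1}$ are unchanged; conclude), but you justify the central claim by a genuinely different mechanism. The paper extends $P$ into a $D$-path using Lemma~\ref{lem:E-path} (since $p_l.s^a=E$) and then invokes Lemma~\ref{lem:D-path_time} to get $p_i.c^b=p_i.c^a$ for the correct nodes; note that inside Lemma~\ref{lem:D-path_time} the wrap-around case $q.c^a=0$, $p.c^a=B-1$ is dismissed via the priority of $R_P(1)$ over $R_U$, because in a merely decreasing path it cannot be excluded arithmetically. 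You instead argue locally: the status constraints $p_i.s^a=p_i.s^b=C$ rule out $R_R$, $R_P$ and $R_C$, and the $unisonMove$ guard is falsified by the successor $p_{i+1}$ whose clock is $p_i.c^a-1$, the modular coincidence $p_i.c^a-1=p_i.c^a+_B1$ being impossible since $B\geq 2D+2\geq 4$ (here the gently decreasing structure, which the paper's general lemma does not have, is what makes the purely arithmetic exclusion work). Your route is more elementary and self-contained --- it needs neither the two auxiliary lemmas nor the rule priorities --- at the cost of redoing by hand a case analysis the paper has already packaged once and for all in Lemma~\ref{lem:D-path_time}; the paper's route is shorter given the available machinery and keeps all wrap-around subtleties centralized in that lemma. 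Your handling of the degenerate case $l=0$ and of the bound $B\geq 4$ is careful and complete.
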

\begin{proof}
  The assumptions imply that $p_l$ executes the rule $R_C$ in the step
  $\gamma^a\mapsto\gamma^b$.  Thus $p_l.c^b = p_l.c^a$.  
  
  We claim that, for any $0\leq i < l$, $p_i.c^b = p_i.c^a$.  Indeed,
  since $p_l.s^a=E$, by Lemma~\ref{lem:E-path}, $p_l$ is the first
  node of an $E$-path in $\gamma^a$ that we use to extend $P$ into
  a $D$-path $P'$. The claim then follows by
  Lemma~\ref{lem:D-path_time} applied to $P'$.
  
  The path $P$ is decreasing in $\gamma^a$, and in particular $p_{l-1}.c^a> p_l.c^a$.
  Moreover, $p_l$ executes the rule $R_C$, and thus we have $p_{l-1}.c^a = p_l.c^a+ 1$.
  As the beginning of the path is gently decreasing by hypothesis, $P$
  is gently decreasing in $\gamma^a$.  Finally, since the clock values
  of nodes of $P$ are the same in $\gamma^a$ and in $\gamma^b$, the
  lemma follows.
\end{proof}

\begin{lemma}\label{lem:D-pathTrans}
  Let $\gamma^a\mapsto\gamma^b$ be a step. Let $p$ be the first node
  of a $D$-path $P$ in~$\gamma^a$.  If at least one node of $P$ is in
  error in $\gamma^b$, then $p$ is the first node of a $D$-path in
  $\gamma^b$.
\end{lemma}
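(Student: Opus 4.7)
My plan is to identify the smallest index $k \in \{0, 1, \ldots, l\}$ along $P = p_0 p_1 \cdots p_l$ for which $p_k.s^b = E$ (such a $k$ exists by hypothesis), and to build the new $D$-path in $\gamma^b$ as $P' = p_0 p_1 \cdots p_{k-1} \cdot P'_E$, where $P'_E$ is the $E$-path starting at $p_k$ in $\gamma^b$ provided by Lemma~\ref{lem:E-path}. Let $j$ be the index separating the correct part $P_C$ and the error part $P_E$ of $P$, so that $P_C = p_0 \cdots p_{j-1}$ and $P_E = p_j \cdots p_l$.

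The pivotal step is to establish the structural constraint $k \leq j+1$. Suppose for contradiction that $k \geq j+2$; then $p_{j+1}$ is in $C$ in $\gamma^b$ while in $E$ in $\gamma^a$, and hence must have executed $R_C$ during the step. The guard of $R_C$ at $p_{j+1}$ demands that every neighbor's clock lie in $\{p_{j+1}.c^a - 1, p_{j+1}.c^a, p_{j+1}.c^a + 1\}$ and that every neighbor whose clock equals $p_{j+1}.c^a + 1$ have status $C$. But the neighbor $p_j$ satisfies $p_j.c^a > p_{j+1}.c^a$ (since $P$ is decreasing in $\gamma^a$) and $p_j.s^a = E$ (as $p_j \in P_E$), which prevents $p_{j+1}$ from firing $R_C$, a contradiction.

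With $k \leq j+1$ in hand, the verification splits into two subcases. If $k \leq j$, then $p_0, \ldots, p_{k-1}$ are all in $C$ in both $\gamma^a$ and $\gamma^b$, so by Lemma~\ref{lem:D-path_time} their clocks are unchanged, and the prefix $p_0 \cdots p_{k-1}$ inherits the gently decreasing property from $P_C$. If $k = j+1$, then $p_j$ transitioned from $E$ to $C$ via $R_C$, and I would apply Lemma~\ref{lem:D-pathGlop} to $p_0 \cdots p_j$ to obtain that this extended prefix is gently decreasing in $\gamma^b$. For the junction between the correct prefix and $P'_E$, the analysis above gives $p_{k-1}.c^b = p_{k-1}.c^a$ in both subcases, while Lemma~\ref{lem:D-path_time} together with the decreasing property of $P$ yields $p_k.c^b \leq p_k.c^a < p_{k-1}.c^a = p_{k-1}.c^b$. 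Concatenating the gently decreasing correct prefix with the $E$-path $P'_E$ thus produces the required $D$-path starting at $p_0$ in $\gamma^b$.

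The main obstacle is the structural bound $k \leq j+1$: a priori an arbitrarily long initial stretch of $P_E$ could have cleared to $C$ in the single step, which would demand a delicate inductive argument unrolling the $R_C$ applications along that stretch and tracking how each clock transition remains gently decreasing. The observation that no two consecutive error nodes on a decreasing path can both fire $R_C$ in one step collapses the analysis to the two tractable cases handled above.
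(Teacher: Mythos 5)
Your proof is correct and follows essentially the same route as the paper's: take the minimal index $k$ of a node in error in $\gamma^b$, extend it by the $E$-path from Lemma~\ref{lem:E-path}, use Lemma~\ref{lem:D-path_time} for decreasingness and clock preservation, and split on whether the surviving correct prefix stays within $P_C$ or absorbs the cleared node $p_j$ via Lemma~\ref{lem:D-pathGlop}. Your explicit guard-based argument for the bound $k\leq j+1$ is exactly the fact the paper invokes when it states that at most one node of a $D$-path can execute $R_C$ in a step, so the two proofs coincide in substance.
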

\begin{proof}
  Let $P=p_0\cdots p_l$ be a $D$-path in $\gamma^a$ and let $p=p_0$.
  Assume that $P$ contains at least one node in error in $\gamma^b$,
  and let $0\leq i\leq l$ be minimal such that $p_i.s^b=E$.

  Let $P'$ be the possibly empty path $p_0\cdots p_{i-1}$.  Since
  $p_i.s^b=E$, there exists an $E$-path $Q=p_iq_1\cdots q_h$ in
  $\gamma^b$, by Lemma~\ref{lem:E-path}.  We now claim that
  $P''=p_0\cdots p_iq_1\cdots q_h$ is a $D$-path in $\gamma^b$ whose
  first node is $p$.

  We first prove that $P''$ is decreasing.  Indeed, by
  Lemma~\ref{lem:D-path_time}, $p_i.c^b\leq p_i.c^a$ and, for
  $0\leq j < i$, $p_j.c^b = p_j.c^a$.  Since $P$ is decreasing in
  $\gamma^a$, so is $P'p_i$ in $\gamma^b$.  Now
  $p_iq_1\cdots q_h$ is an $E$-path in $\gamma^b$ and is thus also
  decreasing which implies that so is $P''$.

  To finish the proof, we must show that $P'$ is gently decreasing in
  $\gamma^b$. Let $P_c$ be the correct part of $P$ in $\gamma^a$.
  Since both $P'$ and $P_C$ are prefixes of $P$, we have 2 cases:
  \begin{itemize}
\item
  Assume that $P'$ is a prefix of $P_C$. Since $P_C$ is gently
    decreasing in $\gamma^a$, so is $P'$.
    And since all  nodes of
    $P'$ are still correct in $\gamma^b$, Lemma~\ref{lem:D-path_time} implies that $P'$ is
    gently decreasing in $\gamma^b$.
  \item 
  Otherwise, $P_C$ is a strict prefix of $P'$.  Since, in a
    $D$-path, at most one node can execute the rule $R_C$, we have
    $P'=P_C p_{i-1}$.  The fact that $P'$ is
    gently decreasing in $\gamma^b$ follows from
    Lemma~\ref{lem:D-pathGlop}.
  \end{itemize}
\end{proof}

\begin{lemma}\label{lem:languageInSegment}
  Let $\gamma^a\mapsto\gamma^b$ be a step, let $p$ be the first node
  of a $D$-path, and let $r$ be its root in $\gamma^a$.  If $r$ is still a
  root in $\gamma^b$, then $p$ is the first node of a $D$-path in
  $\gamma^b$.
\end{lemma}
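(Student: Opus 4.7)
The plan is to reduce the statement to a direct application of Lemma~\ref{lem:D-pathTrans}. That lemma tells us that if the $D$-path $P$ in $\gamma^a$ contains at least one node which is still in error in $\gamma^b$, then $p$ is again the first node of some $D$-path in $\gamma^b$. So the whole proof boils down to exhibiting one such node, and the natural candidate is the root $r$ itself, since $r$ is the last node of the $E$-part of $P$ and therefore satisfies $r.s^a=E$.

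Thus the key step is to show that $r.s^b=E$. The only rule whose action can turn a node's status from $E$ into $C$ is $R_C$; all other rules ($R_R$, $R_P(i)$) set the status to $E$, and inaction preserves it. Hence it suffices to rule out that $r$ executes $R_C$ during $\gamma^a\mapsto\gamma^b$. This is exactly what Lemma~\ref{lem:root_RC} provides: a root in $\gamma^a$ which executes $R_C$ is no longer a root in $\gamma^b$. Since by hypothesis $r$ is still a root in $\gamma^b$, $r$ cannot have executed $R_C$ in this step, and so $r.s^b=E$.

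Having established that $r$, which lies on $P$, is in error in $\gamma^b$, the hypothesis of Lemma~\ref{lem:D-pathTrans} is satisfied and we conclude that $p$ is the first node of a $D$-path in $\gamma^b$, as desired. I do not expect any real obstacle here: the argument is essentially a two-step chaining of Lemmas~\ref{lem:root_RC} and \ref{lem:D-pathTrans}. The only mild care required is to note explicitly that among $R_R$, $R_P(i)$, $R_C$, $R_U$, only $R_C$ can flip the status from $E$ to $C$, so that excluding $R_C$ is enough to force $r.s^b=E$.
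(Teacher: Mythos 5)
Your proposal is correct and follows essentially the same route as the paper's proof: use the hypothesis that $r$ remains a root together with Lemma~\ref{lem:root_RC} to exclude an $R_C$ execution at $r$, conclude $r.s^b=E$, and then invoke Lemma~\ref{lem:D-pathTrans}. The extra observation that only $R_C$ can change a status from $E$ to $C$ is a harmless (and welcome) explicit justification of a step the paper leaves implicit.
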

\begin{proof}
  If $r$ executes the rule $R_C$ during $\gamma^a\mapsto\gamma^b$, then
  Lemma~\ref{lem:root_RC} implies that $r$ is not a root in
  $\gamma^b$, which is a contradiction.  Thus $r$ is in error in
  $\gamma^b$, and the lemma follows from Lemma~\ref{lem:D-pathTrans}.
\end{proof}

\begin{lemma}\label{lem:2theGround}
  Let $\gamma^a\mapsto\gamma^b$ be a step. Let $p$ be the first
  node of a $D$-path in~$\gamma^a$.  If no $D$-paths in $\gamma^b$
  contain $p$, then $p.c^b\leq -B+n$.
\end{lemma}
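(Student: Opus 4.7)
The plan is to exploit Lemma~\ref{lem:D-pathTrans} via its contrapositive: since no $D$-path of $\gamma^b$ contains $p$, in particular $p$ is not the first node of any $D$-path in $\gamma^b$, so no node of the $D$-path $P = p_0 \cdots p_l$ (given in $\gamma^a$) can be in error in $\gamma^b$. Writing $P_E = p_j \cdots p_l$ for the error part of $P$ in $\gamma^a$, each $p_i$ with $j \leq i \leq l$ thus satisfies $p_i.s^a = E$ and $p_i.s^b = C$. Inspecting the rules, the only way to pass from status $E$ to status $C$ in one step is to execute $R_C$; hence every $p_i$ for $j \leq i \leq l$ executes $R_C$ during $\gamma^a \mapsto \gamma^b$.

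Once this is established, I would pin down the clock values of $P$ in $\gamma^a$ by propagating from the root. Lemma~\ref{lem:root_RC} applied to $p_l$ (a root that executes $R_C$) gives $p_l.c^a = -B$. For each $j \leq i < l$, the guard of $R_C$ at $p_i$ forces every neighbor to have its clock in $\{p_i.c^a - 1,\ p_i.c^a,\ p_i.c^a + 1\}$ in $\gamma^a$; applied to $p_{i+1}$ together with the strict decrease $p_{i+1}.c^a < p_i.c^a$, this yields $p_{i+1}.c^a = p_i.c^a - 1$. By induction from $p_l$, we get $p_i.c^a = -B + (l - i)$ for $j \leq i \leq l$. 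If $j \geq 1$, the same argument applied to $p_{j-1}$ as a neighbor of $p_j$ (which also executes $R_C$) gives $p_{j-1}.c^a = p_j.c^a + 1$, and since $P_C = p_0 \cdots p_{j-1}$ is gently decreasing by the definition of a $D$-path, the whole path $P$ turns out to be gently decreasing in $\gamma^a$; in all cases $p_0.c^a = -B + l$.

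To conclude, I would observe that because $P$ is decreasing in $\gamma^a$ the $l+1$ nodes of $P$ have pairwise distinct clocks and are thus pairwise distinct, so $l + 1 \leq n$. Lemma~\ref{lem:D-path_time} then gives $p.c^b = p_0.c^b \leq p_0.c^a = -B + l \leq -B + n - 1 \leq -B + n$, as required.

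The main delicate point is the boundary between the correct and error parts of $P$: the definition of a $D$-path only insists that $P_C$ be gently decreasing and that $P$ be (globally) decreasing, so a priori $p_{j-1}.c^a - p_j.c^a$ could exceed $1$. Invoking the $R_C$ guard at $p_j$ with $p_{j-1}$ as its neighbor is what pins this gap down to exactly $1$ and lets the induction from $p_l$ reach $p_0$.
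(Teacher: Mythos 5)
Your proof is correct, and it takes a route that differs in a meaningful way from the paper's. Both arguments begin identically, using the contrapositive of Lemma~\ref{lem:D-pathTrans} to conclude that no node of $P$ is in error in $\gamma^b$. From there the paper observes that at most one node of a $D$-path can execute $R_C$ in a single step, deduces that the error part of $P$ must in fact reduce to the root $r$ alone (all other nodes of $P$ already have status $C$ in $\gamma^a$), and then invokes Lemma~\ref{lem:D-pathGlop} to get that $P$ is gently decreasing in $\gamma^b$, so $p.c^b = length(P) + r.c^b$ with $r.c^b = -B$ by Lemma~\ref{lem:root_RC}. You instead note that every error-part node is forced to execute $R_C$, and you read the clock constraints directly off the $canClearE$ guard (including at the junction with the correct part) to show that $P$ is gently decreasing already in $\gamma^a$ with $p_0.c^a = -B + l$; you then conclude with the monotonicity statement of Lemma~\ref{lem:D-path_time} ($p_0.c^b \leq p_0.c^a$) rather than with any structure in $\gamma^b$. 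This bypasses Lemma~\ref{lem:D-pathGlop} entirely, and even gives the marginally sharper bound $-B+n-1$. One remark: the configuration you reason about, in which several consecutive error nodes all clear simultaneously, actually cannot arise — $canClearE$ at the root would require its error neighbor of clock $-B+1$ to be correct — and the paper makes this explicit by shrinking the error part to the root before doing any clock computation. Your proof never needs to notice this, because each of your inferences only uses necessary consequences of the guards of the rules that are (provably) executed, so the argument remains sound; it just silently covers a vacuous case.
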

\begin{proof}
  Let $p$ be the first node of a $D$-path $P$, and let $r$ be the root
  of $P$ in~$\gamma^a$.

  We claim that, in $\gamma^b$, $P$ contains no nodes in error.
  Indeed, otherwise Lemma~\ref{lem:D-pathTrans} implies that $p$ is
  the first node of a $D$-path in $\gamma^b$, which is a
  contradiction.

  Since, in a $D$-path, at most one node can execute the rule $R_C$
  during a step, then in $\gamma^a$, all the nodes of $P$ but $r$ have
  status $C$.  We can thus apply Lemma~\ref{lem:D-pathGlop} and
  obtain that $P$ is gently decreasing in $\gamma^b$, and thus
  $p.c^b=length(P)+r.c^b$.  Since no nodes can appear twice in $P$,
  we have $p.c^b \leq r.c^b+n$.

  Now since $r.s^b=C$, $r$ executes the rule $R_C$ in
  $\gamma^a\mapsto\gamma^b$.  But then Lemma~\ref{lem:root_RC} implies
  that $r.c^a=-B$, and thus $r.c^b=-B$, and the lemma follows.
\end{proof}

 \subsection{Bounds on the clock values}
 
\begin{lemma}\label{lem:2D+1}
  If $i<j$ and $p$ satisfies $p.c^j> p.c^i+2D$, then for any $q$, there
  exists $i\leq h<j$ such that $q.c^h=p.c^i+D$ and $q$ executes the rule $R_U$ in the step $\gamma^{h} \mapsto \gamma^{h+1}$.
\end{lemma}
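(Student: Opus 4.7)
The plan is to prove by induction on $d$ the following strengthened statement: for every $d\in\{0,\dots,D\}$ and every node $q$ at distance $d$ from $p$, there exist times $h_d(d)<h_d(d+1)<\cdots<h_d(2D-d)$ in $[i,j)$ such that, for each $k\in[d,2D-d]$, $q.c^{h_d(k)}=p.c^i+k$ and $q$ executes $R_U$ in the step $\gamma^{h_d(k)}\mapsto\gamma^{h_d(k)+1}$. Since $D\in[d,2D-d]$ whenever $d\leq D$, specialising the statement to $k=D$ yields the lemma. The induction rests on two observations: (i) among the rules, only $R_U$ can strictly increase $q.c$ in integer terms, and then by exactly one (a true increase forbids $+_B$ from wrapping around); (ii) whenever $q'$ executes $R_U$ with clock $c$, the guard of $R_U$ forces every neighbor of $q'$ to have clock in $\{c,c+_B 1\}$.

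For the base case $d=0$ (so $q=p$), I would set $T_k$ to be the largest $t\in[i,j)$ with $p.c^t\leq p.c^i+k$, for each $k\in\{0,\dots,2D\}$. The set is non-empty (it contains $i$) and excludes $j$ (since $p.c^j>p.c^i+2D\geq p.c^i+k$), so $T_k$ is well-defined. By observation~(i), $p.c^{T_k+1}=p.c^{T_k}+1$; combining this with $p.c^{T_k+1}>p.c^i+k$ and $p.c^{T_k}\leq p.c^i+k$ forces $p.c^{T_k}=p.c^i+k$ and the fact that $p$ executes $R_U$ at step $T_k$. The strict inequality $T_k<T_{k+1}$ follows because $T_k+1$ lies in the defining set of $T_{k+1}$.

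For the inductive step, fix a node $q$ at distance $d+1$ from $p$, choose a neighbor $q'$ of $q$ at distance $d$, and apply the induction hypothesis to $q'$ to get $h'(d)<\cdots<h'(2D-d)$. Applying observation~(ii) at each step $h'(k')$ yields $q.c^{h'(k')}\in\{p.c^i+k',p.c^i+k'+1\}$; in particular, $q.c^{h'(d)}\leq p.c^i+d+1$ and $q.c^{h'(2D-d)}\geq p.c^i+2D-d$. For each $k\in[d+1,2D-d-1]$, define $h_{d+1}(k)$ as the largest $t\in[h'(d),h'(2D-d))$ with $q.c^t\leq p.c^i+k$; by the endpoint bounds this set is non-empty and bounded. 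By observation~(i), the defining inequality $q.c^{h_{d+1}(k)+1}>p.c^i+k\geq q.c^{h_{d+1}(k)}$ forces $q.c^{h_{d+1}(k)}=p.c^i+k$ and $q$ to execute $R_U$ at that step. The strict ordering in $k$ is obtained exactly as in the base case.

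The main obstacle is choosing the right inductive statement: working with the single value $k=D$ alone does not close the recursion, since observation~(ii) applied at a distance-$d$ node executing $R_U$ at clock $p.c^i+D$ only pins its neighbors' clocks to a set of size two, giving no leverage to recover $R_U$-steps at clock $p.c^i+D$ for nodes farther away. One therefore needs a band of $2D-2d+1$ witness clock values at distance $d$, which shrinks by two at each recursive layer yet remains non-empty for $d\leq D$. A secondary subtlety is that the argument is phrased in ordinary integer arithmetic; this is justified because $p.c^j>p.c^i+2D$ together with $p.c^j\in[-B,B)$ forces $p.c^i+2D\leq B-2$, so the entire window $\{p.c^i+k\mid 0\leq k\leq 2D\}$ lies strictly below $B-1$ and never triggers wrap-around of $+_B$.
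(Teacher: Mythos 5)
Your proof is correct and follows essentially the same route as the paper: an induction on the distance to $p$ that propagates clock information to neighbours through the guard of $R_U$, combined with the observation that only a non-wrapping $R_U$ move can increase a clock, so that crossing a threshold value pins down an exact $R_U$-moment. The only difference is bookkeeping: you carry exact $R_U$-hits for the whole shrinking band of values $p.c^i+k$, $k\in[d,2D-d]$, whereas the paper's invariant only keeps one early time with $q.c\leq p.c^i+d$ and one later time with $q.c\geq p.c^j-d$, extracting the exact hit at $p.c^i+D$ once at the end.
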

\begin{proof}
  First, notice that $p.c^i+2D < B-1$  by hypothesis.
  Then, we prove by induction on $d(q, p)$ that there exist
  $i\leq i'<j'\leq j$ such that $q.c^{i'}\leq p.c^i+d(q, p)$ and
  $p.c^j-d(q, p)\leq q.c^{j'}$.

  \begin{itemize}
  \item If $d(q, p)=0$, then $q=p$ and $i'=i$ and $j'=j$ do the trick.

  \item If $d(q, p)>0$ then let $q'\in N(q)$ be such that
    $d(q', p)=d(q, p)-1$.  By induction, there exists
    $i\leq i_1<j_1\leq j$ such that $q'.c^{i_1}\leq p.c^i+d(q, p)-1$
    and $p.c^j-d(q, p)+1\leq q'.c^{j_1}$.

    Now $q'.c^{j_1}-q'.c^{i_1}\geq p.c^j-p.c^i-2(d(q, p)-1)> 2D-2(d(q,
    p)-1)$. So, $q'.c^{j_1}-q'.c^{i_1}>2$.  Thus, there exists
    $i_1\leq i'<j_1$ such that $q'.c^{i'}=q'.c^{i_1}$ and $q'$
    executes the rule $R_U$ in $\gamma^{i'} \mapsto \gamma^{i'+1}$.  Since
    $q$ is a neighbor of $q'$, we have
    $q.c^{i'}\leq q'.c^{i'}+1=q'.c^{i_1}+1 \leq p.c^i+d(q, p)$.

    Now since $q'.c^{i'+1}+2\leq q'.c^{j_1}$, there exists
    $i'< j'<j_1$ such that  $q'$
    executes the rule $R_U$ in $\gamma^{j'} \mapsto \gamma^{j'+1}$ and
    $q'.c^{j'+1}=q'.c^{j_1}$.  Since $q$ is a neighbor of
    $q'$, we have
    $q.c^{j'}\geq q'.c^{j'}=q'.c^{j_1}-1\geq p.c^j-d(q, p)$, which
    finishes the proof of our induction.
  \end{itemize}

  Let $q$ be any node.  Let $i\leq i'<j'\leq j$ such that
  $q.c^{i'}\leq p.c^i+d(q, p)\leq p.c^i+D$ and
  $q.c^{j'}\geq p.c^j-d(q, p)\geq p.c^j-D> p.c^i+D$.  There exists
  $i'\leq h<j'$ such that $q.c^h=p.c^i+D$ and $q$ executes the rule $R_U$ in the step $\gamma^{h} \mapsto \gamma^{h+1}$.
\end{proof}

\begin{lemma}\label{lem:atMost2D}
  Suppose that $\gamma^h$ is not clean.  For any node $p$ and any
  $i<j\leq h$, $p.c^j-p.c^i\leq 2D$.
\end{lemma}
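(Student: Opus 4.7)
The plan is to proceed by contradiction, combining Lemma~\ref{lem:2D+1} with two simple facts about roots. Assume for contradiction that there exist a node $p$ and indices $i<j\leq h$ with $p.c^j-p.c^i>2D$. Since $\gamma^h$ is not clean, it contains some root $r$. Applying Lemma~\ref{lem:noRootCreation} iteratively backward through the steps $\gamma^k\mapsto\gamma^{k+1}$ for $k=h-1,h-2,\dots,0$, I conclude that $r$ is already a root in every configuration $\gamma^k$ with $k\leq h$.

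I would then verify the auxiliary fact that a root can never execute the rule $R_U$. If $r.s=E$, this is immediate because the guard of $R_U$ requires $r.s=C$. Otherwise $r.s=C$, and the definition of $root$ furnishes a neighbor $q$ of $r$ with $r.c<q.c$ and $\dist{q.c}{r.c}\geq 2$; the distance condition rules out both $q.c=r.c$ and $q.c=r.c+_B1$ (the latter because $\dist{r.c+_B1}{r.c}=1$), so $q.c\notin\{r.c,r.c+_B1\}$, which disables $R_U$ at $r$ in this case as well.

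Finally, I would apply Lemma~\ref{lem:2D+1} to the assumed $p$, $i$, $j$ and the particular node $r$: the hypothesis $p.c^j>p.c^i+2D$ yields an index $k$ with $i\leq k<j$ at which $r$ executes $R_U$ during $\gamma^k\mapsto\gamma^{k+1}$. Since $k<j\leq h$, $r$ is a root in $\gamma^k$ by the first step, contradicting the second step. Hence $p.c^j-p.c^i\leq 2D$, as required. The only delicate point is the disabling argument for $R_U$ when $r.s=C$, which depends on reading off from the explicit definition of $\delta_B$ that $\dist{q.c}{r.c}\geq 2$ excludes $q.c=r.c+_B1$; beyond that, the proof is essentially mechanical given the preceding lemmas.
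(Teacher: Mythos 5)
Your proposal is correct and follows essentially the same route as the paper: use Lemma~\ref{lem:noRootCreation} to obtain a root $r$ persisting in all configurations up to $\gamma^h$, observe that a root can never execute $R_U$ (which you justify in detail via the guard $unisonMove$ and the definition of $\dist{}{}$, a fact the paper asserts without proof), and then invoke Lemma~\ref{lem:2D+1} with $q=r$ to rule out $p.c^j-p.c^i>2D$. The only difference is presentational (explicit contradiction and the spelled-out disabling argument), not mathematical.
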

\begin{proof}
  Let $r$ be a root in $\gamma^h$ and let $i<j\leq h$.  By
  Lemma~\ref{lem:noRootCreation}, $r$ is a root in every configuration
  $\gamma^l$ with $l\leq h$, and since no roots can execute the rule
  $R_U$, the lemma follows from Lemma~\ref{lem:2D+1}.
\end{proof}

\subsection{Move complexity}

In this section, we analyze the move complexity of our algorithm. To
do so, we fix an execution $e=\gamma^0\gamma^1\cdots$ and study the
rules a given node executes in it.  Since these rules do not appear
explicitly in an execution, we propose to use a proxy for them.

A pair $(p, i)$ is a \emph{move} if $p$ executes a rule in
$\gamma^i\mapsto\gamma^{i+1}$.  This move is a \emph{$U$-move} if the
rule is $R_U$, a \emph{$C$-move} if the rule is $R_C$, a
\emph{$R$-move} if the rule is $R_R$, and a \emph{$P(i)$-move} if the
rule is $R_P(i)$.  Since a node $p$ executes at most one rule in a
given step, the number of steps in which a given node executes a rule
is the number of its moves.

Let $S_i$ be the set of roots in $\gamma^i$.
Lemma~\ref{lem:noRootCreation} states that for each $i>0$,
$S_{i}\subseteq S_{i-1}$.  Since $\gamma^0$ contains at most $n$
roots, there are $l\leq n$ steps $\gamma^{i-1}\mapsto\gamma^{i}$ for
which $S_{i}\subset S_{i-1}$. Let $r_1, r_2, \ldots, r_l$ be the
sequence of increasing indices such that
$\forall i \in [1,l], S_{r_i}\subset S_{r_i-1}$. This sequence gives
the following \emph{decomposition} of $e$ into segments.
\begin{itemize}
\item The \emph{first segment} is the sequence
  $\gamma^0\cdots\gamma^{r_1}$.
\item For $1<i\leq l$ the \emph{$i$-th segment} is the sequence
  $\gamma^{r_{i-1}}\cdots\gamma^{r_i}$.
\item The \emph{last segment} is the sequence $\gamma^{r_l}\cdots$.
\end{itemize}

A segment is said to be $clean$ if its first configuration  is clean.
If the first configuration of a segment has a root, then the segment is said to be $unclean$.
According to Lemma~\ref{lem:noRootCreation},
if the first configuration of a segment is clean then 
the other configurations of the execution are clean.  
So, there is
at most one clean segment, the last one, in any execution.

\paragraph{$R$-moves.}

\begin{lemma}\label{lem:nbRmoves}
  A node $p$ executes at most one $R$-move.
\end{lemma}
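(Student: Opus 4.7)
The plan is to show that once $p$ has executed $R_R$, the predicate $activeRoot(p)$ never becomes true again. Let $\gamma^a \mapsto \gamma^b$ be the step in which $p$ performs its $R$-move, so that $p.c^b = -B$ and $p.s^b = E$.

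I would prove by induction on $g \geq b$ the following invariant in $\gamma^g$: either
(i) $p$ is a root with $p.c^g = -B$ and $p.s^g = E$, or
(ii) $p$ is not a root.

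In either case, $activeRoot(p) = root(p) \wedge (p.c \neq -B \vee p.s = C)$ fails, so $p$ cannot execute $R_R$ in the step $\gamma^g \mapsto \gamma^{g+1}$, which yields the lemma.

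For the base case $g = b$, note that $p.c^b = -B$ is the minimum possible clock value, so the clause ``$\exists q \in N(p),\ q.s = E \wedge q.c < p.c$'' fails vacuously; since $p.s^b = E$, node $p$ is a root and case (i) holds.

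For the inductive step, if case (ii) holds at $g$, then by the contrapositive of Lemma~\ref{lem:noRootCreation} applied to the step $\gamma^g \mapsto \gamma^{g+1}$, $p$ is still not a root in $\gamma^{g+1}$. If case (i) holds at $g$, I would enumerate the possible moves of $p$: $R_R$ is disabled because $p$ is not active (since $p.c^g = -B$ and $p.s^g = E$); $R_U$ is disabled because $p.s^g = E$; a rule $R_P(i)$ is disabled because no integer $i$ satisfies $i < p.c^g = -B$. The only rule that could fire at $p$ is $R_C$, in which case Lemma~\ref{lem:root_RC} tells us that $p$ is no longer a root in $\gamma^{g+1}$, putting us in case (ii). If $p$ does not move at all, its status and clock are unchanged and the same vacuous argument as in the base case shows $p$ remains a root with $p.c = -B$ and $p.s = E$, so case (i) persists.

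The main insight, and the step that could be easily missed, is that the invariant must include the ``not a root'' alternative: an execution of $R_C$ by $p$ (which $R_C$ does not otherwise forbid) can push $p$ out of the rootless-in-$E$ state into status $C$, and it is only via Lemma~\ref{lem:root_RC} that we regain control by noting $p$ simultaneously stops being a root. Once that root-status is lost, Lemma~\ref{lem:noRootCreation} makes the loss permanent, preventing any future activation of $R_R$.
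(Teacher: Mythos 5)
Your proof is correct and follows essentially the same route as the paper's: after the $R$-move the only rule $p$ can possibly execute is $R_C$, and then Lemma~\ref{lem:root_RC} combined with Lemma~\ref{lem:noRootCreation} shows $p$ is never a root (hence never $activeRoot$) again. Your explicit invariant-by-induction phrasing is just a more detailed write-up of the paper's argument, with the same two key lemmas doing the work.
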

\begin{proof}
  Let $p$ be a node.  We have three cases.

  \begin{itemize}
  \item If $p$ executes no $R$-moves, it executes at most one $R$-move.
  \item If $p$ executes a $R$-move and no moves after the first
    $R$-move, then $p$ executes only one $R$-move.
  \item Otherwise, let $(p,i)$ be the first $R$-move (thus
    $p.c^{i+1}=-B$ and $p.s^{i+1}=E$), and let $(p,j)$ be the first
    move which follows.  Consequently, $(p,j)$ is necessarily a
    $C$-move. The result then follows from Lemmas~\ref{lem:root_RC}
    and~\ref{lem:noRootCreation}.
  \end{itemize}
\end{proof}

\paragraph{$U$-moves.} Note here that the predicate $P_\text{aux}$ can only prevent a node from executing
the rule $R_U$.  Hence, since we consider distributed unfair daemons,
an execution with any predicate $P_\text{aux}$ is a valid execution
with the predicate $P_\text{aux}=true$ while the configuration is not
clean.  We therefore consider in this part of the analysis that
$P_\text{aux}=true$.

\begin{lemma}\label{lem:errorInSegment}
Let $s$ be a segment.
  All  $U$-moves done by $p$ during $s$ are done consecutively before 
  the first error rule executed by $p$ during $s$ (if it exists).
\end{lemma}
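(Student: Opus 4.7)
My plan is to split the statement into two sub-claims and prove them separately: (i) once $p$ executes an error rule during~$s$, $p$ cannot execute $R_U$ again within~$s$; and (ii) the $R_U$-moves of~$p$ during~$s$ are not interleaved with any other move of~$p$. Both sub-claims will follow from the $D$-path machinery developed in the previous subsection.

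For (i), I would assume that $p$ executes an error rule at some step $\gamma^e\mapsto\gamma^{e+1}$ of~$s$. Then $p.s^{e+1}=E$, so Lemma~\ref{lem:E-path} makes $p$ the first node of an $E$-path, hence of a $D$-path, in $\gamma^{e+1}$. By the definition of a segment, the root set is constant at every step of~$s$ except possibly the last one, where some roots may be removed. Thus for every step $\gamma^i\mapsto\gamma^{i+1}$ of~$s$ that is not the last, if $p$ starts a $D$-path in $\gamma^i$, then the root of that $D$-path is still a root in $\gamma^{i+1}$, and Lemma~\ref{lem:languageInSegment} produces a $D$-path starting at~$p$ in~$\gamma^{i+1}$. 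Iterating from $i=e+1$ up to the penultimate configuration of~$s$, $p$ heads a $D$-path in each such configuration, and Lemma~\ref{lem:D-path_time} then forbids $p$ from executing $R_U$ at any step of~$s$ after step~$e$.

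For (ii), I would argue by contradiction. Suppose $p$ executes $R_U$ at two steps $a<b$ both in~$s$, and that some move of~$p$ at an intermediate step~$k$ with $a<k<b$ is not an $R_U$-move. That move at step~$k$ is either an error rule or $R_C$. If it is an error rule, then~(i) contradicts the $R_U$ at step~$b$. If it is $R_C$, then $p.s^k=E$; since the $R_U$ at step~$a$ leaves $p.s^{a+1}=C$, and since $p.s$ can change from $C$ to $E$ only by $p$ itself executing $R_R$ or $R_P$, some error rule of~$p$ must be executed strictly between steps $a$ and~$k$, which again contradicts the $R_U$ at step~$b$ via~(i).

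The only delicate point is the boundary of the segment in~(i): the last step of~$s$ can shrink the root set, so Lemma~\ref{lem:languageInSegment} may not propagate the $D$-path across that step. However, to forbid $R_U$ at \emph{every} step of~$s$ after step~$e$, it suffices to have $p$ head a $D$-path at every configuration of~$s$ from $\gamma^{e+1}$ up to the penultimate one; the induction delivers precisely this, because Lemma~\ref{lem:D-path_time} applied at the penultimate configuration rules out $R_U$ at the final step of~$s$ as well.
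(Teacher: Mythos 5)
Your proof is correct and takes essentially the same route as the paper: after the first error rule, Lemma~\ref{lem:E-path} makes $p$ head a $D$-path, Lemma~\ref{lem:languageInSegment} propagates this through the segment, and Lemma~\ref{lem:D-path_time} then forbids any later $R_U$-move in $s$. Your part~(ii) and the care about the segment's last step merely spell out details the paper compresses into ``by definition of the rules $R_U$ and $R_C$'' and ``remains in a $D$-path until the end of $s$''.
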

\begin{proof}
  By definition of the rules $R_U$ and $R_C$, $U$-moves of $p$ are done
  consecutively before the first error rule executed by $p$.  According
  to Lemma~\ref{lem:E-path}, after $p$ executes an error rule, $p$ is
  the first node of an $E$-path, and thus of a $D$-path, by
  definition.  Lemma~\ref{lem:languageInSegment} implies that $p$
  remains in a $D$-path until the end of $s$. Hence, $p$ no more
  executes the rule $R_U$ in $s$, by Lemma~\ref{lem:D-path_time}, and we are
  done.
  \end{proof}
 
To compute the move complexity, we must, in particular, compute the total number
of moves in unclean segments.  By definition, the rules $R_R$,
$R_P$ and $R_C$ can only appear in unclean segments.
%
%
\begin{lemma}\label{lem:2D+1-RU}
Let $s$ be an unclean segment.
A node $p$ executes the rule $R_U$ at most $2D$ times during $s$.
\end{lemma}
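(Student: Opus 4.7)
The plan is to combine Lemma~\ref{lem:errorInSegment}, which forces all of $p$'s $U$-moves in the segment to occur consecutively (with no other move of $p$ interleaved), with Lemma~\ref{lem:atMost2D} applied to a root that persists throughout the unclean segment.

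First I will enumerate the times $u_1 < u_2 < \cdots < u_k$ at which $p$ executes $R_U$ during $s = \gamma^a\cdots\gamma^b$. By Lemma~\ref{lem:errorInSegment}, $p$ performs no other move in $[u_1, u_k]$, so $p.c$ is constant between $u_j+1$ and $u_{j+1}$. A straightforward induction on $j$ then yields $p.c^{u_j+1} = p.c^{u_1} +_B j$, and in particular $p.c^{u_k+1} = p.c^{u_1} +_B k$. I also note that $p.c^{u_1} = p.c^a$, since the only possible move of $p$ before $u_1$ in $s$ is a $C$-move, which does not modify the clock.

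Next I will use the unclean status of $s$. By Lemma~\ref{lem:noRootCreation} together with the segment decomposition, there is at least one root $r$ present in every configuration $\gamma^l$ with $a \leq l \leq b-1$. Since roots cannot execute $R_U$, Lemma~\ref{lem:atMost2D} (essentially Lemma~\ref{lem:2D+1} applied with $r$ playing the role of the node that never performs $R_U$) yields the integer inequality $p.c^{u_k+1} - p.c^{u_1} \leq 2D$. If no wrap-around of $+_B$ occurs during the $k$ increments, then $p.c^{u_k+1} - p.c^{u_1} = k$, and the bound reads $k \leq 2D$, as desired.

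The main obstacle is ruling out wrap-around of $+_B$, for which the hypothesis $B \geq 2D+2$ is crucial. A wrap at some intermediate step $u_m$ would require $p.c^{u_m} = B-1$; applying Lemma~\ref{lem:atMost2D} to the sub-interval $[u_1,u_m]$ forces $p.c^{u_1} \geq B-1-2D \geq 1$, while applying it to the suffix starting just after the wrap, where $p.c^{u_m+1} = 0$, prevents any further wrap (which would demand at least $B > 2D$ additional $U$-moves) and bounds the tail length. Reconciling these two constraints in a short case analysis, using $B \geq 2D+2$, rules out the wrap-around scenario and reduces the problem to the already-treated no-wrap case, yielding $k \leq 2D$.
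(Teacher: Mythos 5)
Your no-wrap argument is essentially sound: within an unclean segment the $U$-moves of $p$ are consecutive (Lemma~\ref{lem:errorInSegment}), a root $r$ persists in every configuration of the segment except possibly the last, and applying Lemma~\ref{lem:2D+1} with $r$ as the node that never executes $R_U$ gives $p.c^{u_k+1}-p.c^{u_1}\leq 2D$, hence $k\leq 2D$ when no wrap of $+_B$ occurs (your parenthetical is needed here, since Lemma~\ref{lem:atMost2D} as stated requires the configuration indexed $u_k+1$ to be not clean, which may fail at the very last configuration of the segment).

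The genuine gap is the wrap-around case: it cannot be ``ruled out.'' A node far from the persistent root can legitimately wrap while the segment is unclean. For instance, take a path $a-b-c$ with $D=2$, $B=6$, $a.v=(E,-6)$, $b.v=(C,5)$, $c.v=(C,5)$: $a$ is a root, the segment is unclean, yet $c$ satisfies $unisonMove$ and executes $R_U$, going from $5$ to $0$. So the two constraints you extract (prefix forces $p.c^{u_1}\geq B-1-2D\geq 1$, suffix after the wrap is at most $2D$) are perfectly compatible and yield no contradiction; combining them only bounds $k$ by roughly $4D+1$, not $2D$. The deeper reason your route stalls is that Lemmas~\ref{lem:2D+1} and~\ref{lem:atMost2D} control the \emph{integer} clock difference, which decouples from the \emph{number} of $U$-moves precisely when a wrap occurs (the difference becomes $k-B$, which is small or negative). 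The paper avoids this entirely: it proves the lemma by induction on the distance $d$ from a root that persists throughout the segment, showing a node at distance $d$ performs at most $2d$ $U$-moves, via a local counting argument comparing $p$'s successive clock values with those of a neighbor at distance $d-1$; that argument explicitly handles the modular arithmetic (using $B\geq 4$) and is insensitive to wraps. Your approach could be repaired into an $O(D)$ bound per segment (enough for the asymptotic move complexity), but as written it does not prove the stated bound of $2D$, and the claim that wrap-around is impossible is false.
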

\begin{proof}
  By definition of $s$, there is a node $r$ that is a root all along $s$.
  We now show, by induction on $d$, that every node $p$ at distance $d\leq D$ from $r$ executes at most $2d$ $U$-moves in $s$.
  \begin{description}
  \item[Base Case:] If $d = 0$, then $p = r$. Now, $r$ cannot execute a $U$-move during~$s$.
  \item[Induction Step:] Assume that $p$ is at distance $d > 0$ from $r$.
    Let $q \in N(p)$ such that $q$ is at distance $d-1$ from $r$.
    By Lemma~\ref{lem:errorInSegment}, if $p$, resp. $q$, changes its
    clock value during $s$, it does so by first executing a (possibly
    empty) sequence of $U$-moves, and then by executing a (possibly
    empty) sequence of error moves.  By induction hypothesis, $q$
    executes $x \leq 2(d-1)$ $U$-moves in $s$. To prove the induction
    step, it is sufficient to prove that $p$ does not execute more
    than $x+2$ steps during $s$.
    
    For the purpose of contradiction, assume that $p$ executes at
    least $x+3$ $U$-moves in $s$.  Let $c_p$ be the clock value of~$p$
    just before its first $U$-move in $s$.  There are $x+3$ integers $t_1 < t_2
    \cdots < t_{x+3}$ such that $(p,t_i)$ is a $U$-move in $s$ setting
    $p.c$ to the value $c_p +_B i$.  By definition of the rule~$R_U$,
    we must have $q.c^{t_i} \in \{ c_p +_B (i-1), c_p +_B i\}$.
    
    We claim that for any $1 \leq i \leq x+3$, node $q$ has executed
    at least $i-2$, resp. $i-1$, $U$-moves between the beginning of
    the segment and $\gamma^{t_i}$ when $q.c^{t_i} = c_p +_B (i-1)$,
    resp. $q.c^{t_i} = c_p +_B i$.  We prove this claim by induction
    on $i$. The base case $i=1$ is trivial.  Assume that the property
    holds for $i \geq 1$ and let us consider the different cases. If
    $q.c^{t_{i+1}} = q.c^{t_i}$, then $q.c^{t_i} = c_p +_B i$ and we
    immediately have the desired property by induction hypothesis.
    Otherwise, we have $q.c^{t_{i+1}} = q.c^{t_i} +_B j$, with $j$
    being $1$ or $2$.  Since $B \geq 4$, the value $q.c^{t_{i+1}}$ is
    either non-negative, or larger than $q.c^{t_i}$.  Since executing
    an error rule always decreases the clock value, and sets it to a
    negative value, $q$ cannot use any error rule to obtain for the
    first time the clock value $q.c^{t_{i+1}}$ from configuration
    $\gamma^{t_i}$. Therefore, $q$ must perform at least $j$ $U$-moves
    between $\gamma^{t_i}$ and $\gamma^{t_{i+1}}$.  Still by induction
    hypothesis, we thus obtain the desired property also in this case,
    which concludes the proof of the claim.  Using it with $i=x+3$
    allows us to obtain the expected contradiction, hence proving the
    overall induction step.
  \end{description}
  The lemma directly follows from the overall induction.
\end{proof}

\begin{lemma}\label{lem:nbUmoves}
  A node $p$ has at most $2Dn$ $U$-moves in the unclean
  segments.
\end{lemma}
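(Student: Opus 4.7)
The plan is to combine Lemma~\ref{lem:2D+1-RU}, which bounds the number of $U$-moves of $p$ in a single unclean segment by $2D$, with an upper bound on the number of unclean segments in the execution. Hence the only substantive new argument is to show that there are at most $n$ unclean segments.

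First, I would invoke the segment decomposition defined just above: the execution is split into $l+1$ segments by the indices $r_1 < \cdots < r_l$ at which the set of roots shrinks strictly, i.e., $S_{r_i} \subsetneq S_{r_i-1}$. By Lemma~\ref{lem:noRootCreation}, $(|S_i|)_i$ is non-increasing, so the sequence of cardinalities $|S_0| > |S_{r_1}| > \cdots > |S_{r_l}| \geq 0$ is strictly decreasing of length $l+1$, and all its values lie in $\{0,1,\ldots,n\}$. In particular $l \leq |S_0| \leq n$.

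Next, I would distinguish two cases to count unclean segments precisely. If the last segment is clean, then $|S_{r_l}| = 0$, so the number of unclean segments is exactly $l \leq n$. Otherwise, the last segment is unclean, meaning $|S_{r_l}| \geq 1$; then the strictly decreasing chain above forces $l \leq n-1$, and the number of unclean segments is $l+1 \leq n$. In either case we obtain at most $n$ unclean segments, which is the only real bookkeeping step in the proof.

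Finally, I would conclude by summation: Lemma~\ref{lem:2D+1-RU} gives at most $2D$ $U$-moves of $p$ per unclean segment, and there are at most $n$ such segments, so $p$ performs at most $2Dn$ $U$-moves over all unclean segments, which is the desired bound. I do not anticipate any genuine obstacle here; the only thing to be careful about is the off-by-one in the segment count, handled by the case split above.
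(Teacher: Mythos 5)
Your proof is correct and follows exactly the paper's argument: apply Lemma~\ref{lem:2D+1-RU} to get at most $2D$ $U$-moves per unclean segment, and bound the number of unclean segments by $n$ via the root-monotonicity of Lemma~\ref{lem:noRootCreation}. The paper states the ``at most $n$ unclean segments'' step without detail, so your explicit case split on whether the last segment is clean is just a more careful write-up of the same bookkeeping.
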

\begin{proof}  
  By Lemma~\ref{lem:2D+1-RU}, $p$ executes the rule $R_U$ at most $2D$
  times in an unclean segment.  Since there are at most $n$ unclean
  segments (Lemma~\ref{lem:noRootCreation}), the lemma follows.
\end{proof}

\paragraph{$P$-moves with $B$.}
We bound the number of $P$ moves in 2 ways: using $B$, and without
using~$B$.

\begin{lemma}\label{lem:nbP-movesB}
  A node can have at most $nB$ $P$-moves.
\end{lemma}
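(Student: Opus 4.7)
My plan is to bound $P$-moves segment by segment, using the decomposition introduced just before Lemma~\ref{lem:nbRmoves}.

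First I would observe that a clean segment contributes zero $P$-moves: firing $R_P$ requires a neighbor in error, but by Lemma~\ref{lem:E-path} any node in error is the first node of an $E$-path terminating at a root, which contradicts cleanliness. So every $P$-move of $p$ lies inside an unclean segment.

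Next I would bound by $B$ the number of $P$-moves of $p$ in any single unclean segment $s$. The crucial input is Lemma~\ref{lem:errorInSegment}, which forbids $p$ from performing any $U$-move in $s$ after its first error rule. Among the remaining rules, $R_C$ leaves $p.c$ unchanged while $R_R$ and $R_P$ only decrease it, so $p.c$ is non-increasing from the first error rule onward. Moreover each $P$-move strictly decreases $p.c$ and sets it to a value in $[-B+1,-1]$, since $R_P(i)$ requires $q.c < i < p.c$ with $q.c\geq -B$, and $p.s=E$ forces $p.c<0$. The values of $p.c$ taken just after successive $P$-moves of $p$ in $s$ therefore form a strictly decreasing sequence contained in the $B-1$ integers of $[-B+1,-1]$, which bounds the number of $P$-moves in $s$ by $B-1<B$.

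Finally I would count the unclean segments. By Lemma~\ref{lem:noRootCreation} the sequence $(|S_i|)$ is non-increasing, so the number $l$ of strict decreases is at most $|S_0|\leq n$; a straightforward case analysis on whether $|S_{r_l}|=0$ then shows that the number of unclean segments is at most $n$. Multiplying the per-segment bound by the number of segments gives at most $nB$ $P$-moves for $p$, as required. The only minor subtlety worth mentioning explicitly is the degenerate case where $p$'s first error rule in $s$ happens to be $R_R$: then $p.c=-B$ immediately afterwards, and since $p.c$ cannot rise again in $s$ (Lemma~\ref{lem:errorInSegment}) while $R_P$ requires $p.c\geq -B+2$, no $P$-move is ever enabled at $p$ in $s$, which is consistent with the bound.
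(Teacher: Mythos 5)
Your proof is correct and follows essentially the same route as the paper: no $P$-moves in clean segments, Lemma~\ref{lem:errorInSegment} to rule out $U$-moves after the first error rule so that $p.c$ is strictly decreasing over the $P$-moves within an unclean segment and confined to an interval of size at most $B$, and Lemma~\ref{lem:noRootCreation} to bound the number of unclean segments by $n$. The extra details you supply (the empty-clean-segment justification via Lemma~\ref{lem:E-path}, the segment count, the $R_R$-first degenerate case) are all sound refinements of the paper's shorter argument.
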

\begin{proof}
  Let $p$ be a node.  In a clean segment, $p$ cannot execute a
  rule $R_P$.  In an unclean segment, by
  Lemma~\ref{lem:errorInSegment}, once $p$ executes a $P$-move, it
  cannot execute the rule $R_U$ anymore.  Each time $p$ executes
  the rule $R_P$, the variable $p.c$ decreases by at least one and
  takes a value in $[-B,0[$ .
      Hence, $p$ can only execute $B$ $P$-moves in an unclean
      segment. Since there are at most $n$ unclean segments
      (Lemma~\ref{lem:noRootCreation}), the lemma follows.
\end{proof}

\paragraph{$P$-moves without $B$.} We now need several definitions.

We say that a $P$-move $(p, t)$ \emph{causes} another $P$-move
$(p', t')$ if
\begin{itemize}
\item $p'\in N(p)$, $t'>t$,
\item for some $l$, $(p', t')$ is a $P(l)$-move and $(p, t)$ is a
  $P(l-1)$-move, and
\item for any $t<k<t'$, $(p, k)$ is not a move.
\end{itemize}

If a node $p$ is in error in some configuration $\gamma^i$, this often
happens because of some previous $P$-move $(p, t)$.  Moreover, what
allowed $(p, t)$ is some $q\in N(p)$ which is in error in
$\gamma^{t-1}$.  Finally, the reason why $q$ is in error in
$\gamma^{t-1}$ is because of some previous move and so on.  This
motivates the following definition: a \emph{causality chain} is a
sequence $C=(p_0, t_0)(p_1, t_1)\dots(p_l, t_l)$ such that
\begin{itemize}
\item for each $0\leq i<l$, $(p_i, t_i)$ causes $(p_{i+1}, t_{i+1})$;
\item no $(p, t)$ causes $(p_0, t_0)$.
\end{itemize}

By construction, any $P$-move is the last element of a causality chain
but the causality chain may not be unique.

We classify the $P$-move of $p$ in 3 types.
\begin{itemize}
\item $(p, i)$ is of Type 1 if there exists a $P$-move $(p, j)$ with
  $j>i$ such that $p.c^{i+1}=p.c^{j+1}$.
\item $(p, i)$ is of Type 2 otherwise.  And we subdivide Type 2
  $P$-moves in
  \begin{itemize}
  \item Type 2a.  if at least one causality chain
    $C=(p_0, t_0)\dots(p_l, t_l)$ ending in $(p, i)$ does not contain
    a repeated node.  More formally, for any $0\leq i<j\leq l$,
    $p_i\neq p_j$.
  \item Type 2b.  otherwise.
  \end{itemize}
\end{itemize}



Our goal is to separately bound the number of $P$-moves of each type
that a node can execute.

\begin{lemma}\label{lem:nbGlopType1}
  There are at most as many $P$-moves of type 1 as there are $U$-moves
  in the unclean segments.
\end{lemma}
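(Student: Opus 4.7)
The plan is to construct, for each node $p$, an injection from the Type~1 $P$-moves of $p$ into the $U$-moves of $p$, and then observe that the image lies in unclean segments. Since $U$-moves of different nodes are disjoint, combining the per-node injections yields the desired global inequality.

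Fix a Type~1 $P$-move $(p, i)$ and set $v = p.c^{i+1}$. By definition of Type~1, there is a later $P$-move of $p$ with the same resulting value; let $j$ be the smallest such index. For $(p, j)$ to be a $P(v)$-move, we must have $p.c^j > v$, so the clock of $p$ starts at $v$ at time $i+1$ but must exceed $v$ just before time $j$. I would define $g(p, i)$ to be the largest $t \in [i+1, j]$ with $p.c^t = v$; note that $g(p, i) < j$ since $p.c^j > v$. The key step is to argue that $(p, g(p, i))$ is a $U$-move. By maximality of $g(p, i)$, we have $p.c^{g(p,i)+1} \neq v$, so $p$ must move at time $g(p,i)$, and this move cannot be a $C$-move (which leaves $p.c$ unchanged). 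It also cannot be a $P$- or an $R$-move: either would set $p.c^{g(p,i)+1}$ strictly below $v$, and since $R_U$ only increments $p.c$ by one, climbing back up to $p.c^j > v$ would force $p.c$ to take the value $v$ again at some time in $(g(p,i), j]$, contradicting the maximality of $g(p,i)$. Hence the move at $g(p,i)$ must be a $U$-move, setting $p.c^{g(p,i)+1} = v+1$.

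Injectivity is then immediate: if $g(p, i_1) = g(p, i_2) = t$ with $i_1 < i_2$, then $p.c^t$ equals both $p.c^{i_1+1}$ and $p.c^{i_2+1}$, so the two $P$-moves share the resulting value $v$; by minimality of $j_1$ associated with $(p, i_1)$, we get $j_1 \leq i_2$, which yields the contradictory chain $t \leq j_1 \leq i_2 < i_2 + 1 \leq t$. Finally, $(p, g(p, i))$ sits in the interval $[i+1, j]$ bounded by two $P$-moves of $p$, and since $P$-moves can only occur in unclean segments, this $U$-move belongs to an unclean segment. The main subtlety is the argument that the move at $g(p, i)$ must be a $U$-move, which rests on the fact that $R_U$ increments $p.c$ by exactly one (so any downward excursion below $v$ has to be reversed by passing through $v$ again), together with the observation that when $v = -B$ both $R$- and $P$-moves are ruled out directly.
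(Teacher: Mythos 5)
Your proof is correct and follows essentially the same strategy as the paper's: both associate to each type-1 $P$-move a $U$-move of the same node at the same clock value, sandwiched before the next $P$-move with the same resulting value, using the fact that only $R_U$ increases the clock (and by exactly one) to get existence and the minimality/maximality of the chosen index to get injectivity. Your write-up is merely a bit more detailed (e.g., ruling out $C$/$P$/$R$-moves at the chosen instant and the $v=-B$ corner case) than the paper's terser version, which picks the first such $U$-move after the type-1 move.
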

\begin{proof}
  Suppose that $(p, i)$ and $(p, j)$ are both $P(l)$-moves with $i<j$.
  This means that $p.c^{i+1}=p.c^{j+1}=l$.  For $(p, j)$ to be
  possible, $p.c$ has to go from $l$ in $\gamma^{i+1}$ to being
  strictly greater than $l$ in $\gamma^j$.  This implies that there
  exists $i<k<j$ such that $(p, k)$ is a $U$-move with $p.c^k=l$.

  Thus, if we associate to each $(p, i)$ of type 1 the $U$-move
  $(p, j)$ such that $p.c^{i+1}=p.c^j$ with $j>i$ minimum, then no 2
  distinct $P$-moves correspond to the same $U$-move.  This implies
  that $p$ has at most as many $P$-moves of type 1 as it has $U$-moves
  in unclean segments.
\end{proof}

Remark that, by definition, two $P$-moves $(p, i)$ and $(p, j)$ of
type 2 are such that $p.c^{i+1}\neq p.c^{j+1}$.  To count the number
of $P$-moves $(p, i)$ of type 2, we thus count the number of values
that $p.c^{i+1}$ can take.

\begin{lemma}\label{lem:nbGlopType2a}
  A node $p$ can have at most $n(n+1)$  $P$-moves of  type 2a.
\end{lemma}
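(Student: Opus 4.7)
The plan is to associate each Type~2a move of $p$ with a structured parameter that is constrained enough for a counting argument of order $n(n+1)$. Given a Type~2a move $(p,i)$, fix a repetition-free causality chain $C=(p_0,t_0),\ldots,(p_l,t_l)=(p,i)$; since the $l+1$ nodes are pairwise distinct, $l \le n-1$, and a direct induction using the definition of ``causes'' shows that each $(p_j,t_j)$ is a $P(l_0+j)$-move, where $l_0 := p_0.c^{t_0+1}$. In particular, $v := p.c^{i+1} = l_0 + l$. By the definition of Type~2, the values $p.c^{j+1}$ for different Type~2 moves $(p,j)$ of $p$ are pairwise distinct, so it suffices to bound the number of distinct values $v$ arising from Type~2a moves of $p$.

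Next, I analyze the origin $(p_0,t_0)$. Because $R_P(l_0)$ is the highest-priority $R_P$ rule enabled at $p_0$ in $\gamma^{t_0}$, there exists some $q \in N(p_0)$ with $q.s^{t_0}=E$ and $q.c^{t_0}=l_0-1$. Consider $q$'s last move before $t_0$, if any. If it is an $R_P(l_0-1)$-move then it causes $(p_0,t_0)$, contradicting the fact that $(p_0,t_0)$ starts a causality chain; the rules $R_C$ and $R_U$ would leave $q$ in status~$C$, contradicting $q.s^{t_0}=E$; so the only remaining possibility is an $R_R$-move, which sets $q.c := -B$. Hence exactly one of the following holds: (Case~A) $q$ has never moved, so $q.s^0=E$ and $q.c^0=l_0-1$; or (Case~B) $q$'s last move is an $R_R$-move, forcing $l_0=-B+1$. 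For each Type~2a move of $p$, I pick one such witness $q$ and record its case.

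Finally, I count. In Case~A, the pair $(q, l)$ uniquely determines $v = q.c^0 + 1 + l$, so distinct Type~2a moves must map to distinct pairs; with at most $n$ possible nodes $q$ and at most $n$ possible lengths $l \in \{0, \ldots, n-1\}$, Case~A contributes at most $n^2$ moves of $p$. In Case~B, the value $v = -B+1+l$ depends only on $l$, so distinct Type~2a moves yield distinct $l$'s, giving at most $n$ moves. Summing the two cases yields the bound $n^2 + n = n(n+1)$.

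The main obstacle is the case analysis of the witness $q$ in the second step: one must rule out every way for $q$ to reach state $(E, l_0-1)$ at time $t_0$ other than via the initial configuration or an $R_R$-move, relying crucially on the fact that any uninterrupted $R_P(l_0-1)$-move of $q$ would constitute exactly the cause excluded by the ``origin'' condition. Once this dichotomy is established, the remaining counting is a straightforward pigeonhole-style argument.
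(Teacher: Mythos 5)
Your proof is correct and follows essentially the same route as the paper's: exploit the repetition-free causality chain to write the value set by the move as $p_0.c^{t_0+1}+l$ with $l\le n-1$, show that the chain's origin forces $p_0.c^{t_0+1}$ to be either $-B+1$ or an initial clock value plus one (at most $n+1$ possibilities), and conclude via the pairwise distinctness of values for Type~2 moves. Your splitting of the count into the two origin cases ($n^2+n$) is only a cosmetic rearrangement of the paper's product bound $n(n+1)$, with the added (welcome) detail of explicitly ruling out $R_P$, $R_C$ and $R_U$ as the witness's last move.
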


\begin{proof}
  Let $(p, i)$ be a $P$-move of type 2a, and let
  $C=(p_0, t_0)\dots(p_l, t_l)$ be a corresponding causality chain.
  We have
  \begin{itemize}
  \item $(p, i)=(p_l, t_l)$
  \item for any $0\leq i<j\leq l$, $p_i\neq p_j$.
  \end{itemize}
  Clearly, $l<n$ and $p_l.c^{t_l+1}=l+p_0.c^{t_0+1}$.  Let
  $r\in N(p_0)$ be such that $r.s^{t_0}=E$ and
  $p_0.c^{t_0+1}=r.c^{t_0}+1$.  Since no $P$-move causes $(p_0, t_0)$,
  two cases arise:
  \begin{itemize}
  \item the last move of $r$ before $t_0$ is an $R$-move in which case
    $r.c^{t_0}=-B$,
  \item $r$ executes no rule before $t_0$ in which case
    $r.c^{t_0}=r.c^0$.
  \end{itemize}
  Thus $p_0.c^{t_0+1}$ can take at most $n+1$ distinct values.
  The lemma now follows from the fact that $l$ can take at most
  $n$ distinct values.
\end{proof}

\begin{lemma}\label{lem:nbGlopType2b}
  A node $p$ can have at most $2n+2D$ type 2b $P$-moves.
\end{lemma}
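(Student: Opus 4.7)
The plan is to bound the number of type 2b $P$-moves of $p$ by analyzing the enforced repetition in every causality chain. The first preliminary observation I would establish is that, within a single segment, the $P$-moves of any given node have strictly decreasing levels: after a $P$-move the node's clock is set to a negative value and, by Lemma~\ref{lem:errorInSegment}, no $R_U$-move can follow an error rule within the same segment, so a subsequent $P$-move can only take the clock still lower (and an $R_R$-move merely pins the clock to $-B$, which cannot enable a later $P$-move at a higher level within that segment).

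For each type 2b $P$-move $(p,i)$ at level $l = p.c^{i+1}$, I would fix a shortest causality chain $C = (p_0,t_0)\cdots(p_k,t_k) = (p,i)$ and pick a repeated node $q = p_j = p_m$ with $j<m$, which exists by the type 2b assumption. The levels along $C$ satisfy $l_j < l_m$; combined with the previous observation, this forces $t_j$ and $t_m$ to lie in different segments, so $C$ must cross at least one segment boundary. Each segment boundary corresponds to a root-elimination event (Lemma~\ref{lem:noRootCreation}) and there are at most $n$ of them. I would argue that at most a bounded constant (conjecturally $2$) of $p$'s type 2b $P$-moves can be charged to any single segment boundary, yielding the $2n$ contribution.

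The $2D$ term I would extract from Lemma~\ref{lem:atMost2D}: over the unclean part of the execution, the clock of any node can increase by at most $2D$, which, analogously to the argument of Lemma~\ref{lem:nbGlopType2a} but with this stronger range restriction, bounds to $O(D)$ the number of distinct terminal levels reachable by the non-repeating prefix of a chain whose source lies in this prefix. Combined with the finitely many source types (initial clock values together with the $R$-move reset value $-B+1$), this gives the additive $2D$.

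The main obstacle, and the part I expect to be most delicate, is the charging scheme: uniquely attributing each type 2b $P$-move either to a segment boundary or to a clock-range witness, handling chains with multiple repeats, and avoiding double-counting. In particular, chain sources can be of several kinds (initial configurations, $R$-moves of a neighbor, earlier $P$-moves of $p$ itself when $p$ already appears along the chain) and each case must be enumerated carefully to confirm that the final count does not exceed $2n+2D$.
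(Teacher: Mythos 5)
Your proposal stops exactly where the proof has to start. The observation that a repeated node in the chain forces the two occurrences into different segments is correct, but the ensuing charging scheme --- ``at most a constant (conjecturally $2$) of $p$'s type 2b $P$-moves per segment boundary'' --- is precisely the unproven step, and I see no reason it holds: arbitrarily many type 2b moves of $p$, with distinct terminal levels, can have causality chains whose repeats straddle the \emph{same} segment boundary, so nothing bounds the per-boundary charge by a constant. Similarly, the sketch for the $2D$ term (``finitely many source types'' combined with an $O(D)$ range ``gives the additive $2D$'') would at best multiply, not add, and source-type analysis is the tool for type 2a chains (Lemma~\ref{lem:nbGlopType2a}); for type 2b chains the source is irrelevant once a repeat exists. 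So the decomposition ``$2n$ from boundaries $+$ $2D$ from clock range'' is not substantiated, and the main obstacle you flag is in fact the whole content of the lemma.

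The missing idea is that the bound is obtained by confining the \emph{terminal clock value} of a type 2b move to a window of width $2n+2D$ just above $-B$, and then invoking the distinctness of terminal values among type 2 moves. Concretely: pick the repeat $p_{i_0}=p_{j_0}=q$ with $j_0$ \emph{maximum}, so the suffix after $j_0$ has no repetition and $l-j_0<n$, giving $p.c^{t_l+1}=q.c^{t_{j_0}+1}+(l-j_0)$. Since $q$ is in error after $t_{i_0}$ it heads an $E$-path, hence a $D$-path (Lemma~\ref{lem:E-path}); since its clock must rise between $t_{i_0}$ and $t_{j_0}$ it executes a $U$-move, so by Lemma~\ref{lem:D-path_time} it leaves every $D$-path at some intermediate step, and Lemma~\ref{lem:2theGround} pins its clock at that moment to at most $-B+n$. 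Because a root still exists up to $t_{j_0}+1$ (the configuration is not clean), Lemma~\ref{lem:atMost2D} caps any later increase by $2D$, so $q.c^{t_{j_0}+1}\leq -B+n+2D$ and hence $p.c^{t_l+1}\in[-B,-B+2n+2D)$. Distinct type 2 moves have distinct terminal values, so at most $2n+2D$ of them are type 2b. Your segment-boundary machinery is not needed; what is needed, and absent from your proposal, is this value-window argument built on Lemmas~\ref{lem:2theGround} and~\ref{lem:atMost2D} together with the ``last repeat'' choice bounding the suffix length by $n$.
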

\begin{proof}
  Let $(p, i)$ be a $P$-move of type 2b, and let
  $C=(p_0, t_0)\dots(p_l, t_l)$ be a causality chain such that
  $(p, i)=(p_l,t_l)$.

  By definition, there exists $0\leq i<j\leq l$ such that $p_i=p_j$.
  Choose such a $i_0=i$ and $j_0=j$ with $j_0$ maximum.  We thus have
  that for any $j_0\leq i<j\leq l$, $p_i\neq p_j$ and thus $l-j_0<n$.
  Let $q=p_{j_0}=p_{i_0}$.

  Now $p_l.c^{l+1}=q.c^{j_0+1}+(l-j_0)$.  To prove the lemma, it is
  thus enough to show that $q.c^{j_0+1}\leq n+2D$.

  We have that $q.s^{i_0+1}=E$, thus, by Lemma~\ref{lem:E-path}, $q$
  is the first node of an $E$-path, and thus of a $D$-path in
  $\gamma^{i_0+1}$.

  Since $q.c^{j_0+1}>q.c^{i_0+1}$, $q$ executes a $U$-move $(q, k)$ for
  $i_0<k< j_0$.  By Lemma~\ref{lem:D-path_time}, $q$ belongs to no
  $D$-path in $\gamma^k$.  There thus exists $i_0\leq k'<k$ such that
  $q$ belongs to a $D$-path in $\gamma^{k'}$ and to no $D$-path in
  $\gamma^{k'+1}$.  By Lemma~\ref{lem:2theGround}, $q.c^{k'+1}\leq n$.

  Since $q$ is in error in $\gamma^{j_0}$, by Lemma~\ref{lem:E-path},
  $q$ belongs to an $E$-path in $j_0$.  There thus exists a root $r$
  in $\gamma^{j_0}$.  By Lemma~\ref{lem:atMost2D},
  $q.c^{j_0+1}\leq n+2D$.  The lemma follows.
\end{proof}

Lemmas~\ref{lem:nbP-movesB}-\ref{lem:nbGlopType2b} directly
imply the following lemma.
\begin{lemma}\label{lem:nbPmoves}
  During an execution, there are at most $O(n^3)$
  $P$-moves.
\end{lemma}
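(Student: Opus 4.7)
The proof is essentially an accounting argument combining the bounds already established in Lemmas~\ref{lem:nbGlopType1}, \ref{lem:nbGlopType2a}, and~\ref{lem:nbGlopType2b}, once we observe that by definition every $P$-move of any node $p$ belongs to exactly one of the three types (type 1, type 2a, or type 2b). My plan is to first state this trichotomy explicitly, then bound the number of $P$-moves of each type per node, and finally sum over the $n$ nodes.

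For a fixed node $p$: Lemma~\ref{lem:nbGlopType1} bounds the number of type 1 $P$-moves of $p$ by the total number of $U$-moves $p$ performs in unclean segments, which by Lemma~\ref{lem:nbUmoves} is at most $2Dn$. Lemma~\ref{lem:nbGlopType2a} bounds the type 2a $P$-moves of $p$ by $n(n+1)$. Lemma~\ref{lem:nbGlopType2b} bounds the type 2b $P$-moves of $p$ by $2n+2D$. Since $D \leq n$, each of these bounds is in $O(n^2)$, so the total number of $P$-moves executed by $p$ is $O(n^2)$.

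Summing this per-node bound over the $n$ nodes of the network yields a total of $O(n^3)$ $P$-moves in the whole execution, as claimed. No step here poses a real obstacle, since everything reduces to plugging previous lemmas into the type decomposition; the only thing to be a little careful about is making the trichotomy exhaustive and disjoint, which follows directly from the definitions of types 1, 2a, and 2b given just before Lemma~\ref{lem:nbGlopType1}.
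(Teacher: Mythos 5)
Your proof is correct and follows essentially the same route as the paper, which simply states that the lemma follows directly from the preceding type-decomposition lemmas; you make explicit the same accounting (type 1 bounded via Lemma~\ref{lem:nbUmoves} by $2Dn$ per node, type 2a by $n(n+1)$, type 2b by $2n+2D$, then summing over the $n$ nodes with $D\leq n$). Nothing is missing.
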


\paragraph{$C$-moves.}

\begin{lemma}\label{lem:nbCmoves}
  During an execution, the number of $C$-moves is at most the number
  of $P$-moves plus $n$.
\end{lemma}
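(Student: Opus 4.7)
The plan is a charging argument. Since $R_C$ requires $p.s = E$, every $C$-move $(p, t)$ is preceded either by the initial configuration (if $s_0^p = E$) or by a status-changing $R$- or $P$-move of $p$, meaning one that fires while $p.s = C$. I define the \emph{source} of $(p, t)$ to be the most recent such event. Consecutive $C$-moves of the same node have distinct sources, since after each $C$-move $p.s = C$ and a fresh $C \to E$ transition is required before the next $C$-move. Hence $C_p$ is bounded by $\mathbf{1}_{s_0^p = E}$ plus the number of status-changing $R$-moves of $p$ plus the number of status-changing $P$-moves of $p$. Using Lemma~\ref{lem:nbRmoves} to bound status-changing $R$-moves by $1$, a naive sum gives $\sum_p C_p \leq 2n + \sum_p P_p$, which is one $n$ too many.

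To save this extra $n$, I will prove the key claim: \emph{if $s_0^p = E$, then no $R$-move of $p$ is status-changing}. With this claim, per-node we have $C_p \leq 1 + P_p$ in both cases: when $s_0^p = E$, the budget is $1$ (initial) $+\,0$ ($R$) $+\,P_p$; when $s_0^p = C$, it is $0 + 1 + P_p$. Summing over all $n$ nodes then yields the lemma.

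Proving the key claim will be the main obstacle. I will argue by contradiction: suppose $s_0^p = E$ and $p$ fires $R_R$ at time $t$ with $p.s^t = C$. Let $t_c$ be $p$'s latest $C$-move before $t$. Throughout the interval $(t_c, t]$, $p.s$ must remain $C$, so $p$ performs only $R_U$-moves in it. I will maintain, by induction on the step number, the invariant that for every neighbor $q$ of $p$, either $q.c \leq p.c$ numerically or $\dist{q.c}{p.c} \leq 1$---that is, $p$ is not a root with status $C$ because of $q$. The base case comes from $canClearE(p)$ holding just before $t_c$ (which puts each $q.c$ in $\{p.c - 1, p.c, p.c + 1\}$), combined with a case analysis on what $q$ can simultaneously do at step $t_c$. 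For the inductive step, the key fact is that $R_U$ is the only rule that can increase $q.c$, and its guard requires $p.c \in \{q.c, q.c +_B 1\}$; this prevents $q.c$ from ever overtaking $p.c +_B 1$ in the $\dist$ sense. Error moves by $q$ push $q.c$ into $[-B, 0)$ and either keep $q.c \leq p.c$ or preserve $\dist \leq 1$ via the constraint $i < q.c^s$ in $errorPropag$. The wraparound from $B - 1$ to $0$ is handled automatically by $\dist$'s symmetric treatment. The invariant then contradicts $activeRoot(p)$ holding at time $t$. The most delicate technical points will be the base-case analysis of simultaneous moves at step $t_c$, and verifying that a neighbor's $P$-move into $[-B, 0)$ does not suddenly overtake $p$ when $p.c$ is itself negative.
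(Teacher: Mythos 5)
Your accounting is sound and the lemma does follow from it: every $C$-move is charged to a distinct source (initial status $E$, or an error move fired while the status is $C$), and with your key claim each node executes at most $1$ plus its number of $P$-moves many $C$-moves, which sums to the statement. However, your route is genuinely different from, and much heavier than, the paper's. The paper simply observes that after a $C$-move a node can never again be a root: if it was a root when executing $R_C$, Lemma~\ref{lem:root_RC} says it is not one in the next configuration, and in either case Lemma~\ref{lem:noRootCreation} (roots are never created) keeps it root-free forever; hence the error move that must separate two consecutive $C$-moves cannot be $R_R$, giving the per-node bound directly. Your key claim (if $p$ starts in error, no $R$-move of $p$ is status-changing) is exactly a weakened instance of this fact: a status-changing $R$-move of such a $p$ presupposes an earlier $C$-move, after which Lemmas~\ref{lem:noRootCreation} and~\ref{lem:root_RC} already forbid $p$ from ever satisfying $root(p)$ again --- two lines instead of a fresh induction. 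That said, your invariant (every neighbor $q$ satisfies $q.c \leq p.c$ or $\dist{q.c}{p.c} \leq 1$, i.e., the negation of the status-$C$ root condition) does appear to be inductive: the $R_U$ guard seen from $q$ forces $p.c \in \{q.c, q.c +_B 1\}$ before any increment of $q$, error rules strictly decrease clocks to negative values so they never let $q$ overtake $p$ by more than the allowed slack, and $canClearE(p)$ at step $t_c$ (with the simultaneous-move case analysis you flag) gives the base case. So what your approach buys is a self-contained, purely local proof that does not lean on the root-monotonicity lemmas; what it costs is re-deriving, in a special case, precisely the closure property those lemmas encapsulate, together with bookkeeping (initial-status indicator, status-changing versus ordinary error moves) that the paper's one-line count avoids.
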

\begin{proof}
  Between two $C$-moves, a node $p$ must execute an error move.

  But since, after a $C$-move, $p$ is can no longer be a root (by
  Lemmas \ref{lem:noRootCreation} and~\ref{lem:root_RC}), $p$ cannot
  execute a $C$-move before an $R$-move. Thus $p$ can execute at most
  one more $C$-move than its number of $P$-moves.
\end{proof}

\paragraph{The move complexity theorem.} The following theorem is a direct corollary of
Lemmas~\ref{lem:nbRmoves},~\ref{lem:nbUmoves},~\ref{lem:nbPmoves}, and~\ref{lem:nbCmoves}.

\begin{theorem}\label{thm:stepComplexity}
  Our algorithm converges in  $O(\min(n^2B, n^3))$ moves.
\end{theorem}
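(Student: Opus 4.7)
The plan is a straightforward combination of the per-move-type bounds already established in the preceding lemmas. Since once a clean configuration is reached the only rule that can fire is $R_U$ (and, by Lemma~\ref{lem:noRootCreation}, the configuration stays clean forever after), the move complexity is captured entirely by counting all moves performed during the (at most $n$) \emph{unclean} segments. It therefore suffices to bound each of the four move types ($R$, $U$, $P$, and $C$) globally over those segments and sum the results.

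First, by Lemma~\ref{lem:nbRmoves}, each node performs at most one $R$-move, contributing at most $n$ $R$-moves in total. Next, by Lemma~\ref{lem:nbUmoves}, each node performs at most $2Dn$ $U$-moves across unclean segments, giving a total of at most $2Dn^2$ $U$-moves. For $P$-moves, Lemma~\ref{lem:nbPmoves} already provides the global bound of $O(n^3)$, while summing the per-node bound of Lemma~\ref{lem:nbP-movesB} over the $n$ nodes gives the alternative total bound of $n^2 B$; the number of $P$-moves is therefore $O(\min(n^2 B, n^3))$. Finally, Lemma~\ref{lem:nbCmoves} bounds the total number of $C$-moves by the number of $P$-moves plus $n$, so it satisfies the same asymptotic bound.

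Summing the four contributions, I would observe that since $D \leq n$ and $B \geq 2D+2$, the $U$-move contribution $2Dn^2$ is simultaneously $O(n^3)$ and $O(n^2 B)$, hence absorbed into the $P$- and $C$-move bound; the trivial $R$-move contribution of $n$ is dominated as well. The total number of moves performed before stabilization is thus $O(\min(n^2 B, n^3))$.

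There is essentially no real obstacle in this final step: the heavy lifting has been done in the proofs of the four cited lemmas, and what remains is only to verify the asymptotic absorptions mentioned above and to keep in mind that the move complexity is measured only up to the first clean configuration, so the (potentially infinite) $U$-moves of the clean segment do not enter the count.
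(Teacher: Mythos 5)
Your proposal is correct and follows essentially the same route as the paper, which obtains the theorem as a direct corollary of Lemmas~\ref{lem:nbRmoves}, \ref{lem:nbUmoves}, \ref{lem:nbPmoves} (together with the per-node bound of Lemma~\ref{lem:nbP-movesB} for the $n^2B$ side of the minimum), and~\ref{lem:nbCmoves}. Your explicit check that $2Dn^2$ is absorbed into $O(\min(n^2B,n^3))$ because $D\leq n$ and $2D<B$ is exactly the (implicit) bookkeeping behind the paper's one-line argument.
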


\subsection{Round complexity}

Throughout this section, we consider an arbitrary execution
  $e = \gamma^0 \cdots$.  For all $i\geq 1$, we denote by
  $\gamma^{h_i}$ the last configuration of the $i^{th}$ round ({\em
    n.b.}, $e$ is finite, by Theorem~\ref{thm:stepComplexity}, so
  there is no infinite round in $e$ and from the last configuration of
  $e$, rounds are empty). We also let $\gamma^{h_0} = \gamma^0$.

 In the first $D+1$ rounds, nodes execute
error rules to ``correct'' the initial configuration. During the $D+1$
next rounds, all nodes go back to the correct state.  
The predicate $P_\text{aux}$ has no influence on results of this section as $R_U$ executions along $e$ do not impact our analysis.

\paragraph{The ``error broadcast phase''.}
\begin{lemma}\label{lem:penteDouce}
  For any $h\geq h_{D+1}$, in $\gamma^h$, for any root $r$, we have 
  $r.S=E$ and $p.c\leq -B+d(r,p)$, for any node $p$.
\end{lemma}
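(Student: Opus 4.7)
The plan is to fix a root $r$ in $\gamma^h$ for some $h \geq h_{D+1}$; by Lemma~\ref{lem:noRootCreation}, $r$ is then a root in every configuration $\gamma^{h'}$ with $h' \leq h$. I would then prove by induction on $d \in \{0, 1, \ldots, D\}$ that in every $\gamma^{h'}$ with $h_{d+1} \leq h' \leq h$, every node $p$ with $d(r,p) \leq d$ satisfies $p.c^{h'} \leq -B + d(r, p)$. The statement of the lemma then follows from the case $d = D$ applied to every root of $\gamma^h$, together with the base case $d = 0$ which also yields $r.s^h = E$.

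For the base case, I would show that $r$ is clean (i.e., $r.c = -B$ and $r.s = E$) in every $\gamma^{h'}$ with $h' \in [h_1, h]$. If $activeRoot(r)$ holds in $\gamma^0$, then $R_R$ is enabled at $r$ with top priority, and since $r$'s own variables cannot change unless $r$ executes, $r$ remains $activeRoot$ (hence enabled for $R_R$) until it fires $R_R$, which must occur during round $1$; this makes $r$ clean. Otherwise $r$ is already clean in $\gamma^0$. Once $r$ is clean, only $R_C$ could alter its state, but by Lemma~\ref{lem:root_RC} firing $R_C$ would make $r$ lose its root status — contradicting the fact that $r$ is a root throughout.

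For the inductive step, let $p$ be at distance $d+1$ from $r$ and pick a neighbor $q$ of $p$ at distance $d$; by the induction hypothesis, $q.c^{h'} \leq -B + d$ for every $h' \in [h_{d+1}, h]$. I would first establish persistence: once $p.c \leq -B + d + 1$ holds at some $\gamma^{h'}$ with $h' \geq h_{d+1}$, it continues to hold in every later configuration up to $\gamma^h$. Indeed, $R_U$ is the only rule that can increase $p.c$, and from $p.c = -B + d + 1 < 0$ (so no wrap-around), firing it would require $q.c \in \{-B+d+1, -B+d+2\}$, contradicting $q.c \leq -B + d$.

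The main obstacle is the achievement step: if $p.c^{h_{d+1}} > -B + d + 1$, then $p.c$ must drop to $\leq -B + d + 1$ during round $d + 2$. I would split on $q.s^{h_{d+1}}$. If $q.s = E$, then $R_P(q.c + 1)$ is enabled at $p$, whereas $R_C$, $R_U$, and $R_R$ at $p$ are all disabled (the first two because $q.c$ is too far from $p.c$, and $R_R$ because otherwise applying the base-case argument to $p$ itself would give $p.c^{h_{d+1}} = -B$, contradicting the Case~B assumption). Moreover, as long as $p.c$ stays $\geq -B + d + 2$, the rule $R_C$ at $q$ — which would require $p.c \in \{q.c-1,q.c,q.c+1\}$ — is disabled, so $q$ cannot switch to status $C$; hence $p$ cannot be neutralized during round $d+2$ without firing $R_P$, and firing it drops $p.c$ to $\leq q.c + 1 \leq -B + d + 1$. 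The remaining sub-case $q.s = C$ is vacuous: combined with $q.c \leq -B + d$ and $p \in N(q)$ with $p.c \geq -B + d + 2$, the status-$C$ branch of $root(\cdot)$ forces $q$ to be a root in $\gamma^{h_{d+1}}$, hence a root throughout by Lemma~\ref{lem:noRootCreation}, so the base-case argument applied to $q$ yields $q.s^{h_{d+1}} = E$ — a contradiction.
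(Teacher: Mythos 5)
Your proposal is correct and follows essentially the same route as the paper's proof: fix a root of $\gamma^h$ (a root throughout by Lemma~\ref{lem:noRootCreation}), show via the round definition, the priority of $R_R$, and Lemma~\ref{lem:root_RC} that it is clean from round~$1$ on, then induct on the distance to the root, using the closer neighbor $q$ (whose status-$C$ case is excluded because it would make $q$ a root, hence clean), the continual enabledness of $R_P$ to force a clock drop within one round, and the blocking of $R_U$ for persistence. The only differences are organizational (e.g., your explicit non-neutralization argument and splitting on $q$'s status at the start of the round), not substantive.
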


\begin{proof}
  If $\gamma^h$ contains no root, then the lemma holds.  Otherwise,
  let $r$ be any such root. By Lemma~\ref{lem:noRootCreation}, $r$ is
  also a root in all $\gamma^i$ with $i\leq h$.

  We first prove that $r.c^{h'} = -B$ and $r.s^{h'} = E$ for any $h'$
  with $h_1\leq h'\leq h$.  This claim will establish the first part
  of the lemma and the base case of the next induction.

  First, during the first round, while $r.c \neq -B$ or $r.s \neq E$,
  $r$ is enabled for $R_R$. Hence, by definition of a round and Rule
  $R_R$, there is a configuration in the first round where $r.c = -B$
  and $r.s = E$. From such a configuration, the next rule $r$ may
  execute is $R_C$. Now, by executing $R_C$, $r$ is not a root
  anymore, by Lemmas~\ref{lem:noRootCreation}-\ref{lem:root_RC}. So,
  $r$ cannot execute $R_C$ before the system reaches Configuration
  $\gamma^h$. Hence, for any $h'$ with $h_1\leq h'\leq h$, $r.c^{h'} =
  -B$ and $r.s^{h'} = E$.

  We now prove by induction on $j\geq 1$ that for all nodes $p$ such
  that $d(p, r)< j$, $p.c^{h'}\leq -B+d(r,p)$ with $h_j\leq h'\leq
  h$.

  If $j=1$, then $p=r$ and the base case is trivial from the previous
  claim.
  Suppose now that $j>1$.  Let $p$ be such that $d(r, p)=j$, and let
  $q\in N(p)$ be such that $d(r, q)=j-1$. By induction hypothesis, we
  have $q.c^{h'}\leq -B+j-1$ with $h_{j-1}\leq h'\leq h$.
  
  We first prove that there exists $h'$ such that $h_{j-1}\leq h'\leq
  h_j$ such that $p.c^{h'}\leq -B+j$.  To do so, assume, by the
  contradiction, that for every $h'$ with $h_{j-1}\leq h'\leq h_j$,
  $p.c^{h'}> -B+j$, which implies that $p.c^{h'}\geq q.c^{h'}+2$.
  From the previous claim, we also know that $p$ is not a root in any
  $\gamma^{h'}$.  Assume that $q.s^{h'}=C$ for some $h'$ with
  $h_{j-1}\leq h'\leq h_j$.  Then, $q$ is a root in $\gamma^{h'}$ and
  in $\gamma^0$, by Lemma~\ref{lem:noRootCreation}.  From the previous
  claim, we know that $q.s^{h''}=E$ for any $h''$ such that $h_1\leq
  h''\leq h$. Now, since $j-1 \geq 1$, we obtain a contradiction.
  Thus, $q.s^{h'}=E$ and $p.c^{h'}\geq q.c^{h'}+2$ for any $h'$ with
  $h_{j-1}\leq h'\leq h_j$.  Hence, $p$ is enabled for executing
  $R_P(i)$ with $i\leq q.c^{h'}+1\leq -B+j$ in every configuration
  $\gamma^{h'}$. By definition of a round and Rules $R_P(i)$, there
  exists a configuration $h''$ with $h_{j-1}\leq h''\leq h_j$, where
  $p.c^{h'} \leq -B+j$, a contradiction.

Finally, recall that $q.c^{h''}\leq -B+j-1$ for every $h''$ such that
$h'\leq h''\leq h_{D+1}$, by induction hypothesis.  So, $p.c^{h''}\leq
-B+j$ since $p$ cannot execute $R_U$. Hence, we are done with the
induction and the lemma holds.
\end{proof}

\begin{lemma}\label{lem:errorBroadcastPhase}
  For any $h\geq h_{D+1}$, $\gamma^h$ is almost clean.
\end{lemma}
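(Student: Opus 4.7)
My plan is to prove the lemma in two stages: first establish almost cleanness at the single configuration $\gamma^{h_{D+1}}$, and then propagate it to every $\gamma^h$ with $h \geq h_{D+1}$ using Lemma~\ref{lem:almostCleanClosed} (applied inductively step by step).

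By Lemma~\ref{lem:almostClean}, to show that $\gamma^{h_{D+1}}$ is almost clean it suffices to show that no node can execute an error rule in this configuration. The rule $R_R$ is handled directly by Lemma~\ref{lem:penteDouce}: every root $r$ in $\gamma^{h_{D+1}}$ satisfies $r.s=E$, and instantiating the bound $p.c \leq -B + d(r,p)$ at $p=r$ together with the fact that clocks are in $[-B,B[$ yields $r.c=-B$; hence $activeRoot(r)$ is false.

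The interesting case is $R_P$. Assume, for contradiction, that some node $p$ is enabled for $R_P(i)$ in $\gamma^{h_{D+1}}$. Then there is a neighbor $q$ of $p$ with $q.s=E$ and $p.c \geq q.c+2$. By Lemma~\ref{lem:E-path}, $q$ is the first node of an $E$-path ending at some root $r^*$. Along this path the clocks strictly decrease from $q.c$ down to $r^*.c = -B$, so the path has length at most $q.c+B$, which gives $d(r^*,q) \leq q.c+B$ and thus $d(r^*,p) \leq q.c+B+1$ since $p$ is adjacent to $q$. Applying Lemma~\ref{lem:penteDouce} with the root $r^*$ and node $p$ then yields
\[ p.c \;\leq\; -B + d(r^*,p) \;\leq\; -B + (q.c+B+1) \;=\; q.c+1, \]
contradicting $p.c \geq q.c+2$. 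Hence no $R_P$ rule is enabled either, so $\gamma^{h_{D+1}}$ is almost clean.

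The main subtlety is the inequality above: one must combine the \emph{upper} bound $p.c \leq -B + d(r^*,p)$ provided by Lemma~\ref{lem:penteDouce} with a matching \emph{lower} bound $d(r^*,q) \geq $ (length of the $E$-path) $\geq q.c + B$ extracted from Lemma~\ref{lem:E-path}. Note that if there are no roots at all in $\gamma^{h_{D+1}}$ the argument is even simpler: then no node can be in error (otherwise Lemma~\ref{lem:E-path} would produce a root), so neither $R_R$ nor $R_P$ is enabled and almost cleanness is immediate. In either case, once $\gamma^{h_{D+1}}$ is almost clean, Lemma~\ref{lem:almostCleanClosed} yields the statement for every $h \geq h_{D+1}$ by a straightforward induction on $h$.
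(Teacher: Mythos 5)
Your proof is correct and follows essentially the same route as the paper: establish that no error rule is enabled in $\gamma^{h_{D+1}}$ using Lemma~\ref{lem:penteDouce} (for $R_R$) together with Lemma~\ref{lem:E-path} (for $R_P$), then extend to all $h\geq h_{D+1}$ via Lemma~\ref{lem:almostCleanClosed}. Your inequality $d(r^*,q)\leq q.c+B$ is just an algebraic rearrangement of the paper's $q.c\geq r.c+d(r,q)$ (with $r.c=-B$), so the key contradiction with Lemma~\ref{lem:penteDouce} is the same.
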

\begin{proof}

   By Lemma~\ref{lem:almostCleanClosed}, we only need to show that
   $\gamma^{h_{D+1}}$ is almost clean. To do so and according to
   Lemma~\ref{lem:almostClean}, we now show that no node can execute
   an error rule in $\gamma^{h_{D+1}}$.
  The fact that no node can execute the rule
  $R_R$ in $h_{D+1}$ follows from Lemma~\ref{lem:penteDouce}.
Assume that in $\gamma^{h_{D+1}}$ a node $p$ verifies $errorPropag(p,
i)$.  There exists $q\in N(p)$ such that $q.c<p.c-1$ and $q.s=E$.  By
Lemma~\ref{lem:E-path}, there exists a $E$-path $P$ of length $l$ from
$q$ to a root $r$.  This path implies that $q.c\geq r.c+l\geq r.c+d(r,
q)$.  But then $p.c>q.c+1\geq r.c+d(r, q)+1\geq r.c+d(r, p)$, which
contradicts Lemma~\ref{lem:penteDouce}.  Hence, we conclude that no
node verifies $errorPropag(p, i)$ in $\gamma^{h_{D+1}}$, and we are
done.
\end{proof}


\paragraph{The ``error cleaning phase''.}

\begin{lemma}\label{lem:errorCleaningPhase}
  For any $h\geq h_{2D+2}$, $\gamma^h$ is clean.
\end{lemma}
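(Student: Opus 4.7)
The plan is to combine Lemmas~\ref{lem:errorBroadcastPhase} and~\ref{lem:almostCleanClosed} to guarantee that every configuration $\gamma^h$ with $h\geq h_{D+1}$ is almost clean, and then to prove by induction on $k\in\{0,1,\ldots,D+1\}$ the stronger claim that, in $\gamma^{h_{D+1+k}}$, no erroneous node has clock value strictly greater than $-B+D-k$. Applying this with $k=D+1$ will immediately give the lemma: clock values cannot drop below $-B$, so no erroneous node (and in particular no root) can survive in $\gamma^{h_{2D+2}}$; closedness of cleanness (Lemma~\ref{lem:noRootCreation}) will then propagate the property to all $h\geq h_{2D+2}$.

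I would first record two basic invariants valid throughout the almost-clean phase. (a) Erroneous clocks are \emph{frozen}: a clock value can only change through $R_U$, which requires status $C$, while error rules are disabled by Lemma~\ref{lem:almostClean}. (b) The only rule for which an erroneous node can be enabled is $R_C$, because $R_U$ demands status $C$, $R_R$ demands $activeRoot$ which fails at every almost-clean root (there $r.c=-B$ and $r.s=E$), and $R_P$ is excluded by Lemma~\ref{lem:almostClean}.

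For the base case $k=0$, I would invoke Lemma~\ref{lem:penteDouce}: if $\gamma^{h_{D+1}}$ contains a root $r$, then every node $p$ satisfies $p.c\leq -B+d(r,p)\leq -B+D$; and if it contains no root at all, $\gamma^{h_{D+1}}$ is already clean and so is every later configuration, by Lemma~\ref{lem:noRootCreation}, ending the argument. For the inductive step, I would pick any erroneous node $p$ with $p.c=-B+D-k$ in $\gamma^{h_{D+1+k}}$ and argue that $p$ clears before $\gamma^{h_{D+2+k}}$. Invariant~(a) combined with the induction hypothesis guarantees that no erroneous neighbor of $p$ ever holds clock $p.c+1$ during round $D+2+k$, while almost-cleanness supplies the modular distance condition of $canClearE(p)$; hence $p$ remains enabled for $R_C$ throughout the round. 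By invariant~(b), $p$ cannot be neutralized, so the definition of a round forces $p$ to actually execute $R_C$. Since error rules are disabled (no new erroneous node appears) and erroneous clocks are frozen, the maximum erroneous clock drops by at least one, completing the induction.

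The delicate point will be confirming that the guard $canClearE(p)$ truly persists throughout the whole round and not only at its start. The only clause that could break is $q.c=p.c+1 \Rightarrow q.s=C$: but a correct neighbor $q$ with $q.c=p.c+1$ cannot execute $R_U$, since $p$ itself witnesses $p.c \notin \{q.c,q.c+_B 1\}$ and so violates $unisonMove(q)$; and a correct neighbor $q$ with $q.c\leq p.c$ that executes $R_U$ lands on a clock in $\{p.c,p.c+1\}$ while keeping status $C$, again preserving the clause.
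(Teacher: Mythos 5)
Your proof is correct, but it runs a different induction from the paper's. The paper also works inside the almost-clean suffix guaranteed by Lemma~\ref{lem:errorBroadcastPhase}, but its decreasing measure is the maximum length of an $E$-path: it shows by induction that after round $D+1+i$ no $E$-path of length greater than $D-i$ survives, with the base case coming from Lemma~\ref{lem:penteDouce} (an $E$-path from $p$ to its root $r$ has length exactly $d(p,r)\leq D$) and the inductive step arguing contrapositively that the first node of a surviving $E$-path must have been blocked by an erroneous neighbour with larger clock, which extends the path by one node in the previous round's final configuration and contradicts the bound there. You instead take as measure the maximum clock value of erroneous nodes, bounded by $-B+D-k$ after round $D+1+k$, and argue progress directly: erroneous clocks are frozen, no new errors appear, and the maximal-clock erroneous nodes are persistently enabled for $R_C$ (your check that neighbours' $R_U$ moves cannot invalidate the guard is indeed the delicate point, and it is handled correctly), so they cannot be neutralized and must clear within one round. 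The two arguments capture the same phenomenon --- errors disappear from the top --- but yours avoids manipulating $E$-paths altogether and replaces the paper's backward-tracing step by a direct persistence argument, which is arguably cleaner; the paper's version reuses the $E$-path machinery already developed (Lemma~\ref{lem:E-path}) and keeps the geometric picture of error paths shrinking toward their roots. One small point you gloss over: concluding ``no erroneous node implies no root'' at $\gamma^{h_{2D+2}}$ uses almost-cleanness (a correct root would require two neighbouring clocks at distance at least $2$), which you have already established, so this is only a presentational gap.
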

\begin{proof}
   We prove by induction on
  $0\leq i\leq D+1$ that for any $j\geq h_{D+1+i}$, $\gamma^j$
  contains no $E$-path of length $>D-i$.
  According to Lemma~\ref{lem:errorBroadcastPhase}, 
  $\gamma^j$ is almost clean.

  Suppose that $i=0$.  In $\gamma^{h_{D+1}}$, a root is in a $E$-path
  (by definition of almost clean).  If $\gamma^{h_{D+1}}$ contains no
  $E$-path, then $\gamma^{h_{D+1}}$ is clean as it contains no root.
  Otherwise, let $P$ be a $E$-path.  Let $p$ and $r$ be its first and
  last node, and let $l$ be its length.  By definition of a $E$-path,
  $p.c-r.c\geq l\geq d(p, r)$.  By Lemma~\ref{lem:penteDouce},
  $p.c-r.c\leq d(p, r)$.  We thus have that $l=d(p, r)\leq D$.  The
  base case thus holds.

  Suppose that the hypothesis holds for $i\geq 0$. Again, if
  $\gamma^{h_{D+1+i}}$ contains no $E$-path, then it is clean.  
  Let $P$ be an $E$-path in $\gamma^{h_{D+1+i}}$.  
  Let $p$  be the first node of $P$.
  Since no nodes can execute an
  error rule during the round $D+1+i$,
   then $P$ is also an $E$-path in $\gamma^{h_{D+i}}$.
  Moreover, $R_C$ is not enabled on $p$ in $\gamma^{h_{D+i}}$;
  otherwise, $p$
  would have done a move during the round $D+i+1$, and $p$ would not been 
  in a  $E$-path in $\gamma^{h_{D+i+1}}$.

  There thus exists $q\in N(p)$ such that $q.c>p.c$ which is in error
  in $\gamma^{h_{D+i}}$.  
  The path $qP$  is a $E$-path in $\gamma^{h_{D+i}}$.  
  By induction, the
  length of $qP$ is at most $D-i$, and thus the length of $P$ is at
  most $D-(i+1)$.  The hypothesis thus holds for $i+1$.

 For $h\geq h_{2D+2}$, $\gamma^h$ contains
  no $E$-path, which implies that $\gamma^h$ is clean.
\end{proof}

\paragraph{The round complexity proof.} Lemmas~\ref{lem:errorBroadcastPhase} and~\ref{lem:errorCleaningPhase}
directly imply that
\begin{theorem}\label{thm:roundComplexity}
  Our algorithm converges in $2D+2$ rounds.
\end{theorem}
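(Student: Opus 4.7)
The plan is to observe that this theorem is an immediate consequence of Lemma~\ref{lem:errorCleaningPhase} combined with the characterization of legitimate configurations. First I would recall the definition given in Section~\ref{sect:algo}: a configuration is \emph{legitimate} exactly when only rule $R_U$ is enabled. By Lemma~\ref{lem:clean}, this happens precisely when the configuration is clean (contains no roots). Therefore it suffices to apply Lemma~\ref{lem:errorCleaningPhase} to conclude that $\gamma^h$ is clean, and hence legitimate, for every $h \geq h_{2D+2}$, which yields the claimed bound of $2D+2$ rounds.

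The genuinely difficult work has already been done in the two preceding lemmas, which split the analysis into the ``error broadcast phase'' and the ``error cleaning phase''. Lemma~\ref{lem:errorBroadcastPhase}, leveraging the more delicate Lemma~\ref{lem:penteDouce}, shows that within the first $D+1$ rounds each root has reset its clock to $-B$ and status to $E$, and this ``gentle decrease'' structure ($p.c \leq -B + d(r,p)$) has propagated to the whole network, yielding an almost clean configuration. Then Lemma~\ref{lem:errorCleaningPhase} shows that, once a configuration is almost clean, the $E$-paths (which the gentle-decrease bound constrains to length at most $D$) shrink by one edge per round, because the first node of any $E$-path has $R_C$ enabled at each step. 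Consequently, $D+1$ additional rounds suffice to eliminate every $E$-path, and therefore every root. Combining the two phase bounds gives the overall $2(D+1) = 2D+2$ round bound, so no further argument is needed for the theorem itself; the only ``obstacle'' is recognizing that clean configurations are exactly the legitimate ones via Lemma~\ref{lem:clean}.
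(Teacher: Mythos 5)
Your proposal is correct and matches the paper's own argument: the theorem is stated there as a direct corollary of Lemmas~\ref{lem:errorBroadcastPhase} and~\ref{lem:errorCleaningPhase}, exactly the two-phase decomposition you invoke. Your additional remark that clean configurations are the legitimate ones (via Lemma~\ref{lem:clean}) is the same identification the paper makes at the end of Section~\ref{sec:legitimate}, so no new ideas are needed beyond what you wrote.
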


\section{Synchronizer}\label{sec:synchroniseur}

Using folklore ideas (see, {\em e.g.},~\cite{AwKuMaPaVa93}
and~\cite{EmKe21}), we can use our unison algorithm to simulate any
synchronous self-stabilizing algorithm in an asynchronous environment
under an unfair daemon. We now study such a simulation.

\subsection{Time definition}
In everyday life, we have a distinction between the value of a clock
(modulo 24 hours) and the time. Both are obviously linked.  We would
like to make a similar distinction here.

Let $e= \gamma^0\cdots$ be an execution such that $\gamma^0$ is clean,
and so almost clean too.  According to Lemma~\ref{lem:colorValue},
there exists $c_{\min}\in [-B, B[$ and $\Delta\leq D$ such that
    $\{p.c^0\mid p\in V\}=\{c_{\min}+_B i\mid 0\leq i\leq \Delta\}$.
    The \emph{birth time} of $p$ is $time^0(p)=i-\Delta$ where $i$
    satisfies $p.c^0=c_{\min}+_B i$.  Moreover,
    $time^{j+1}(p):=time^j(p)+1$ whenever $p$ executes the rule $R_U$
    in $\gamma^j\mapsto\gamma^{j+1}$ (otherwise,
    $time^{j+1}(p):=time^j(p)$).

An important remark is that if $time(p)=time(q)$, then $p.c=q.c$.
Moreover, $unisonMove(p)$ is true if and only if $time(p)$ is a local
minimum.  Note that the birth time of a node is in $[-D,0]$.  

\subsection{The algorithm}

We consider a synchronous self-stabilizing algorithm $\origAlg$ which
runs in a variant of the atomic-state model which is at least as
expressive as the model of our unison algorithm.  This means that we
should be able to encode the macros \texttt{Macro1} and
\texttt{Macro2} (defined page~\pageref{macro}) in the model of
$\origAlg$. In the following, we denote by $T$ the stabilization time
of $\origAlg$ (in synchronous settings) and by $Trans(\origAlg)$ the
simulation of $\origAlg$ using our unison algorithm.

The basic idea of the simulation is that the execution of $\origAlg$
is driven by the unison algorithm. To that goal, each node $p$ stores
its last two states in $\origAlg$ using two
additional variables: $p.old$ and $p.curr$. Once the unison algorithm
has stabilized, if $p$ is a local minimum (w.r.t. the time of the
unison) and is about to increase its clock (by performing Rule $R_U$),
it computes its next state $\widehat{\origAlg}(p)$ in $\origAlg$. It does so by selecting for each neighbor $q$ the variable
$q.curr$ if $p.c=q.c$, and $q.old$ otherwise ({\em i.e.}, when
$q.c=p.c+_B 1$).

We thus modify the rule $R_U$ in the following way:
\begin{eqnarray*}
  R_U : unisonMove(p)&\wedge& (P_\text{aux}(p)\vee \exists
                                q\in N(p), q.c=p.c+_B 1) \\
                       &\longrightarrow&p.old:=p.curr;\\
                                &&p.curr:=\widehat{\origAlg}(p);\\
                                &&p.c:=p.c+_B1 
\end{eqnarray*}

Let us consider the execution after the unison has stabilized ({\em i.e.}, the
suffix of the execution starting from the first clean configuration).  The
time of each node is thus defined.  If $time(p)^i=t$, then we set
$\texttt{st}_p^t=p.curr$.  Since the state of $p$ changes if and only
if its time does, this is well defined.  For any positive $t$,
we can then define the configuration $\eta^t$ of
$\origAlg$ in which the state of each node $p$ is $\texttt{st}_p^t$.
The folklore claim is that the sequence $\eta^0\cdots$ is a
synchronous execution of $\origAlg$. 



When $P_\text{aux}(p)$ is always $true$, the clocks of the unison constantly
change. Therefore, even if $\origAlg$ is silent, its simulation is
not. In order to obtain a silent simulation in such a case, we instantiate
the predicate $P_\text{aux}(p)$ such that $p$ increments its clock (and thus
performs a simulation step) only if the simulation step makes its
state change.
More precisely, we define two possible predicates as follows:
\begin{eqnarray*}
  P_\text{greedy}(p) &=& true\\
  P_\text{lazy}(p) &=& p \text{ is enabled in $\origAlg$}
\end{eqnarray*}
and say that our synchronizer runs in \emph{greedy mode} if
$P_\text{aux}=P_\text{greedy}$, and that it runs in \emph{lazy mode} if
$P_\text{aux}=P_\text{lazy}$.

\subsection{Complexity analysis}

\paragraph{Greedy mode.} In greedy mode, Lemma~\ref{lem:vivant}  implies that the algorithm
is never silent.  Nevertheless, it is easy to see that, once unison
has been reached, all nodes with minimum time can be activated.  And
since nodes cannot be deactivated unless executing $R_U$, after one
round, the minimum time of a node has increased by at least one.
Thus, after $O(D)$ rounds (and $O(\min(n^2B, n^3))$ steps), each round
of $Trans(\origAlg)$ simulates at least one round of $\origAlg$.

\paragraph{Lazy mode.}

\begin{lemma}
  In lazy mode, the maximum time of each node is at most $T$.
\end{lemma}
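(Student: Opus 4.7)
The plan is to prove the lemma by contradiction. Suppose some node eventually achieves a time strictly greater than $T$, and let $i$ be the smallest step index with $M_i := \max_{p\in V} time^i(p) > T$; so $M_i = T+1$ and at step $\gamma^{i-1}\mapsto\gamma^i$ some node $p$ executes $R_U$, moving from time $T$ to $T+1$. I then examine the guard of $R_U$ at this step. Since $unisonMove(p)$ must hold, every neighbor $q$ of $p$ satisfies $q.c^{i-1}\in\{p.c^{i-1}, p.c^{i-1}+_B 1\}$, so $time^{i-1}(q)\in\{T,T+1\}$. By the minimality of $i$ no node has time exceeding $T$ in $\gamma^{i-1}$, hence every neighbor of $p$ has time exactly $T$ in $\gamma^{i-1}$. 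This rules out the disjunct $\exists q\in N(p),\ q.c = p.c +_B 1$ of $R_U$'s guard, so the remaining disjunct $P_\text{aux}(p)=P_\text{lazy}(p)$ must be true: $p$ is enabled in $\origAlg$ with respect to its current state and the states of its neighbors.

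To conclude, I use the (folklore) fact, informally established at the end of the subsection \emph{Time definition}, that the simulation gives rise to a synchronous execution $\eta^0,\eta^1,\ldots$ of $\origAlg$. Since $p$ has time $T$ and all its neighbors also have time $T$ in $\gamma^{i-1}$, the value $p.curr$ is $\texttt{st}_p^T$ and each $q.curr$ is $\texttt{st}_q^T$; in particular, $p$ being enabled in $\origAlg$ at $\gamma^{i-1}$ coincides with $p$ being enabled in $\eta^T$. Under the natural hypothesis that $\origAlg$ is silent (which is precisely the setting for which lazy mode is designed), $\origAlg$ stabilizes to a terminal configuration within $T$ synchronous rounds starting from $\eta^0$, hence $\eta^T$ admits no enabled node. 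This contradicts the enabledness of $p$ derived above, completing the proof.

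The main obstacle is the formal justification that $\eta^T$ is a well-defined configuration reached by a bona fide synchronous execution of $\origAlg$, even though, in the simulation, some nodes may not have actually reached time $T$ at the moment under consideration. The natural remedy is to extend $\texttt{st}_p^t$ beyond the maximum time reached by $p$ by keeping it constant: in lazy mode, a node fails to advance exactly when it is not enabled in $\origAlg$ and no neighbor is ahead of it, which matches a quiescent node in the synchronous execution. Once this extension is set up carefully so that $\eta^0\eta^1\cdots$ is genuinely a synchronous execution of $\origAlg$ from $\eta^0$, the rest of the argument goes through verbatim.
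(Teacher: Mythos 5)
Your proof is correct and takes essentially the same route as the paper's: the paper phrases the argument as an induction preserving the invariant that no time exceeds $T_{\eta^0}\leq T$, whereas you take the contrapositive via a first violating $R_U$-move, but both hinge on the same two facts — a node at maximal time has no neighbor ahead, so $P_\text{lazy}$ must hold, yet the simulated configuration at that time is already terminal. Your closing remark about extending $\texttt{st}_p^t$ constantly so that $\eta^T$ is well defined addresses a point the paper leaves implicit in its folklore simulation claim, so it is a welcome (not a problematic) addition.
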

\begin{proof}
  Let $T_{\eta^0}\leq T$ be the number of rounds that $\origAlg$ takes
  to be silent from the clean configuration $\eta^0$.  Let $e=
  \eta^0\cdots$ be an execution starting from $\eta^0$.  We claim that
  no node $p$ has the time $T_{\eta^0}+1$ along $e$.  Let $p$ be any
  node.  By time definition, $time^0(p)\leq 0$.

  Suppose that $\eta^i\mapsto\eta^{i+1}$ is 
  such that no
  node has a time greater than $T_{\eta^0}$ in $\eta^i$.
  If $time^i(p)\leq T_{\eta^0}-1$, then
  $time^{i+1}(p)\leq T_{\eta^0}$.  
  Otherwise, $time^i(p)=T_{\eta^0}$
  and no neighbor $q$ of $p$ is such that $q.c=p.c+_B 1$.  
  Moreover,
  by definition of $T_{\eta^0}$, 
  $p$ is not enabled in $\origAlg$.
  So $p$ cannot execute the rule $R_U$ 
  and $time^{i+1}(p)=T_{\eta^0}$ in $\eta^{i+1}$.  
\end{proof}

\begin{lemma}\label{lem:lazy_silent}
  If $\origAlg$ reaches a terminal configuration in at most $T$
  synchronous rounds, then $Trans(\origAlg)$ reaches a terminal
  configuration in at most $nT +nD$ moves from a clean configuration.
\end{lemma}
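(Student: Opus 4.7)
The plan is to exploit the observation that, in a clean configuration, essentially nothing can go wrong anymore: only the rule $R_U$ is available, and being clean is a closed property. The move count then reduces to a pure time-counting argument that piggybacks on the time bound of the preceding lemma.

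First, I would argue that every move performed during the execution is a $U$-move. By Lemma~\ref{lem:clean}, in a clean configuration only $R_U$ can be executed; and, as already pointed out just after the definition of \emph{clean}, being clean is a closed property (via Lemma~\ref{lem:noRootCreation}). Hence the entire execution remains clean, and the only rule ever executed is $R_U$. In particular, each move by a node $p$ strictly increases $time(p)$ by exactly one.

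Second, I would combine this with the previous lemma, which states that in lazy mode $time^i(p) \leq T$ holds throughout the execution. Since the birth time of any node $p$ lies in $[-D, 0]$, node $p$ can perform at most $T - time^0(p) \leq T + D$ many $U$-moves in the whole execution. Summing over the $n$ nodes immediately gives the desired bound of $n(T+D) = nT + nD$ moves.

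Finally, for termination, I would note that since the total number of moves is bounded and each step of the execution contains at least one move, the execution must be finite. By the definition of a maximal execution, a finite execution ends in a terminal configuration, which is exactly what is required. I do not anticipate any real obstacle here: all the nontrivial work has already been carried out in the time bound of the previous lemma and in the earlier characterization of clean configurations, so the current lemma is essentially a counting corollary.
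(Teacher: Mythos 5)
Your proof is correct and follows essentially the same route as the paper's own argument: bound each node's number of $R_U$ executions by $T+D$ using the birth-time range $[-D,0]$ and the bound $time(p)\leq T$ from the preceding lemma, then sum over the $n$ nodes. You merely make explicit two points the paper leaves implicit (that cleanness is closed so only $U$-moves occur, and that a bounded move count forces a terminal configuration), which is fine.
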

\begin{proof}
Let $e= \gamma^0\cdots$ be
an execution starting from a clean configuration. 
The birth time of a  node $p$ is in $[-D,0]$.
No node has a time greater than $T$ along $e$.
So a node executes the rule $R_U$ at most $T+D$ times along $e$.
\end{proof}

Because of the previous lemma, in lazy mode, if $\origAlg$ is silent,
then all executions of $Trans(\origAlg)$ are finite.
The round analysis is a bit more involved.  We split the analysis in
two parts: the number of rounds so that all nodes have positive time, and the
additional number of rounds to reach silence.

In the following, we consider a (finite) execution $e=\gamma^0\cdots
\gamma^f$ be an execution where $\gamma^0$ clean. As previously, we
denote by $\gamma^{h_i}$ the last configuration of the $i^{th}$ round
of $e$, for any $i\geq 1$, and we let $\gamma^{h_0} = \gamma^0$.

\begin{lemma}
\label{lem:2Drounds}
  In lazy mode, the time of all nodes is positive after at most $2D$
  rounds.
\end{lemma}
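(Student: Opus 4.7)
Since birth times lie in $[-D, 0]$ and times are non-decreasing along clean executions (by Lemma~\ref{lem:clean}, only $R_U$ can fire), the lemma reduces to bounding how quickly the global minimum time advances. My plan is to prove by induction on $k$ that after at most $2k$ rounds, every node $p$ satisfies $time(p) \geq -D + k$; setting $k = D$ then yields the conclusion.

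The key enabling fact, derivable from Lemma~\ref{lem:almostClean} (preserved by Lemma~\ref{lem:almostCleanClosed}) and the definition of $unisonMove$, is that in any clean configuration the times of two neighbors differ by at most $1$. Consequently, whenever not all nodes share the same time, connectivity of $G$ forces the existence of two adjacent nodes $p, q$ with $time(q) = time(p) + 1$; the lower one $p$ is then enabled for $R_U$ via the neighbor-ahead disjunct of its guard, independently of $P_\text{lazy}$. This observation ensures the ``trailing edge'' of the time landscape always contains nodes enabled to advance, even in lazy mode.

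For the inductive step, starting from a configuration with global minimum time $\tau$, I would argue that two consecutive rounds suffice to raise the minimum to at least $\tau + 1$. In the first of the two rounds, every node at time $\tau$ with a neighbor at time $\tau + 1$ (the ``boundary'' of the minimum-time region) is initially enabled and must execute $R_U$ by the definition of a round. After this round, the set of nodes still at time $\tau$ has shrunk strictly, and its former interior nodes now see freshly-advanced neighbors at $\tau + 1$. In the second round these new boundary nodes are initially enabled and likewise fire; a careful case analysis shows that every ``peel layer'' of the minimum-time region is fully removed within two rounds, guaranteeing that the global minimum advances by at least one.

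The main obstacle is making the ``two rounds per unit of advance'' bound precise and uniform under the adversarial distributed daemon. A worst-case witness is the path $v_0 v_1 \cdots v_D$ with initial time $-i$ at $v_i$: direct simulation confirms the minimum advances by exactly one every two rounds (odd rounds peel the trailing edge, even rounds catch up the next layer), matching the $2D$ bound exactly. Generalizing the argument to arbitrary topologies likely requires either an inductive argument tracking the distance from a minimum-time node to the maximum-time region, or a potential function combining the current minimum with a measure of the ``depth'' of the minimum-time set; handling cases where the minimum-time region has complex shape (disconnected components, thick plateaus) is the most delicate point.
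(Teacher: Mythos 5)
Your reduction to ``the global minimum time advances by at least one every two rounds'' is the weak point, and it is in fact false in lazy mode. Take the path $v_0v_1\cdots v_D$ in a clean configuration where all clocks are equal except that $v_D$ is one ahead (so the times are $-1,-1,\dots,-1,0$), and suppose $P_\text{lazy}$ is false at every node (the simulated algorithm is already terminal). Then the only enabled node is $v_{D-1}$: every other node of the plateau satisfies $unisonMove$ but has no neighbor ahead, hence no enabled rule. So in each round exactly one new node fires, the plateau peels a single layer per round, and the global minimum stays at $-1$ for $D-1$ rounds before $v_0$ finally moves. Your ``careful case analysis showing every peel layer of the minimum-time region is fully removed within two rounds'' therefore cannot exist for thick plateaus --- precisely the case you flagged as delicate --- and your worst-case witness (a strictly decreasing path) is not the hard instance. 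Note also that the inductive invariant itself, ``after $2k$ rounds every node has $time \geq -D+k$,'' is true, but it cannot be closed by induction on $k$ using only the value of the global minimum: the configuration reached after $2(k-1)$ rounds is not arbitrary, and the slack that saves the example above (a wide minimum plateau forces the minimum to be only slightly negative, because the time landscape is $1$-Lipschitz and some node has birth time $0$) is exactly the information your potential discards.

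The paper's proof keeps that information by anchoring the induction at a node $s$ with birth time $0$ and doing induction on the round index $j$ with the quantity $\lambda(p,i)=2i+D+d(p,s)$: whenever $\lambda(p,i)\leq j$ and $i\leq 0$, then $time^{h_j}(p)\geq i$. Two facts make this work: first, $time^0(p)\geq -d(p,s)$, so nodes far from $s$ start proportionally less late; second, in the inductive step a node stuck at time $i-1$ has, by induction, a neighbor closer to $s$ already at time $i$, so the disjunct $\exists q\in N(p),\ q.c=p.c+_B1$ enables $R_U$ independently of $P_\text{lazy}$ and one round suffices for that node to advance. This is a pipelined, distance-based accounting (the wake-up wave needs about $d(p,s)$ rounds to reach $p$, then two rounds per further increment), and evaluating it at $i=0$ gives $D+d(p,s)\leq 2D$ rounds. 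To repair your argument you would have to replace the global-minimum potential by something of this distance-tracking kind; as written, the inductive step has a genuine gap.
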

\begin{proof}
  Let $s$ be a node with birth time zero.  Let $\lambda(p, i):=2i+D+d(p, s)$.
  We prove by induction on $0\leq j\leq 2D$ that if $i\leq 0$ and
  $\lambda(p, i)\leq j$, then $time^{h_j}(p)\geq i$.

  Suppose that $j=0$. If $\lambda(p, i) \leq 0$, then $i \leq -d(p,
  s)$. As $\gamma^0$ is clean, we have $time^{h_0}(p)time^0(p)\geq
  -d(p, s) \geq i$ .  So, the base case holds.

  Suppose that $j>0$.  Let $(p, i)$ be such that $\lambda(p, i)=j$.
  For any $q\in N[p]$,
  $\lambda(p, i)-\lambda(q, i-1)=2+d(p, s)-d(q, s)>0$.  So
  $\lambda(q, i-1)<j$ and, by induction hypothesis,
  $time^{h_{j-1}}(q)\geq i-1$.

  Two cases now arise:
  \begin{itemize}
  \item If $time^{h_{j-1}}(p)\geq i$, then we are done.
  \item If $time^{h_{j-1}}(p)= i-1$, then $p\neq s$ (recall that
    $time^0(s)=0$, and so $time^{h_{j-1}}(s)\geq 0 \geq i$).  Then let
    $q\in N(p)$ be such that $d(q, s)<d(p, s)$.  We have $\lambda(q,
    i)<j$, and thus, by induction hypothesis, $time^{h_{j-1}} (q)\geq
    i$.  This implies that $p$ can execute the rule $R_U$ in
    $\gamma^{h_{j-1}}$, and thus will have done at last at
    $\gamma^{h_j}$.
  \end{itemize}
\end{proof}


We now focus on the part of the execution $e$ in which all nodes have a
positive time.  We denote by $e' = \rho^0\cdots\rho^f$ this part.  From now on,
 we
denote by $\rho^{r_i}$ the last configuration of the $i^{th}$ round
in $e'$, for any $i\geq 1$, and we let $\rho^{r_0} = \rho^0$.

We
do not know how these times evolve during the rounds of $e'$, 
nevertheless we know that if no
nodes have time $l+1$ in $\rho^i$ but $time^{i+1}(p)=l+1$, then
$p$ is enabled in $\origAlg$ at $\eta^l$.  We thus say that any such a node
$p$ \emph{may start
  time} $l+1$.

If  no node are enabled for $\origAlg$ in $\eta^i$, then in 
in $\eta^{i+j}$ with $ j >0$, no node are enabled in $\origAlg$.  
Therefore, if $p$ may start time  $i+1$, then either $i=0$
or there exists $q\in N[p]$ which may start time $i$.

This motivates the following definition. A \emph{starting sequence}
for $\rho^f$ is a sequence of nodes $s_1s_2\cdots s_H$ such that
each $s_i$ starts time $i$, and $s_{i-1}\in N[s_i]$ if $i>1$.  Note
that if $\rho^0$ contains no node which may start time $1$, then the
algorithm is already silent. Otherwise, $\rho^f$ must contain a
starting sequence.

\begin{lemma}\label{lem:roundAlgoPhaseLazy1}
$e'$ reaches a terminal configuration in at most $D+3T-2$ rounds
  in lazy mode.
\end{lemma}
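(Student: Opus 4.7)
The plan is to decompose the execution $e'$ into two phases: an \emph{advancement phase}, during which the maximum time reached across all nodes grows until it hits its final value $H$, and a \emph{catch-up phase}, during which the slower nodes propagate forward, reaching silence.

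I would first argue that the maximum time $H$ ever attained in $e'$ satisfies $H \leq T$, appealing to the previously established lemma that no node's time exceeds $T$ in lazy mode. Consequently, the length of any starting sequence for $\rho^f$ is at most $T$, and in particular $\rho^f$ either contains a starting sequence $s_1 s_2 \cdots s_H$ with $H \leq T$, or is already silent (in which case nothing is to prove).

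For the advancement phase, I would fix such a starting sequence and let $r(i)$ denote the first round of $e'$ in which some node starts time $i$, with the convention $r(0) := 0$. The key sub-claim is that $r(i) - r(i-1) \leq 3$ for every $1 \leq i \leq H$. This rests on the observation that once $s_{i-1}$ has started time $i-1$, its neighbor $s_i$ is at most one time unit behind (by the stabilized-unison invariant that neighboring times differ by at most $1$) and so can catch up within at most two rounds, using either the catch-up branch of $R_U$ (triggered by a faster neighbor) or a short chain along the starting sequence; once $s_i$ has reached time $i-1$ and, by the definition of a starting sequence, is enabled in $\origAlg$ at $\eta^{i-1}$, one further round suffices for $s_i$ to execute $R_U$ and start time $i$. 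Summing yields at most $3H - 2 \leq 3T - 2$ rounds for this phase, where the $-2$ accommodates the fact that the first start can occur during the initial round and no further delay is needed after the last one.

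For the catch-up phase, after the maximum time has reached $H$, no further advancement is possible, since by definition of $H$ no node is enabled in $\origAlg$ at $\eta^{t}$ for $t \geq H$. The remaining enabled nodes can only perform the catch-up branch of $R_U$. Since the stabilized-unison invariant forces neighboring times to differ by at most one and the diameter is $D$, at most $D$ further rounds are required for every node's time to reach $H$; at that point no rule is enabled and the configuration is terminal. Combining the two phases gives at most $D + 3T - 2$ rounds, as claimed.

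The main obstacle is establishing the per-step bound $r(i) - r(i-1) \leq 3$ in the advancement phase. The subtlety is that $s_i$'s neighborhood may contain nodes other than $s_{i-1}$ whose times lag strictly behind, and one must show that these lagging neighbors either catch up quickly via the catch-up branch of $R_U$ (so that $unisonMove(s_i)$ eventually holds) or are not an obstruction to firing $R_U$ at $s_i$. Threading this argument through the round structure, so that the catch-up at $s_i$ and the firing of $R_U$ together consume at most three rounds, is the technical crux and ultimately pins down the constant~$3$ appearing in the bound.
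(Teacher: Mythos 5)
There is a genuine gap, and it sits exactly where you flag ``the technical crux'': the per-increment bound $r(i)-r(i-1)\leq 3$ is not just unproven in your sketch, it cannot be established by the local argument you describe. After $s_{i-1}$ starts time $i-1$, it is true that $time(s_i)\geq i-2$, but for $s_i$ to fire $R_U$ it needs \emph{all} of its neighbors at time $\geq time(s_i)$; a lagging neighbor $q$ in turn needs all of \emph{its} neighbors caught up, and this chain of obstructions can extend up to distance $D$ from $s_i$ (e.g.\ on a path whose times form a staircase decreasing by one per hop, with the only $\origAlg$-enabled node at the top, the first new time increment is delayed by $\Theta(D)$ rounds of catch-up). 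So a single increment can cost far more than three rounds; the constant $3$ is only valid in an amortized sense. For the same reason your two-phase split does not add up even if each phase were analyzed correctly in isolation: flattening a spread-$D$ staircase by the catch-up wave takes roughly $2D$ rounds, not $D$, so ``advancement $\leq 3T-2$ then catch-up $\leq D$'' overshoots the target $D+3T-2$ unless the catch-up work is charged against earlier increments.

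This amortization is precisely what the paper's proof supplies and what your decomposition lacks. The paper does not separate phases; it fixes a starting sequence $s_1\cdots s_T$ with $s_{i-1}\in N[s_i]$, defines the potential $\lambda(p,i)=3i-2+d(p,s_i)$, and proves by a single induction on the round index $j$ that $\lambda(p,i)\leq j$ implies $time^{r_j}(p)\geq i$. The inequality $\lambda(p,i)-\lambda(q,i-1)>0$ for $q\in N[p]$ (which uses both $s_{i-1}\in N[s_i]$ and $q\in N[p]$, each contributing at most $1$ to the distance shift) is what lets one round of delay be absorbed either as one unit of catch-up distance or as one third of a time increment, interleaving the two effects you tried to treat sequentially. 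Your correct observations ($H\leq T$ via the time-bound lemma, existence of a starting sequence, and the triviality of the silent case) reproduce the paper's setup, but without a potential of this kind the claimed bound does not follow.
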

\begin{proof}
  If $\rho^f$ contains no starting sequence, then
  $\rho^f=\rho^0$, and $e'$ indeed reaches a terminal configuration
  in at most $(D+3T-2)$ rounds.
  
  Assume now that $\rho^f$ contains the starting sequence $s_1\cdots
  s_T$.  We also let $s_i = s_1$, for any $i < 1$.  For any node $p$
  and $0\leq i\leq T$, we let $\lambda(p, i)=3i-2+d(p, s_i)$.  The
  lemma is a direct consequence of the following induction.

  We now prove by induction on $0\leq j\leq 3T+D-2$ that for every $p$
  and $i$ such that $\lambda(p, i)\leq j$, we have
  $time^{r_j}(p)\geq i$.

  If $j=0$, then $i=0$ and the result is clear.

  Suppose that $j> 0$.  If no $(p, i)$ such that $\lambda(p, i)=j$
  exists, then we are done.  Otherwise, let $(p, i)$ be such a pair.
  For any $q\in N[p]$,
  $\lambda(p, i)-\lambda(q, i-1)=3+d(p, s_i)-d(q, s_{i-1})$, and thus
  $\lambda(p, i)-\lambda(q, i-1)\geq 3-|d(p, s_i)-d(p,
  s_{i-1})|-|(d(p, s_{i-1})-d(q, s_{i-1})|$.  Now
  $|d(p, s_i)-d(p, s_{i-1})|\leq 1$ because $s_{i-1}\in N[s_i]$, and
  $|(d(p, s_{i-1})-d(q, s_{i-1})|\leq 1$ because $q\in N[p]$.  We thus
  have $\lambda(q, i-1)<j$.  By induction hypothesis, for any
  $q\in N[p]$, $time^{r_{j-1}}(q)\geq i-1$.

  Three cases now arise:
  \begin{itemize}
  \item If $time^{r_{j-1}}(p)\geq i$, then we are done.
  \item If $time^{r_{j-1}}(p)=i-1$ and $p=s_i$.  Then, since $s_i$
    may start time $i$, $p$ can execute the rule $R_U$ in
    $\rho^{r_{j-1}}$, and thus will have done at last at
    $\rho^{r_j}$.

  \item If $time^{r_{j-1}}(p)=i-1$ and $p\neq s_i$.  Then, let
    $q\in N(p)$ be such that $d(q, s_i)<d(p, s_i)$.  We have
    $\lambda(q, i)<j$, and thus, by induction hypothesis,
    $time(q)\geq i$ in $\rho^{r_{j-1}}$.  This implies that $p$ can
    execute the rule $R_U$ in $\rho^{r_{j-1}}$, and thus will have
    done at last at $\rho^{r_j}$.
  \end{itemize}
  Since $\lambda(p, i)\leq 3T+D-2$, the lemma follows.
\end{proof}

By Theorems~\ref{thm:stepComplexity} and~\ref{thm:roundComplexity} and Lemmas~\ref{lem:2Drounds}-\ref{lem:roundAlgoPhaseLazy1}, follows.

\begin{theorem}\label{them:lazy_silent}
Assumes that $\origAlg$ reaches a terminal configuration in at most
$T$ rounds and requires $O(M)$ bits per node. In lazy mode,
$Trans(\origAlg)$ reaches a terminal configuration in $O(\min(n^2B,
n^3))+nT$ moves and at most $5D+3T$ rounds. Moreover,
$Trans(\origAlg)$ requires $O(M+\log(B))$ bits per node.
\end{theorem}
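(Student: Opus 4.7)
The plan is to obtain Theorem~\ref{them:lazy_silent} as a direct composition of the three results already proved: Theorems~\ref{thm:stepComplexity} and~\ref{thm:roundComplexity} for the unison phase, and Lemmas~\ref{lem:lazy_silent}, \ref{lem:2Drounds}, \ref{lem:roundAlgoPhaseLazy1} for the post-stabilization phase of the synchronizer. I would fix an arbitrary execution of $Trans(\origAlg)$ and cut it at the first clean configuration $\gamma^{first}$, which exists in a bounded number of rounds/moves because the rules governing the unison variable $p.v$ are unchanged from Section~\ref{sect:algo} (the modified Rule $R_U$ has strictly stronger guard in lazy mode than in the analysis of~Section~\ref{sect:algo}, and the move/round bounds in Section~\ref{sect:algo} were derived under $P_\text{aux}=true$, hence apply here).

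For the move count, Theorem~\ref{thm:stepComplexity} bounds the number of moves up to $\gamma^{first}$ by $O(\min(n^2B, n^3))$. From $\gamma^{first}$ onward, Lemma~\ref{lem:lazy_silent} gives at most $nT+nD$ additional moves to reach a terminal configuration. Since $nD \leq n^2 = O(\min(n^2B, n^3))$ (using $B\geq 2D+2\geq 2$), the $nD$ term is absorbed, yielding the announced $O(\min(n^2B, n^3)) + nT$ total. For the round count, Theorem~\ref{thm:roundComplexity} bounds the number of rounds up to $\gamma^{first}$ by $2D+2$. The suffix starting at $\gamma^{first}$ is then decomposed into two further sub-phases: Lemma~\ref{lem:2Drounds} guarantees that after at most $2D$ additional rounds, every node has positive time, and Lemma~\ref{lem:roundAlgoPhaseLazy1} bounds the remaining rounds to reach a terminal configuration by $D+3T-2$. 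Summing gives $(2D+2) + 2D + (D+3T-2) = 5D+3T$ rounds.

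For the memory bound, the unison variable $p.v$ ranges over $Pairs$, whose cardinality is $3B$, and therefore requires $O(\log B)$ bits. The synchronizer adds the two auxiliary variables $p.old$ and $p.curr$, each storing a state of $\origAlg$ and thus requiring $O(M)$ bits. Summing gives the claimed $O(M+\log B)$ bits per node.

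The only subtle step is the additivity of the round count across the cut at $\gamma^{first}$: since rounds are defined inductively by peeling off minimal prefixes in which every enabled node moves or is neutralized, concatenating the round decomposition before $\gamma^{first}$ with the one starting at $\gamma^{first}$ only overcounts rounds, so the additive bound above is valid. The other ingredients are routine: verifying that $nD$ is absorbed inside $O(\min(n^2B, n^3))$, and observing that Lemmas~\ref{lem:2Drounds} and~\ref{lem:roundAlgoPhaseLazy1} are stated precisely for executions starting at a clean configuration, which is the case for the suffix. No genuinely new argument is needed.
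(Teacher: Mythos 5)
Your proposal is correct and follows essentially the same route as the paper, whose proof is exactly this composition of Theorems~\ref{thm:stepComplexity} and~\ref{thm:roundComplexity} with Lemmas~\ref{lem:lazy_silent}, \ref{lem:2Drounds}, and~\ref{lem:roundAlgoPhaseLazy1} (plus the immediate memory count). Your added care about cutting at the first clean configuration, the applicability of the unison bounds under $P_\text{lazy}$, and the additivity of rounds across the cut only makes explicit what the paper leaves implicit.
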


\bibliographystyle{alphaurl}
\bibliography{biblio}
\end{document}